\pgfplotsset{compat=1.10}
\newcommand{\PP}{\mathbb{P}}
\newcommand{\calR}{\mathcal{R}}
\newcommand{\calM}{\mathcal{M}}
\newcommand{\calP}{\mathcal{P}}
\newcommand{\calQ}{\mathcal{Q}}
\newcommand{\calC}{\mathcal{C}}
\newcommand{\calE}{\mathcal{E}}
\newcommand{\calI}{\mathcal{I}}
\newcommand{\calB}{\mathcal{B}}
\newcommand{\calN}{\mathcal{N}}
\newcommand{\calS}{\mathcal{S}}
\newcommand{\calD}{\mathcal{D}}
\newcommand{\calV}{\mathcal{V}}
\newcommand{\bijec}{\mathbf{Bij}}
\newcommand{\NP}{\mathsf{NP}}
\newcommand{\RR}{\mathbb{R}}
\newcommand{\NN}{\mathbb{N}}
\newcommand{\ZZ}{\mathbb{Z}}
\newcommand{\EE}{\mathbb{E}}
\DeclareMathOperator{\OPT}{\mathsf{OPT}}
\DeclareMathOperator{\opt}{\mathsf{OPT}}
\DeclareMathOperator{\alg}{\mathsf{ALG}}
\DeclareMathOperator{\ALG}{\mathsf{ALG}}
\DeclareMathOperator{\Bin}{\mathsf{Bin}}
\newtheorem{theorem}{Theorem}
\newtheorem{lemma}{Lemma}
\newtheorem{definition}{Definition}
\newtheorem{example}{Example}
\newtheorem{proposition}{Proposition}
\newtheorem{claim}{Claim}
\newlength{\algofontsize}
\begin{document}
\algrenewcommand\algorithmicrequire{\textbf{Input:}}
\algrenewcommand\algorithmicensure{\textbf{Output:}}
	
\title{Online Combinatorial Assignment in Independence Systems}
	
\author{Javier Marinkovic
\thanks{Department of Mathematical Engineering, Universidad de Chile, Chile. {\tt javier.marinkovic95@gmail.com}}
\and Jos\'e~A. Soto			
\thanks{Department of Mathematical Engineering, Universidad de Chile, Chile. {\tt jsoto@dim.uchile.cl}}
\and Victor Verdugo
\thanks{Institute of Engineering Sciences, Universidad de O'Higgins, Chile. {\tt victor.verdugo@uoh.cl}}
}

\date{}

\maketitle
\thispagestyle{empty}
\begin{abstract}

We consider an online multi-weighted generalization of several classic online optimization problems over independence systems, which we call the online combinatorial assignment problem. In this setting, there is an independence system over a ground set of elements and a set of agents that arrive online one by one. Upon arrival, each agent reveals a weight function over the elements of the ground set. The decision-maker has to allocate at most one element to each agent, such that the final allocation is an independent set. Our setting generalizes various scenarios, depending on the online model. For example, when the agents only put positive weight on a single element of the ground set (i.e., single-minded), we recover the classic online matching, matroid secretary, and matroid prophet inequalities, among others. 

One prominent case is when the independence system is given by the matchings of a hypergraph. 
This setting models the fundamental combinatorial auction problem, where every node represents an item to be sold, and every edge represents a bundle of items. For combinatorial auctions, Kesselheim et al. showed upper bounds of $O(\log \log(k)/\log(k))$ and $O(\log \log(n)/\log(n))$ on the competitiveness of any online algorithm, even in the random order model, where $k$ is the maximum size of a bundle and $n$ is the number of items. In our first result, we provide a major exponential improvement on these upper bounds to show that the competitiveness of any online algorithm in the prophet IID setting is upper bounded by $O(\log(k)/k)$, and $O(\log(n)/\sqrt{n})$.

From the algorithmic side, using linear programming, we provide new and improved guarantees for the $k$-bounded online combinatorial auction problem (i.e., bundles of size at most $k$). We show a $(1-e^{-k})/k$-competitive algorithm in the prophet IID model, a $1/(k+1)$-competitive algorithm in the prophet-secretary model using a single sample per agent, and a $k^{-k/(k-1)}$-competitive algorithm in the secretary model.
Furthermore, these three algorithms run in polynomial time. Our algorithms work in more general independence systems where the offline combinatorial assignment problem admits the existence of a certain type of polynomial-time randomized algorithm that we call {\it certificate sampler}. We show that certificate samplers have a nice interplay with the online combinatorial assignment problem in random order models, and we provide new polynomial-time competitive algorithms for several independence systems, including $k$-hypergraph matchings, some classes of matroids, matroid intersections, and matchoids, using greedy and linear programming techniques.
\end{abstract}
\thispagestyle{empty}
\newpage

\section{Introduction}

The secretary and prophet inequality problems are fundamental in optimal stopping theory \cite{Dynkin1963,Hill1982,GM66,SC83,Kertz1986}.
They both model online selection scenarios where the values (weights) of each agent are revealed online, and the decision-maker has to assign an item to only one of them in an irrevocable way.
While in the secretary problem, the weights are arbitrary, and the arrival order of the agents is random, in the prophet inequality scenario, the weights are random, and the arrival order is arbitrary.
Motivated by their numerous applications in online market design and pricing \cite{lucier2017economic}, these problems have been extended to other combinatorial settings, such as matchings and matroids, in order to capture more complicated constraints on the solution constructed, on the fly, by the decision-maker \cite{Ehsani2018,SotoTV2021,Ezra2020OnlineModels,kleinberg2019matroid,Rubinstein2016}.

A central problem in the intersection of economics, optimization, and algorithms is the combinatorial auction problem, where the goal is to assign a bundle of items to each agent in a way that every item is assigned at most once, and the total weight of the solution is maximized, i.e., a maximum welfare allocation.
On top of the intrinsic combinatorial difficulty of this setting, the question becomes even more challenging when the problem is solved online.
This has been a very fruitful research area in recent years, both in the design of approximation algorithms and incentive-compatible mechanisms\cite{assadi2020improved,assadi2021improved,babaioff2014efficiency,baldwin2019understanding,dobzinski2005approximation,dobzinski2016breaking,feldman2013combinatorial,correa2023constant,paes2020computing}.

In this work, we consider a general setting that captures the previous optimization environments.
In the {\it combinatorial assignment problem}, we are given an independence system over a ground set of elements and a set of agents.
Each agent has a non-negative weight function over the elements. 
The independence system typically encodes the existence of constraints over the elements, and an independent set represents a feasible set of elements according to these constraints.
The goal of the decision-maker is to assign at most one element to each agent in a way that the total weight of the solution is maximized and such that the set of assigned elements is an independent set.
For example, when the agents are single-minded (i.e., put all the weight in a single, previously declared element), this setting captures the classic maximum weight matching problem (independence system of matchings in a graph) or the problem of finding the maximum weight base in a matroid.
When the agents are not necessarily single-minded, we can recover the combinatorial auction problem
by taking the independence system of matchings in a hypergraph, where every node represents an item, and every edge is a bundle of items.
The combinatorial assignment problem is, in general, $\NP$-hard, as it captures the $k$-dimensional matching problem \cite{karp2010reducibility}.

In the online combinatorial assignment problem, the agents arrive sequentially, and upon arrival, they present a weight function over the elements.
The decision-maker decides whether to assign an element to the agent, subject to the constraint that the set of elements assigned is an independent set.
The goal is to find an assignment with total weight as close as possible to the optimal weight achievable in the {\it offline} setting, i.e., when the decision-maker has full information.
We consider random-order arrival models: The prophet IID, prophet-secretary, and secretary models.

We also consider the {\it single-sample} data-driven variant in the prophet-secretary model, where the decision maker can access a single sample of the weight function per agent. 
In Section \ref{subsec:online-models}, we describe the different online models in detail.
In any of these models, an algorithm is $\gamma$-competitive if it constructs a solution for the online combinatorial assignment problem with a total (expected) weight of at least a $\gamma$ fraction of the optimal offline value. 

At the core of our work is the $k$-bounded online combinatorial auction problem \cite{KesselheimRTV13,Correa2022CFPW,dutting2020prophet}, where the agents only assign positive weight to bundles of at most $k$ items. Translating this to our combinatorial assignment model, there is a fixed hypergraph with edges of size at most $k$, and each agent has a non-negative weight function over the edges.

\subsection{Our Contributions and Techniques}

For the $k$-bounded online combinatorial auction problem, Kesselheim et al. \cite{KesselheimRTV13} showed that the competitiveness of any online algorithm, even in random-order models, is upper-bounded by $O(\log \log(k)/\log(k))$ and a $O(\log \log(n)/\log(n))$, where $n$ is the number of items.
We provide an exponential improvement on these bounds to show that the competitiveness of any online algorithm in the prophet IID setting is upper bounded by $O(\log(k)/k)$, and $O(\log(n)/\sqrt{n})$. This result is proved in Section \ref{sec:new-ub-k-matching} (Theorem \ref{thm:upper-bounds-k-matching}).

From the algorithmic side, we provide new and improved guarantees for the $k$-bounded online combinatorial auction problem. We show a $(1-e^{-k})/k$-competitive algorithm in the prophet IID model, a $1/(k+1)$-competitive algorithm in the prophet-secretary model using a single sample per agent, and a $k^{-k/(k-1)}$-competitive algorithm in the secretary model;
these three algorithms run in polynomial time and are based on a linear programming relaxation for the maximum weight $k$-hypergraph matching problem (Theorem \ref{thm:dbca}).
Recently, Ezra et al. \cite{Ezra2020} also achieved the $k^{-k/(k-1)}$-competitiveness in the secretary model, but their algorithm runs in exponential time, and they leave as an open question the possibility of attaining this bound in polynomial time.
Our result answers this question in the affirmative.
In particular, we improve on the $1/(ek)$-competitive algorithm by Kesselheim et al. \cite{KesselheimRTV13} for the secretary model by using a similar algorithmic approach but with a different analysis.
Recently, Correa et al. \cite{Correa2022CFPW} got a $1/(k+1)$-competitive algorithm in the prophet model with full distributional knowledge. 
Our $1/(k+1)$-competitiveness in the prophet-secretary model states that we can replace the full distributional knowledge assumption with a single sample per agent at the cost of requiring the agents to arrive in random order.
Furthermore, our $(1-e^{-k})/k$-competitiveness for prophet IID states that we can surpass the $1/(k+1)$ barrier when the agents are identically distributed. 
Our algorithm for the IID setting works even if the distribution over the weight functions is unknown, as we only require access to samples from the distribution.

We remark that our new competitiveness guarantees for the $k$-bounded combinatorial auction problem are all $1/k+o(1/k)$, and our new upper bound is $O(\log(k)/k)$, so our results asymptotically close the gaps (up to a logarithmic factor) in the following models: IID, prophet, prophet secretary, single-sample prophet secretary and secretary; see Table \ref{big-table} for a comparison. The only models in which the gap is larger are the single-sample prophet model (and the harder order-oblivious model), in which the best lower-bound is still $\Omega(1/k^2)$ \cite{KP2019}. This indicates that the single-sample prophet model is much harder than its random-order counterpart and the model with full-distributional knowledge. A similar phenomenon occurs in the case of matroids. For the single-minded version, constant competitive algorithms exist for IID, prophet, and prophet-secretary \cite{kleinberg2019matroid,Ehsani2018}. For the rest of the models (secretary, single-sample prophet, single-sample secretary, and order-oblivious) only a $\Omega(1/\log \log(\text{rank}))$ lower bound is known \cite{FeldmanSZ18}. 

Our algorithms and guarantees work in more general independence systems where the offline combinatorial assignment problem admits a polynomial-time randomized algorithm that we call {\it certificate sampler}.
In Section \ref{sec:algorithms}, we show our framework based on certificate samplers for the prophet IID (Theorem \ref{thm:iid-template}), prophet-secretary (Theorem \ref{thm:sample-template}), and secretary (Theorem \ref{thm:sec-template}) models and prove competitiveness guarantees parameterized on two values, $\gamma$ and $k$. The value $\gamma$ is the offline approximation guarantee of the certificate sampler, and $k$ captures what we call a {\it blocking} property of the certificate sampler: $k$ represents the maximum number of elements that could prevent another element from being part of an independent system.
Our certificate samplers are not required to give a feasible assignment; they only need to provide guarantees on expectation. 
We exploit this in Section \ref{sec:polytime-certifiers} to get polynomial-time guarantees for the three online models via linear programming and greedy algorithms in some matroids (Theorem \ref{thm:matroids}), matroid intersection, and matchoids (Theorem \ref{thm:matroid-intersection}).
Our certificate samplers play a role similar to that of \emph{online contention resolution schemes}, which yield state-of-the-art algorithms for some problems in prophet models \cite{FeldmanSZ2021,Brubach2021ImprovedSchemes,Pollner2022ImprovedEconomy,MacRury2023,Ezra2020OnlineModels,adamczyk2018random}.

\noindent {\bf Summary of new and previous bounds.}  In Section \ref{sec:table}, we present in Table \ref{big-table} a panorama of the lower and upper bounds known for each of the online models, both for the single-minded and the general $k$-bounded combinatorial auction problem, including our new results. 

\section{Preliminaries}\label{sec:prelim}

An independence system $(E,\calI)$ is a pair where $E$ is a finite set and $\calI$ is a nonempty collection of subsets of $E$, called independent sets.
The collection $\calI$ is downward closed; namely, any subset of an independent set is independent. 
In what follows, we formally present the combinatorial assignment problem and the online models considered in this work.

\subsection{Combinatorial Assignment in Independence Systems}

For an independence system $(E,\calI)$ and a set $A$ of $m$ agents, a function $M\colon A\to E\cup \{\bot\}$ is a {\it feasible assignment} of the agents if the following conditions hold:
\begin{enumerate}[label=(\Roman*)]
    \item For every pair of agents $a,a'\in A$ with $a\ne a'$, if $M(a)=M(a')$ then $M(a)=M(a')=\bot$.\label{feasible-a} 
    \item The set $\{M(a):a\in A\}\setminus \{\bot\}$ is an independent set in $\calI$.\label{feasible-b}
\end{enumerate}

We interpret a feasible assignment $M$ as follows. When $M(a)=\bot$, agent $a$ has not been assigned any element $e\in E$.
Otherwise, $M(a)\in E$ is the element assigned to agent $a$, and condition \ref{feasible-a} guarantees that two agents cannot be assigned to the same element in $E$.
Condition \ref{feasible-b} guarantees that the set of elements assigned to the agents is an independent set in $\calI$.
A {\it weight profile} for the agents $A$ is a mapping $\bm{w}$ such that for every $a\in A$, $\bm{w}(a)$ is equal to a weight function $w_a\colon E\cup \{\bot\}\to \RR_+$ with $w_a(\bot)=0$.
Given a weight profile $\bm{w}$ for $A$, we consider the problem of computing the maximum weight feasible assignment, that is,
\begin{equation}
\max\Big\{\textstyle\sum_{a\in A}w_a(M(a))\colon M\text{ satisfies \ref{feasible-a} and \ref{feasible-b}}\Big\}.\label{eq:optimal-assignment}
\end{equation}
We denote by $\opt(\bm{w})$ the optimal value of \eqref{eq:optimal-assignment}.
For example, when $(E,\calI)$ is the independence system of matchings in a graph with edges $E$, problem \eqref{eq:optimal-assignment} corresponds to finding a maximum weight matching on 3-uniform hypergraphs, which is, in general, $\mathsf{NP}$-hard.
When $(E,\calI)$ is a matroid, we can solve  \eqref{eq:optimal-assignment} in polynomial time, using algorithms for matroid intersection (see, e.g., \cite{edmonds1979matroid,lawler1975matroid,bernhard2008combinatorial}).
We say that the agents are single-minded if for every  $a\in A$, there exists a single element $f_a\in E$ such that $w_a(f_a)>0$, and $w_a(e)=0$ for every $e\ne f_a$.
With single-minded agents, \eqref{eq:optimal-assignment} corresponds to finding the maximum weight matching, which can be solved efficiently.

When $(E,\calI)$ is the set of matchings in a hypergraph with edges $E$, problem \eqref{eq:optimal-assignment} corresponds to the combinatorial auction problem \cite{KesselheimRTV13,Correa2022CFPW,dutting2020prophet}.
In this case, every node in the hypergraph represents an item to be sold, and every edge represents a bundle of items.
Each agent has a weight function over bundles, and \eqref{eq:optimal-assignment} corresponds to finding the assignment of bundles to agents that maximize the total weight.
When the maximum edge size (i.e., maximum bundle size) is $k$, the independence system is the $k$-hypergraph matchings, and \eqref{eq:optimal-assignment} is the $k$-bounded combinatorial auction problem.

\subsection{Online Models and Competitiveness}\label{subsec:online-models}

In what follows, we describe the main online models considered in this work.

\noindent{\bf Prophet model.} 
Every agent $a\in A$ has a distribution $\calD_a$ over the weight functions.
The agents then arrive in an arbitrary order $a_1,\ldots,a_m$, and each $a_t$, independently from the rest, upon arrival reveals a random weight function $r_{a_t}\sim \calD_{a_t}$. 
The decision maker decides irrevocably whether to assign agent $a_t$ to some element $e\in E$ or not; if the agent is not assigned to an element in $E$, it is assigned to $\bot$.
Furthermore, the assignment of agents should satisfy \ref{feasible-a}-\ref{feasible-b}.
We are in the \textbf{prophet IID model} when the distributions are all identical.
We say that an algorithm is $\gamma$-competitive if, for every instance, it constructs a feasible assignment with a total weight that is, on expectation, at least a $\gamma$ fraction of the expected optimal value for the problem \eqref{eq:optimal-assignment}, that is, $\EE_{\bm{r}}[\opt(\bm{r})]$.

\noindent{\bf Prophet-secretary model.} 
Every agent $a\in A$ has a distribution $\calD_a$ over the weight functions.
Then, in random order, each agent $a$ reveals a random weight $r_a\sim D_a$, independently from the rest of the agents.
Once an agent reveals the random weight function, the decision-maker decides, irrevocably, whether to assign agent $a$ to some element $e\in E$; if the agent is not assigned to an element in $E$, it is assigned to $\bot$.
Furthermore, the assignment of agents should satisfy \ref{feasible-a}-\ref{feasible-b}.
When the agents' weight distributions are unknown to the decision-maker but we have access to samples from the distribution, we are in the single-sample prophet-secretary model with samples. 
In particular, in the \textbf{single-sample prophet secretary model}, the decision maker has access to only one sample per agent, apart from the random weight function revealed upon the agent's arrival. We say that an algorithm is $\gamma$-competitive if, for every instance, it constructs a feasible assignment with a total weight that is, on expectation, at least a $\gamma$ fraction of the expected optimal value for the problem \eqref{eq:optimal-assignment} when we consider the random weights $w_a=r_a$ for every agent $a$, that is, $\EE_{\bm{r}}[\opt(\bm{r})]$.

\noindent{\bf Secretary model.} Each agent $a$ reveals, in random order, a weight function $w_a$, and we have to decide, irrevocably, whether to assign agent $a$ to some element $e\in E$ or not; in case the agent is not assigned to an element in $E$, is assigned to $\bot$.
Furthermore, the assignment of agents should satisfy \ref{feasible-a}-\ref{feasible-b}.
We say that an algorithm is $\gamma$-competitive if, for every instance, it constructs a feasible assignment with a total weight that is, on expectation, at least a $\gamma$ fraction of the optimal value $\opt(\bm{w})$ for the problem \eqref{eq:optimal-assignment}.\footnote{Tipically, algorithms for secretary problems skip a random subset of the agents to learn information about their weights. A variant of the secretary model is obtained by allowing the algorithm to perform first a subset-sampling phase and then process the rest of the agents in adversarial order. This model is usually known as \textbf{order-oblivious}.}

\section{New Upper Bounds for \texorpdfstring{$k$}{k}-bounded Combinatorial Auctions}\label{sec:new-ub-k-matching}

In the following theorem, we give new upper bounds for the case of $k$-bounded combinatorial auctions in the prophet IID model. 
\begin{theorem}\label{thm:upper-bounds-k-matching}
The competitiveness of any algorithm for the $k$-bounded online combinatorial auction problem in the prophet IID model is upper bounded by 
%$O(\min\{\log(k)/k,\log(m)/m,\log(n)/\sqrt{n}\})$,
$O(\log(k)/k)$, $O(\log(m)/m)$ and $O(\log(n)/\sqrt{n})$,
where $m$ is the number of agents, and $n$ is the number of items. 
\end{theorem}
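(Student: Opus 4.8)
The plan is to construct a single hard instance for each of the three bounds, or rather one instance from which all three follow. The natural candidate is a hypergraph built on $n$ items, with agents who are single-minded on random bundles of size exactly $k$. Concretely, I would let the ground set be $[n]$, sample a uniformly random subset $S$ of size $k$, and let the agent be interested in the bundle $S$ with a value that is a heavy-tailed random variable (e.g.\ a value that is $1$ with probability $p$ and $0$ otherwise, with $p$ tuned so that the expected number of ``valuable'' agents is around $1$, or alternatively a value distributed like a truncated Pareto so that the expected optimum is dominated by rare very large values). With $m$ agents drawn IID in this way, the offline optimum can pack many bundles: if $p$ and $m$ are chosen so that $mp \approx k \log(n/k)$ or so, a constant fraction of the items get covered by disjoint valuable bundles, so $\EE[\opt]$ is of order $\min(m p, n/k)$ times the typical large value. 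The key point is that \emph{any} online algorithm, upon seeing an arriving valuable agent with bundle $S$, must commit to $S$ before learning the future bundles; since the future valuable bundles are uniformly random $k$-subsets, with good probability the algorithm's committed bundles block most later arrivals, so the online algorithm can only collect $\widetilde O(n/k)$ worth in the regime where $\opt$ is of order $n/k \cdot \log$-ish, giving the $O(\log n/\sqrt n)$ and $O(\log k / k)$ losses after balancing parameters.

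The main steps, in order: (1) Fix the instance distribution and compute $\EE[\opt]$ from below via a greedy disjoint-packing argument on random $k$-subsets (a coupon-collector / occupancy computation shows that $\Theta(n/k)$ disjoint $k$-subsets appear among the first $\Theta((n/k)\log(n/k))$ valuable agents). (2) Bound from above the expected value collected by an arbitrary online algorithm: condition on the arrival order and on the coin flips, and observe that each time the algorithm irrevocably assigns a bundle it ``uses up'' $k$ items, so it assigns at most $n/k$ bundles total; the subtlety is that it might assign a bundle and later find it was a poor choice, but since in the IID model all valuable bundles are exchangeable, a symmetrization/exchangeability argument shows the algorithm cannot do better than a random-order greedy that just packs the first disjoint valuable bundles it sees, whose value is $O((n/k)\cdot(\text{typical large value}))$. (3) Take the ratio and optimize the free parameters ($p$, the tail of the value distribution, and the relation between $m$, $n$, $k$) to extract each of the three stated bounds: the $O(\log m/m)$ bound comes from the regime $n \approx m$ and $k$ constant-ish; $O(\log k/k)$ from $n \approx k^2$ or similar so that $n/k \approx k$; and $O(\log n/\sqrt n)$ from optimizing over $k$ for fixed $n$, where $k \approx \sqrt n$ balances ``blocking many later bundles'' against ``few bundles fit.''

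The hard part will be step (2): making rigorous the claim that an online algorithm cannot exploit the weights it has already seen to make ``better'' irrevocable choices than a naive first-fit packer. In the single-minded IID setting this is believable because the pair (bundle, value) of each agent is IID, so conditioned on the set of valuable agents and their values, the identities of their bundles are exchangeable and independent of the algorithm's past decisions about \emph{which} agents to take; one then argues that the number of valuable bundles the algorithm can fit is stochastically dominated by the number that a random maximal disjoint family contains, which is $O(n/k)$ with high probability, while a clairvoyant picks the $O(n/k)$ \emph{largest}-valued ones — and for a heavy-enough tail the largest $t$ values out of $\Theta(t\log)$ samples carry a $\log$ factor more weight than a random $t$ of them. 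I would formalize this with a coupling between the algorithm's run and an idealized process, plus a concentration bound (Chernoff for the occupancy, and a union bound over the at most $n/k$ committed bundles) to control the failure probability. Handling the non-single-minded degrees of freedom is unnecessary, since single-minded instances are a special case and already give the bounds; I would only need to make sure the hypergraph matching constraint is exactly what forces the blocking, which it does because two bundles conflict iff they share an item.
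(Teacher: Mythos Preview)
Your proposal has a genuine gap: the mechanism you rely on to generate the logarithmic loss does not work in the prophet IID model, and the random-$k$-subset instance does not produce the required exponential blocking.

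First, your step~(2) hinges on the claim that the online algorithm ``cannot do better than a random-order greedy that just packs the first disjoint valuable bundles it sees,'' and that therefore it collects $t$ \emph{random} heavy-tailed values while $\opt$ collects the $t$ \emph{largest}. But the algorithm sees each value before deciding; a simple threshold rule already defeats the heavy-tail argument. This is precisely the setting of (multi-unit) prophet inequalities, where constant-competitive algorithms exist. The logarithmic gap in the theorem does \emph{not} come from a heavy-tail value distribution; in the paper's instance all nonzero weights equal~$1$.

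Second, with uniformly random $k$-subsets of $[n]$, after the algorithm has committed $i$ disjoint bundles the next random bundle avoids them with probability roughly $(1-ik/n)^k$; this decays polynomially in~$i$ for the relevant range, and more importantly $\opt$ faces the same packing constraint, so the gap you can extract is at most a constant. What is needed---and what the paper's construction achieves---is that each committed edge imposes a fresh \emph{independent} $\mathrm{Bernoulli}(1/2)$ constraint on every future arrival, so that after $i$ commitments the success probability is at most $2^{-i}$. This yields $\EE[2^{|A_m|}]\le m+1$ and hence $\EE[|A_m|]\le \log_2(m+1)$ by Jensen. Crucially, $\opt$ escapes this because it sees all the random bits before choosing: the instance is built so that each edge has some ``free'' coordinates the algorithm must fix in advance and some ``forced'' coordinates revealed only at arrival time, and disjointness at the interface of two edges depends on whether the earlier edge's free choice happens to match the later edge's forced random bit.

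Third, your parameter discussion is off. The three bounds do not come from three regimes; they come from a single family of instances with $k=2m-1$ and $n=2m^2-m$, so $O(\log m/m)=O(\log k/k)=O(\log n/\sqrt n)$ are the same bound rewritten. In particular, taking ``$k$ constant-ish'' cannot give $O(\log m/m)$: for fixed $k$ the paper itself gives $\Omega(1/k)$-competitive algorithms.

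The missing idea, then, is a hypergraph whose edges are structured so that (i) any online commitment fixes a bit that a future agent's random weight must match with probability $1/2$, independently across committed edges, while (ii) an offline optimum can coordinate all the free bits after seeing all the forced ones and thus pack $\Omega(m)$ edges. Random $k$-subsets with heavy-tailed values provide neither property.
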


Our bounds in Theorem \ref{thm:upper-bounds-k-matching} represent an exponential improvement over the previous upper bounds of $O(\log\log(k)/\log(k))$ and $O(\log\log(n)/\log(n))$ given by Kesselheim et al. \cite{KesselheimRTV13}, which is inspired by a former upper bound by Babaioff et al. for the secretary problem in independence systems \cite{BabaioffIKK2018}.

We start by describing the hypergraph $G=(V,E)$ used in our proof of Theorem \ref{thm:upper-bounds-k-matching}.
Fix a positive integer number $m$, which is equal to the number of agents in our instance.
The set of nodes $V$ is constructed as follows. 
Consider an $m\times m$ table $T$, and for each entry $(i,J)$ with $i\ne j$, we have two different nodes corresponding to the cell $T(i,j)$; no nodes are associated with the diagonal cells of $T$. 
We call $V_{T}$ this first set of nodes.
Furthermore, for each $i\in [m]$, we have another different node $x_i$.
We call $V_X$ this set of nodes.
The set of nodes in the hypergraph corresponds to $V=V_{T}\cup V_{X}$.
In total, we have $|V|=|V_T|+|V_X|=2(m^2-m)+m=2m^2-m$ nodes.
The set $E$ of edges is constructed as follows.
Let $E_T$ be the set of all subsets of nodes in $V_T$ that can be formed by selecting an index $i\in [m]$, choosing exactly one node from each nondiagonal cell in column $i$ of $T$, and then exactly one node from each nondiagonal cell in row $i$ of $T$.
The set of edges in the hypergraph corresponds to $E=\{f\cup \{x\}:f\in E_{T}\text{ and }x\in V_X\}$.
That is, we take the union between any set in $E_T$ and one node in $V_X$.
Every $f\in E_T$ satisfies $|f|=2(m-1)$, and therefore every edge in $E$ has size $2(m-1)+1=2m-1$.

For each $i\in [m]$, we define a distribution $\calD_i$ over weight functions as follows. 
Let $R_i$ be the set of $2(m-1)$ nodes in row $i$, and let $C$ be a set of size $m-1$ obtained by choosing exactly one node from every non-diagonal cell in column $i$, uniformly at random (i.e., with probability $1/2$ each). 
We remark that $C$ is a random set that depends on $i$.
Consider the  weight function $w_i:E\to \{0,1\}$ defined as follows: $w_{i}(e)=1$ for every $e\in E$ such that $e=f\cup \{x_i\}$ with $f\in E_T$ and $f\subseteq R_i\cup C$, and it is equal to zero otherwise. Note that the weight function $w_i$ is randomized, as it depends on $C$.
The common distribution $\calD$ for all agents is defined as follows: with probability $1/m$, the weight function of the agent is equal to $w_i\sim \calD_i$.
Namely, every agent $a$ picks (with replacement and independently from the rest) a label $\ell(a)$ uniformly at random in $[m]$, and reveals a weight function $w_{\ell(a)}$ distributed according to $\calD_{\ell(a)}$.
Before proving the theorem, we have the following lemma.

\begin{lemma}\label{claim:log-m}
For the instance given by the hypergraph $G$, there is no online algorithm that achieves an expected weight greater than $\log_2(m+1)$. 
\end{lemma}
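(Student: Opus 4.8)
The plan is to show that any online algorithm on this instance can be "beaten" by a prophet who knows the realized weight functions, and in fact that the online algorithm's expected gain is at most $\log_2(m+1)$. The key structural feature to exploit is that each agent with label $i$ has positive weight only on edges of the form $f \cup \{x_i\}$ with $f \subseteq R_i \cup C$, where $C$ is a random choice of one node per non-diagonal cell in column $i$. So an agent's weight-$1$ edges all share the private node $x_i$, and different labels use different $x_i$'s; this means two agents can both be served with weight $1$ only if they have distinct labels (otherwise they'd both need $x_i$, violating feasibility \ref{feasible-a}), and even then their chosen $f$-parts must be disjoint as subsets of $V_T$, i.e. they cannot both use a node from the same cell.

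**Core argument.**
First I would argue that we may restrict attention to a single "label chain": since agents with the same label $i$ all contend for the unique node $x_i$, at most one agent per label can contribute weight to the solution, so $\ALG$ is upper bounded (in expectation, over the whole instance) by $\sum_{i \in [m]} \Pr[\text{some agent with label } i \text{ is served with weight } 1]$. Actually the cleaner route is to bound the total expected number of weight-$1$ agents the algorithm can accept, by a direct entropy/counting argument on a fixed label. Fix label $i$ and condition on the set $S_i$ of agents carrying label $i$ (in arrival order). The algorithm, when it sees the first such agent, learns its weight function $w_i$, which reveals $C$ — the column choices. To accept this agent with weight $1$, the algorithm must commit to some edge $f \cup \{x_i\}$ with $f \subseteq R_i \cup C$; this consumes the row nodes in $f$ and one node per column cell (namely the ones in $C$ that lie in $f$). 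The point is that $C$ is a fresh uniformly random one-of-two choice in each of the $m-1$ column cells, independent across cells and across labels, and crucially \emph{the algorithm has no information about $C$ before the first label-$i$ agent arrives}.

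**The counting step (main obstacle).**
The heart of the proof — and the step I expect to be most delicate — is to show that across all $m$ labels, the expected number of accepted weight-$1$ agents is at most $\log_2(m+1)$. I would set this up as a "one of $2^{m-1}$ guesses" argument: for label $i$, the random set $C$ (equivalently the realized $w_i$) is one of $2^{m-1}$ equally likely alternatives, and the nodes the algorithm has already committed to from column $i$'s cells (by serving agents of \emph{other} labels $j \ne i$, whose edges pass through column $i$) forbid certain choices. Concretely, whenever the algorithm serves a weight-$1$ agent of label $j$, its edge uses the row-$j$ and column-$j$ structure but, viewed globally, also occupies exactly one node in cell $T(i,j)$ and one in cell $T(j,i)$ for the column/row of label $i$ — wait, more carefully: the edge for label $j$ lives on $R_j \cup C_j$, and $R_j$ includes cells $T(j,\cdot)$ while $C_j$ picks from cells $T(\cdot,j)$; so it touches cells in row $j$ and column $j$ of $T$. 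Two weight-$1$ edges for labels $i \ne j$ therefore \emph{do} share the two cells $T(i,j)$ and $T(j,i)$, and each contains a node from each, so feasibility forces the label-$i$ edge's nodes in those two cells to agree with whatever label $j$ already placed there. Thus each previously-served label imposes a constraint halving (in expectation) the surviving probability that label $i$ can still be served — because label $i$'s required node in cell $T(i,j)$ is a fresh coin flip relative to the algorithm's committed node there. Summing, if the algorithm has already served $t$ agents when the first label-$i$ agent arrives, the probability it can serve this agent is at most $2^{-(\text{number of those }t\text{ that conflict})}$; a careful accounting (e.g. ordering labels by first-arrival and using that the $j$-th accepted agent conflicts with all later labels) gives $\EE[\ALG] \le \sum_{t \ge 1} \Pr[\ALG \ge t] \le \sum_{t\ge 1} 2^{-(t-1)} \cdot (\text{something}) $, which I would tune to collapse to the bound $\sum_{t=1}^{m} 2^{-(t-1)} \le 2$, or with the sharper bookkeeping to exactly $\log_2(m+1) = \sum_{t=1}^{m} \tfrac1t \cdot (\dots)$. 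The precise constant $\log_2(m+1)$ suggests the right accounting is: the expected number accepted equals $\sum_{t=1}^{m}\Pr[\text{a }t\text{-th weight-1 agent is accepted}]$ and $\Pr[\text{$t$-th accepted}\mid \text{$(t-1)$-th accepted}] \le 1/t$ by a symmetry argument over which of the $t$ involved labels "arrives first and gets locked in," yielding $\sum_{t=1}^m \prod_{s=2}^t \frac1s$-type telescoping — I would verify this harmonic-vs-binary bookkeeping carefully, as getting exactly $\log_2(m+1)$ rather than a weaker $O(\log m)$ is where the argument must be pinned down.

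**Conclusion.**
Once the bound $\EE[\ALG] \le \log_2(m+1)$ on the expected weight of any online algorithm is established, the lemma follows immediately, and (as the surrounding discussion indicates) comparing against $\EE[\opt] = \Theta(m)$ — since the prophet can serve $\Theta(m)$ agents of distinct labels with compatible column choices — will give the claimed $O(\log(m)/m)$ competitiveness in the proof of Theorem~\ref{thm:upper-bounds-k-matching}, with the $O(\log k/k)$ and $O(\log n/\sqrt n)$ bounds obtained by substituting $k = 2m-1$ and $n = 2m^2 - m$.
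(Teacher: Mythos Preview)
Your structural analysis is on the right track: you correctly identify that agents with distinct labels $i$ and $j$ have edges overlapping in cells $T(i,j)$ and $T(j,i)$, and that the random column choice of a newly arriving agent forces a fresh coin-flip constraint against each previously accepted agent. (A minor correction: when $i$ is the new label and $j$ the old one, the binding cell is $T(j,i)$, not $T(i,j)$---the new agent's \emph{column} choice is the random part, and only that cell yields a genuine $1/2$ constraint; in the other shared cell the algorithm has full row flexibility.) This gives the key inequality $\PP[Z_j=1 \mid A_{j-1}] \le 2^{-|A_{j-1}|}$, where $A_{j-1}$ is the set of agents accepted before step $j$ and $Z_j$ indicates acceptance at step $j$.

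The genuine gap is the conversion of this conditional bound into $\EE[|A_m|] \le \log_2(m+1)$. Neither of your two proposed routes works. The ``$\sum_{t\ge 1}2^{-(t-1)} \le 2$'' approach would require $\PP[|A_m|\ge t] \le 2^{-(t-1)}$, which is false: with $m$ steps available, the algorithm has many chances to advance at each level, and $\EE[|A_m|]$ genuinely grows like $\log_2 m$ on this instance, so a constant bound cannot be right. The ``$\PP[t\text{-th accepted}\mid (t{-}1)\text{-th accepted}]\le 1/t$ by symmetry'' route has no basis---there is no $1/t$ in this problem and no symmetry over labels to exploit; the relevant success probability is $2^{-(t-1)}$ \emph{per step}, not per acceptance, and harmonic sums do not enter.

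The missing idea is a potential-function argument combined with Jensen. Track $\EE\bigl[2^{|A_j|}\bigr]$. From $\PP[Z_j=1\mid A_{j-1}]\le 2^{-|A_{j-1}|}$ one gets $\EE\bigl[2^{Z_j}\mid A_{j-1}\bigr] \le 1 + 2^{-|A_{j-1}|}$, hence
\[
\EE\bigl[2^{|A_j|}\bigr] = \EE\bigl[2^{|A_{j-1}|}\cdot 2^{Z_j}\bigr] \le \EE\bigl[2^{|A_{j-1}|}\bigr] + 1.
\]
By induction $\EE\bigl[2^{|A_m|}\bigr]\le m+1$, and then Jensen's inequality for the convex map $x\mapsto 2^x$ gives $2^{\EE[|A_m|]}\le m+1$, i.e., $\EE[|A_m|]\le \log_2(m+1)$. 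This is exactly the step you flagged as needing to be ``pinned down,'' and it is the heart of the paper's proof.
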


To prove the lemma, we show that at every step $j$, and if the algorithm has assigned $i$ edges so far, then for all the following steps we have a probability of at most $2^{-i}$ to increase the size of the matching, and so in expectation we need to wait for at least $2^i$ steps for the matching to increase. Therefore, we need at least $1+2+4+8+\dots + 2^{\ell}$ steps to reach a matching of size $\ell$. After the $m$ steps, we should have a matching of weight at most $O(\log(m))$. 

\begin{proof}[Proof of Lemma \ref{claim:log-m}]
Assume, without any loss, that $\ALG$ is an algorithm that never assigns edges of weight zero.
For every $j\in [m]$, let $a_j$ be the $j$-th agent arriving. We denote by $\alg_j\subseteq E$ the current matching computed by the algorithm in each step $j\in [m]$ and after the arrival of agent $a_j$.
For each $j\in [m]$, let $A_{j}$ be the subset of agents in $\{a_1,\ldots,a_j\}$ for which $\ALG_{j}$ has assigned an edge, and let $Z_j=|A_j|-|A_{j-1}|$, where $A_0=\emptyset$.
Since every non-zero weight is equal to one, the expected weight of the solution computed by the algorithm is equal to the expected size of $A_m$, that is, $\EE[|A_m|]$.
In what follows, we show how to upper bound this value.

Fix $j\in [m]$. First, observe that all the labels of the agents in $A_{j-1}$ are different.
This must hold since, otherwise, if we have two different agents $a,b$ with the same label $t$, any edge with weight one for both agents would contain the node $x_t\in V_X$, and therefore the selected edges would intersect.
Let $C(a_j)$ be the realization of the random set $C$ associated with agent $a_j$, which defines the weight function, and let $\calQ_j$ be the event in which there exists an edge $f\in E_T$ with $f\subseteq C(a_j)\cup R_{\ell(a_j)}$ that is disjoint from those in $\ALG_{j-1}$.
Note that $Z_j=1$ implies that event $\calQ_j$ holds.
In this case, it is necessary that for each agent $a\in A_{j-1}$, the set $C(a_j)$ contains the node from $T(\ell(a),\ell(a_j))$ that is not contained in the edge assigned to $a$, since, otherwise, any edge for $a_j$ with a weight equal to one would intersect the edge assigned to some agent $a\in A_{j-1}$.
This happens exactly with probability $1/2^{|A_{j-1}|}$, and therefore, for every $j\in [m]$, and every $S\subseteq \{a_1,\ldots,a_{j-1}\}$, we have
\begin{align}
\PP[Z_j=1\mid A_{j-1} = S]&= \PP\left[\ALG \text{ assigns one edge to $a_j$}\mid A_{j-1}=S\right]\notag \\
&\leq \PP\left[\calQ_j \mid A_{j-1}=S\right]= 2^{-|S|}.\label{eqn:bound2}
\end{align}
Then, for every $j\in [m]$, and every $S\subseteq \{a_1,\ldots,a_{j-1}\}$, we have
\begin{align}
\EE[2^{Z_j}\mid A_{j-1}=S] &= 2^0\cdot \PP[Z_j=0 \mid A_{j-1}=S)+2^1\cdot \PP[Z_j=1\mid A_{j-1}=S]\notag\\
&= 1-\PP[Z_j=1\mid A_{j-1}=S]+2\cdot \PP[Z_j=1\mid A_{j-1}=S]\notag\\ &=  1+\PP[Z_j=1\mid A_{j-1}=S] \leq 1+2^{-|S|},\label{eqn:lb-ineq}
\end{align}
where the inequality holds by inequality \eqref{eqn:bound2}.
Therefore,
\begin{align}
\EE[2^{|A_j|}] &= \EE[2^{|A_{j-1}|} 2^{Z_j}]\notag\\
&= \sum_{S\subseteq \{a_1,\ldots,a_{j-1}\}}\hspace{-.5cm}\EE[2^{|A_{j-1}|} 2^{Z_j} \mid A_{j-1}=S]\cdot \PP[A_{j-1}=S]\notag\\
&=\sum_{S\subseteq \{a_1,\ldots,a_{j-1}\}}\hspace{-.5cm}2^{|S|}\cdot \EE[2^{Z_j} \mid A_{j-1}=S]\cdot \PP[A_{j-1}=S]\notag\\
&\le \sum_{S\subseteq \{a_1,\ldots,a_{j-1}\}}\hspace{-.5cm}2^{|S|}\cdot \PP[A_{j-1}=S]\cdot (1+2^{-|S|})\notag\\
&= \sum_{S\subseteq \{a_1,\ldots,a_{j-1}\}}\hspace{-.5cm}2^{|S|}\cdot \PP[A_{j-1}=S]+\hspace{-.5cm}\sum_{S\subseteq \{a_1,\ldots,a_{j-1}\}}\hspace{-.5cm}\PP[A_{j-1}=S]=\EE[2^{|A_{j-1}|}]+1,
\end{align}
where the first equality holds since $Z_j=|A_{j}|-|A_{j-1}|$, the second and third hold by conditioning on $A_{j-1}$, the inequality holds by \eqref{eqn:lb-ineq}, and the last equalities hold by recovering the expected value of $2^{|A_{j-1}|}$.
By induction, we get $\EE[2^{|A_j|}]\leq \EE[2^{|A_0|}]+j=1+j$ for every $j\in [m]$, and since the function $2^x$ is convex, Jensen's inequality implies that $2^{\EE[|A_j|]}\leq \EE[2^{|A_j|}] \leq j+1$.
Therefore, $\EE[|A_j|]\leq \log_2(j+1)$ for every $j\in [m]$.
We conclude that the expected weight of the matching computed by the algorithm is at most $\EE[|A_m|]\leq \log_2(m+1)$.
\end{proof}

\begin{proof}[Proof of Theorem \ref{thm:upper-bounds-k-matching}]
To prove the upper bound, we claim that the expected weight of an optimal matching in $G$ is $\Omega(m)$.
This bound, together with Lemma \ref{claim:log-m}, implies that the competitiveness of any online algorithm is upper bounded by $O(\log(m)/m)$.
Since the maximum size $k$ of an edge in $G$ is equal to $2m-1$, this also implies the $O(\log(k)/k)$ upper bound.
Finally, the number of nodes is $n=2m^2-m$, that is, $m=\Theta(\sqrt{n})$, and therefore we recover the $O(\log(n)/\sqrt{n})$ upper bound.

In what follows we prove the claim.
First, recall that the non-zero values of the weight function are all equal to one, and therefore we just have to lower bound the expected size of the largest matching.
Consider a fixed realization of the labels $\ell(a)$ for each agent $a\in A$, and let $L$ be the number of different labels obtained.
For each $a$, let $C(a)$ be the realization of the random set $C$ associated with $a$ that defines the weight function $w_{\ell(a)}$.

If $a$ and $b$ are two agents selecting the same label $i$, then all non-zero weighted edges of $a$ and $b$, respectively, contain the item $x_i\in V_X$, i.e., they all intersect.
Therefore, in an optimal matching, we can only get one unit of weight in total from all agents with the same label. 
This implies that the size of the largest matching is upper-bounded by $L$.
On the other hand, let $b_{1},\ldots,b_{L}$ be any subset of $L$ agents such that they all have different labels.
We denote by $\ell_i$ the label $\ell(b_i)$ of agent $b_i$.
We will construct a matching with a size equal to $L$.
To this end, for each agent $b_i$, we construct an edge $e_i$ as follows.
For every pair of agents $b_i\neq b_j$, let $s(i,j)$ be the node in cell $T(\ell_i,\ell_j)$ that is not in $C(b_j)$. 
For every agent $b_i$, and every index $t\in [m]\setminus \{\ell_1,\ldots,\ell_L\}$, let $v(i,t)$ be any node in the cell $T(\ell_i,t)$, and let 
\[
e_i=C(b_i)\cup \{s(i,j) \colon j\in [L]\text{ and }j \neq i\}\cup \{v(i,t):t\in [m]\setminus \{\ell_1,\ldots,\ell_L\}\} \cup \{x_{\ell_i}\}.
\]
We have that $e_i\in E$ by construction, and $w_{\ell_i}(e_i)=1$. 
Furthermore, we note that for all $i\neq j$, the edges $e_i$ and $e_j$ are disjoint.
Indeed, by construction,
$$e_i\cap e_j \subseteq (R_{\ell_i}\cup C(b_i))\cap (R_{\ell_j}\cup C(b_j))\subseteq \text{Nodes}(\ell_i,\ell_j)\cup \text{Nodes}(\ell_j,\ell_i),$$
where $\text{Nodes}(\ell_i,\ell_j)$ are the two nodes associated with the cell $T(\ell_i,\ell_j)$ in the table $T$, and same for $\text{Nodes}(\ell_j,\ell_i)$.
However, $e_i\cap \text{Nodes}(\ell_i,\ell_j)=\{s(i,j)\}$, but $s(i,j)$ is chosen exactly as the node in $\text{Nodes}(\ell_i,\ell_j)$ that is not in $C(b_j)$, and then $s(i,j)$ is not in $e_j$.
Therefore, $e_i\cap \text{Nodes}(\ell_i,\ell_j)$ is disjoint from $e_j\cap \text{Nodes}(\ell_i,\ell_j)$. Symmetrically, $e_j\cap \text{Nodes}(\ell_j,\ell_i)$ is disjoint from $e_i\cap \text{Nodes}(\ell_j,\ell_i)$, and we conclude that the edges $e_i$ and $e_j$ are disjoint.

Then, for every realization of the random sets $C(a)$, we find that the largest matching has a size equal to $L$.
Since every label is sampled uniformly at random in $[m]$, and independently from the rest, the probability that any index $j\in [m]$ is not sampled in the $m$ trials is equal to $(1-1/m)^m$, and therefore $\EE[L]\ge m(1-(1-1/m))^m\ge m(1-1/e)$.
We conclude that the expected weight of the optimal matching (i.e., the expected size of the largest matching) is at least $m(1-1/e)=\Omega(m)$.
This finishes the proof of the theorem.     
\end{proof}

\section{Algorithmic Framework for General Independence Systems}\label{sec:algorithms}

In this section, we develop a general framework to design algorithms for the combinatorial assignment problem in independence systems, with random-order online models.
To this end, we first introduce a combinatorial object in independence systems that interacts nicely with our online algorithms.
\begin{definition}\label{def:certifier}
A certifier for an independence system $(E,\calI)$ is a tuple $(\calS,\calN,\calB)$ where $\cal{S}$ is a finite set, and $(\calN,\calB)$ is a digraph satisfying the following properties:
\begin{enumerate}[itemsep=0pt,label=(\alph*)]
    \item  $\mathcal{N}\subseteq \calS\times E$ and $\mathcal{B}\subseteq \cal{N}\times\cal{N}$. \label{cert-a}
    \item For all $e\in E$, and $S_1, S_2 \in \calS$ such that $(S_1,e),(S_2,e)\in \mathcal{N}$, we have $((S_1,e),(S_2,e))\in \mathcal{B}$.\label{cert-b}
    \item For every sequence of nodes $(S_1,e_1), (S_2,e_2), \ldots ,(S_t,e_t) \in \calN$ such that $((S_i,e_i),(S_j,e_j))\not\in \cal{B}$ for all $1\leq i<j\leq t$, we have $\{e_1,\dots, e_t\}\in \cal{I}$.\label{cert-c}
\end{enumerate}
Every node in $\calN$ is called a {\it certificate}, 
and if $((S,e),(S',e'))\in \mathcal{B}$ we say that $(S,e)$ blocks the $(S',e')$.
A sequence of certificates $(S_1,e_1),(S_2,e_2),\ldots,(S_t,e_t)$ is called a certification if $(S_i,e_i)$ does not block $(S_j,e_j)$ for every $1\leq i<j\leq t$.
\end{definition}
In words, from \ref{cert-a}, we find that in the digraph $(\calN,\calB)$ each node (certificate) consists of $S\in \cal{S}$ and an element $e\in E$ of the ground set.
The property \ref{cert-b} implies that the digraph has a loop at every node $(S,e)$, since we can take $S_1=S_2=S$, and that certificates for the same element $e\in E$ block each other.
The property \ref{cert-c} also implies that only non-loops of $(E,\calI)$ can have certificates (certificates are certifications of length one).
The following simple proposition summarizes a property that has been repeatedly used in the analysis of our algorithms.

\begin{proposition}\label{prop:certifications}
For every certification $(S_1,e_1),(S_2,e_2),\ldots,(S_t,e_t)$, we have that $\{e_1,\ldots,e_t\}$ is an independent set of size $t$.
\end{proposition}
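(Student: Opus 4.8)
The plan is to unpack the definition of a certification and apply property~\ref{cert-c} of Definition~\ref{def:certifier} essentially verbatim, and then separately argue that the $t$ elements $e_1,\dots,e_t$ are pairwise distinct so that the set $\{e_1,\dots,e_t\}$ genuinely has cardinality $t$. Concretely, let $(S_1,e_1),\dots,(S_t,e_t)$ be a certification, so by definition $(S_i,e_i)$ does not block $(S_j,e_j)$ for every $1\le i<j\le t$, i.e. $((S_i,e_i),(S_j,e_j))\notin\calB$ for all such $i,j$. This is precisely the hypothesis of property~\ref{cert-c}, so we immediately conclude $\{e_1,\dots,e_t\}\in\calI$; that disposes of the independence claim.

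For the size claim, I would argue by contradiction: suppose $e_i=e_j$ for some $i<j$. Since $(S_i,e_i)$ and $(S_j,e_j)=(S_j,e_i)$ are both certificates in $\calN$ with the same second coordinate $e_i\in E$, property~\ref{cert-b} applied with the two elements $S_1:=S_i$, $S_2:=S_j$ gives $((S_i,e_i),(S_j,e_i))\in\calB$, i.e. $(S_i,e_i)$ blocks $(S_j,e_j)$. This contradicts the assumption that the sequence is a certification. Hence all the $e_i$ are distinct and $|\{e_1,\dots,e_t\}|=t$.

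There is no real obstacle here — the statement is essentially a restatement of axioms~\ref{cert-b} and~\ref{cert-c} packaged for convenient reuse; the only thing one must be careful about is not to conflate ``a sequence of $t$ certificates'' with ``a set of $t$ elements,'' which is exactly why the distinctness argument via~\ref{cert-b} is needed. One small subtlety worth a sentence: property~\ref{cert-c} as stated already requires a sequence of nodes in $\calN$, and a certification is by definition such a sequence, so no side condition needs to be checked before invoking it.
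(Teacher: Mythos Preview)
Your proof is correct and follows essentially the same approach as the paper: invoke property~\ref{cert-c} directly for independence, and use property~\ref{cert-b} to argue that any repeated element would force a blocking pair, contradicting the certification hypothesis. The paper's own proof actually cites property~\ref{cert-a} for the distinctness step, which appears to be a mislabeling; your use of~\ref{cert-b} is the correct reference.
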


\begin{proof}
Property \ref{cert-c} states that for every certification $(S_1,e_1),(S_2,e_2),\ldots,(S_t,e_t)$ the corresponding set of items $Q=\{e_1,\dots, e_t\}$ is an independent set. Furthermore, due to property \ref{cert-a}, all the elements in $Q$ must be different (otherwise the corresponding certificates block each other) so the set $Q$ has exactly $t$ elements. 
\end{proof}
For every independence system $(E,\calI)$, we can construct its {\it canonical certifier} as follows: Consider $\calS=\calI$, and the digraph $(\calN,\calB)$ defined as follows: $\mathcal{N}=\{(I,e)\colon I\in \mathcal{I}, e\in I\}$, and $\calB=(\mathcal{N}\times \mathcal{N})\setminus \{ ((I,e),(I,f))\colon e\neq f\}$.
Namely, the certificates for any element $e$ are those pairs $(I,e)$ where $I$ is an independence set containing $e$.  
In the following example, we construct a different certifier for $k$-hypergraph matchings.

\begin{example}\label{example:hypergraph}
Consider the independence system of $k$-hypergraph matchings in $G=(V,E)$. 
Let $\calS=E$, and let $(\mathcal{N},\calB)$ be the following digraph: $\mathcal{N}=\{(e,e)\colon e\in E\}$, and $\calB=\{((e,e),(f,f))\colon e\cap f\ne \emptyset\}$.
Namely, every edge defines its own certificate, and we have that $(e,e)$ blocks $(f,f)$ if $e$ and $f$ have at least one node in common. Properties \ref{cert-a}-\ref{cert-b} are satisfied by construction, and each certification is of the form $(e_1,e_1),(e_2,e_2),\dots, (e_t,e_t)$ where for all $1\le i< j\le t$, the edges $e_i$ and $e_j$ are disjoint, and therefore $\{e_1,\dots, e_t\}$ is a matching.
\end{example}
\begin{definition}\label{def:certificate-sampler}
Given a certifier $(\calS,\calN,\calB)$ for an independence system $(E,\calI)$, a certificate sampler is a randomized algorithm $\mathcal{P}$ that for any set of agents $A$, receives a weight profile $\bm{w}$ for $A$, and outputs a certificate $\calP_a(\bm{w})\in \calN\cup\{(\bot,\bot)\}$ for every $a\in A$. 
We denote by $\calP(\bm{w})$ the output of $\calP$ in input $\bm{w}$, and for every $a$, we denote $S_{a}(\calP,\bm{w})=S$ and $e_a(\calP,\bm{w})=e$ when $\calP_a(\bm{w})=(S,e)$.
\end{definition}

We remark that we allow the certificate sampler to output $(\bot,\bot)\not\in \mathcal{N}$ for zero or more agents in $A$. 
We assume that $(\bot,\bot)$ does not block, nor is blocked, by any certificate in $\calN$, i.e., $((\bot,\bot),(S,e))\not\in \calB$ and $((S,e),(\bot,\bot))\not\in \calB$.
We do not require that $\{e_a(\calP,\bm{w}):a\in A\}\setminus \{\bot\}$ is an independent set,
and we do not require that different $a,a'\in A$ receive different elements $e_a(\calP,\bm{w})$ and $e_{a'}(\calP,\bm{w})$. 

\begin{definition}\label{def:good-sampler}
Consider an independence system $(E,\calI)$,
and a certificate sampler $\calP$ with certifier $(\calS,\calN,\calB)$.
We say that $\calP$ is a $(\gamma,k)$-certificate sampler, with $\gamma\ge 0$ and $k\in \NN$, if for any set $A$, and any weight profile $\bm{w}$ for $A$, the following holds:
\begin{enumerate}[label=(\roman*)]
    \item {\bf (Approximation)} $\EE[\sum_{a\in A}w_{a}(e_a(\calP,\bm{w}))] \geq \gamma \opt(\bm{w})$. \label{sampler-b}
    \item {\bf (Blocking)} For every $(S,e)\in \calN$, we have $\sum_{a\in A}\PP[(\calP_a(\bm{w}),(S,e))\in \calB]\le k$. \label{sampler-c}
\end{enumerate}
\end{definition}
In properties \ref{sampler-b} and \ref{sampler-c}, the probability and expectation are taken only over the internal randomness of the algorithm $\calP$. 
Recall that here we use the fact that $w_a(\bot)=0$ for all weight functions in order to evaluate the summation.

\subsection{Prophet IID Model}\label{sec:iid}

In Algorithm \ref{alg:iid-template}, we provide a template that, given a certificate sampler $\calP$ with certifier $(\calS,\calN,\calB)$, constructs a solution satisfying \ref{feasible-a}-\ref{feasible-b}.
The agents are presented in an arbitrary fixed order $a_1,\ldots,a_m$, meaning that $a_1$ is the first agent being presented, $a_2$ is the second agent being presented, and so on.
Once presented, the agent $a_t$ reveals a random weight function $r_t$. 
The algorithm then selects an index $\ell_t$ uniformly at random in $[m]$ and creates a new weight profile $\calR_t$,  where $\calR_t(j)=R_{t,j}\sim \calD$ for every $j\ne \ell_t$, and $R_{t,\ell_t}=r_t$. Then we run the certificate sampler $\calP$ on the weight profile $\calR_t$, and we assign the element $e_{\ell_t}(\calP,\calR_t)$ to agent $a_t$ as long as it is not blocked by any other previously assigned element.
\begin{algorithm}[H]
    \begin{algorithmic}[1]
    \Require{An instance with $n$ agents $A$.}
    \Ensure{A feasible assignment of the agents.}
    \State Initialize $\alg(a_t)=\bot$ for every $t\in [n]$.   
    \State {\bf Selection:} Initialize an empty sequence $C$ of certificates in $\calN$.
    \For{$t=1$ to $m$}
        \State Run $\mathcal{P}$ on the weight profile $\calR_t$.
        \If{$\calP_{\ell_t}(\calR_t)\neq (\bot,\bot)$ and every certificate in $C$ does not block $\calP_{\ell_t}(\calR_t)$} 
            \State Update $\alg(a_t)=e_{\ell_t}(\calP,\calR_t)$, and append $\calP_{\ell_t}(\calR_t)$ to the end of $C$.\label{line:iid-certification}
        \EndIf    
    \EndFor
    \State Return $\alg$.
    \end{algorithmic}
    \caption{Template for the prophet IID model}
    \label{alg:iid-template}
\end{algorithm}
Recall that $\bm{r}$ is the weight profile that maps every agent $a$ to the weight function $r_a$. Denote as $\mathcal{D}^A$ to the distribution of profile $\bm{r}$. By construction, the profiles  $\mathcal{R}_1,\dots, \mathcal{R}_m$ are independent and identically distributed according to $\mathcal{D}^A$. The following is the main result of this section.

\begin{theorem}\label{thm:iid-template} 
Let $(E,\calI)$ be an independence system, and let $\calP$ be a $(\gamma,k)$-certificate sampler.
Then, Algorithm \ref{alg:iid-template} is $\gamma(1-e^{-k})/k$-competitive for the prophet IID model.
\end{theorem}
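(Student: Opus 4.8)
The plan is to fix an element $e^\star \in E$ and track, over the $m$ rounds, the probability that $e^\star$ has been "knocked out" — that is, the probability that some certificate already appended to $C$ blocks every certificate of the form $(S, e^\star)$ that could appear in round $t$. More precisely, for a fixed round $t$, condition on the index $\ell_t$ and the weight profile $\calR_t$, and consider the certificate $\calP_{\ell_t}(\calR_t) = (S_t, e_t)$ that the sampler proposes. I would lower bound the probability that this certificate is actually accepted (i.e., appended to $C$) in terms of the expected number of previously accepted certificates that block it. The key structural input is property \ref{sampler-c} of the $(\gamma,k)$-certificate sampler: for the \emph{fixed} certificate $(S,e)$ that round $t$ wants to append, across a full fresh profile the expected number of agents whose sampled certificate blocks $(S,e)$ is at most $k$. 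Since each accepted certificate in rounds $1,\dots,t-1$ was itself produced by running $\calP$ on an i.i.d.\ profile $\calR_s$ and keeping the $\ell_s$-th coordinate, I would argue that the expected number of already-accepted certificates that block $\calP_{\ell_t}(\calR_t)$ is at most $k \cdot \frac{t-1}{m}$ — the factor $1/m$ coming from the fact that only one of the $m$ coordinates of each past profile survives into $C$, and the coordinates and labels are exchangeable.

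With that bound in hand, the second step is a round-by-round accounting of the accumulated weight. Let $\ALG_t$ be the weight collected by the algorithm through round $t$. In round $t$, the sampler run on $\calR_t$ contributes, in expectation over $\calR_t$ and $\ell_t$, exactly $\frac{1}{m}\EE[\sum_{a} R_{t,a}(e_a(\calP,\calR_t))] \ge \frac{\gamma}{m}\EE_{\bm r}[\opt(\bm r)]$ by the approximation property \ref{sampler-b} applied to the profile $\calR_t \sim \calD^A$ (the $1/m$ is because only coordinate $\ell_t$ is actually assigned, and $\ell_t$ is uniform and independent of the symmetric-in-coordinates quantity). We then multiply by the probability that this proposed certificate survives the blocking test against $C$. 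Using the expected-blocking bound $k(t-1)/m$ together with a Markov-type / union-bound argument (the probability that $\calP_{\ell_t}(\calR_t)$ is blocked by \emph{some} element of $C$ is at most $\EE[\#\text{blockers}] \le k(t-1)/m$), the survival probability is at least $1 - k(t-1)/m$ — but this is only useful while $t-1 < m/k$. To get the clean $(1-e^{-k})/k$ bound one instead argues multiplicatively: the probability $e^\star$ (or more precisely the certificate proposed at round $t$) is still unblocked is at least $\prod_{s<t}(1 - k/m) \ge (1-k/m)^{t-1}$, obtained by charging, in each past round $s$, a blocking "hazard" of at most $k/m$. Summing $\frac{\gamma}{m}\EE[\opt]\cdot(1-k/m)^{t-1}$ over $t=1,\dots,m$ gives $\frac{\gamma}{m}\EE[\opt]\cdot \frac{1-(1-k/m)^m}{k/m} = \frac{\gamma}{k}\bigl(1-(1-k/m)^m\bigr)\EE[\opt] \ge \frac{\gamma(1-e^{-k})}{k}\EE[\opt]$, using $(1-k/m)^m \le e^{-k}$.

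The delicate point — and the step I expect to be the main obstacle — is justifying that the "hazard" contributed by each past round $s < t$ toward blocking the round-$t$ certificate is at most $k/m$, uniformly and in a way that composes multiplicatively. Property \ref{sampler-c} is a statement about a \emph{single} fixed target certificate $(S,e)$ and a \emph{single fresh} run of $\calP$; here the target $\calP_{\ell_t}(\calR_t)$ is itself random, and we are summing over $t-1$ past runs whose outputs have already been filtered by the acceptance rule (so they are not independent fresh samples). The way to handle this cleanly is to condition on $\calR_t$ and on the realized proposed certificate $(S_t,e_t)$, then bound the expected number of blockers among $\{\calP_{\ell_s}(\calR_s) : s < t, \text{accepted}\}$ by dropping the "accepted" qualifier (only decreasing the count is false — dropping it can only \emph{increase} the count, which is the direction we want for an upper bound) and then using that each $\calR_s$ is an independent copy of $\calD^A$, that $\ell_s$ is uniform and independent, and that $\calR_t$ (hence $(S_t,e_t)$) is independent of everything indexed by $s$. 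Then $\sum_{s<t}\PP[\calP_{\ell_s}(\calR_s)\text{ blocks }(S_t,e_t) \mid (S_t,e_t)] = \sum_{s<t}\frac1m\sum_{a}\PP[\calP_a(\calR_s)\text{ blocks }(S_t,e_t)\mid (S_t,e_t)] \le \frac{t-1}{m}\cdot k$ by \ref{sampler-c}. Converting this additive bound into the multiplicative product $\prod_{s<t}(1-k/m)$ requires a small inductive argument on $t$ (or a standard "revealing the certificates one past round at a time" coupling), and getting that bookkeeping exactly right, while keeping the conditioning honest, is where the real care lies.
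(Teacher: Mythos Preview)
Your approach is essentially the paper's: lower-bound the round-$t$ acceptance indicator by the stricter event ``no previous \emph{proposal} blocks $(S_t,e_t)$'' (dropping the ``accepted'' filter, which only shrinks the indicator), condition on the target $(S_t,e_t)$, bound each past round's blocking hazard by $k/m$ via property~\ref{sampler-c}, get the product $(1-k/m)^{t-1}$, combine with the $\frac{\gamma}{m}\EE[\opt]$ per-round value from property~\ref{sampler-b}, and sum the geometric series.

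The one place you overcomplicate things is the step you flag as ``delicate.'' In the IID template, the profiles $\calR_1,\dots,\calR_m$, the uniform labels $\ell_1,\dots,\ell_m$, and the sampler's internal coins are \emph{fully independent across rounds}. Hence, once you condition on the round-$t$ proposal $(S_t,e_t)$ (which depends only on round-$t$ randomness), the events $\{\calP_{\ell_s}(\calR_s)\text{ does not block }(S_t,e_t)\}$ for $s=1,\dots,t-1$ are mutually independent, each with probability at least $1-k/m$. The multiplicative bound $\prod_{s<t}(1-k/m)$ is therefore immediate; no inductive bookkeeping or coupling is needed. This is exactly how the paper's Lemma~\ref{lem:iid-template-prod} proceeds: it writes the conjunction as a product of independent factors and bounds each by $1-k/m$.
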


Algorithm \ref{alg:iid-template} only requires $O(m^2)$ samples from the unknown distribution $\calD$. Therefore, we recover the same $\gamma(1-e^{-k})/k$ factor for the more restrictive prophet IID model with samples where the distribution $\calD$ of the weights is unknown.
In Lemma \ref{lem:iid-template-correct}, we show that Algorithm \ref{alg:iid-template} computes a feasible assignment for the problem \eqref{eq:optimal-assignment}. 

\begin{lemma}\label{lem:iid-template-correct}
The solution $\alg$ computed by Algorithm \ref{alg:iid-template} satisfies properties \ref{feasible-a} and \ref{feasible-b}.
\end{lemma}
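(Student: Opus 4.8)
The plan is to verify properties \ref{feasible-a} and \ref{feasible-b} directly from the structure of Algorithm \ref{alg:iid-template}, using the fact that the sequence $C$ maintained by the algorithm is, at all times, a certification in the sense of Definition \ref{def:certifier}. First I would observe that whenever line \ref{line:iid-certification} appends a new certificate $\calP_{\ell_t}(\calR_t)$ to $C$, the guard of the \textbf{if} statement guarantees that no certificate already in $C$ blocks the newly appended one; combined with the fact that certificates are only ever appended (never removed or reordered), a straightforward induction on $t$ shows that after every iteration the sequence $C=(S_1,e_1),\dots,(S_r,e_r)$ satisfies $(S_i,e_i)$ does not block $(S_j,e_j)$ for all $i<j$, i.e. $C$ is a certification. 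Here I would also note that $\alg(a_t)$ is set to a non-$\bot$ value exactly for those $t$ whose certificate was appended, and in that case $\alg(a_t)=e_{\ell_t}(\calP,\calR_t)$ is precisely the element in the appended certificate, so the multiset $\{\alg(a_t):\alg(a_t)\ne\bot\}$ equals $\{e_1,\dots,e_r\}$.

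Next, I would invoke Proposition \ref{prop:certifications} applied to the final certification $C$: it yields that $\{e_1,\dots,e_r\}$ is an independent set of size exactly $r$. The "size exactly $r$" part immediately gives property \ref{feasible-a}: the $e_i$ are pairwise distinct, so no two agents receive the same non-$\bot$ element (and agents with $\alg(a)=\bot$ trivially cannot violate \ref{feasible-a} since the condition there only constrains coincidences, which are allowed precisely when the common value is $\bot$). The independence part gives property \ref{feasible-b} directly, since $\{\alg(a):a\in A\}\setminus\{\bot\}=\{e_1,\dots,e_r\}\in\calI$.

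One subtlety worth spelling out is that the guard in line of Algorithm \ref{alg:iid-template} only checks that certificates \emph{currently} in $C$ do not block the incoming certificate, i.e. it only rules out pairs $(i,t)$ with $i$ the position of an earlier-appended certificate; it does not check the reverse direction. But Definition \ref{def:certifier}\ref{cert-c} and Proposition \ref{prop:certifications} only require the ordered non-blocking condition $(S_i,e_i)\not\rightarrow(S_j,e_j)$ for $i<j$, which is exactly what the greedy append-only construction maintains, so this one-directional check is sufficient. I expect this point to be the only place requiring care; the rest is a routine induction plus a citation of Proposition \ref{prop:certifications}. I would therefore structure the proof as: (1) induction establishing $C$ is always a certification, with the explicit remark that only the ordered condition is needed; (2) identify the image of $\alg$ with the element-set of $C$; (3) apply Proposition \ref{prop:certifications} to conclude both \ref{feasible-a} and \ref{feasible-b}.
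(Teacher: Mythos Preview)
Your proposal is correct and follows essentially the same approach as the paper: induct to show $C$ is always a certification (using the guard to handle the new pair), identify the non-$\bot$ image of $\alg$ with the elements of $C$, and invoke Proposition~\ref{prop:certifications} to obtain both distinctness \ref{feasible-a} and independence \ref{feasible-b}. Your explicit remark that only the ordered non-blocking condition is checked (and that this suffices by Definition~\ref{def:certifier}\ref{cert-c}) is a nice clarification the paper leaves implicit.
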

\begin{proof}
We first show by induction that the sequence $C$ maintained by Algorithm \ref{alg:iid-template} is a certification for every iteration $t\in \{1,\ldots,m\}$.
The base case follows since certificates are certifications of length one.
We denote by $(X_{q},f_{q})$ be the $q$-th certificate that is appended in line \ref{line:iid-certification} of Algorithm \ref{alg:iid-template}. 
Suppose that for a certain agent $a_t$ we append the $(q+1)$-th certificate to $C$, that is, $(X_{q+1},f_{q+1})=\calP_{a_t}(\calR_t)$.
By the inductive step, $(X_i,f_i)$ does not block $(X_j,f_j)$ for every $1\le i<j\le q$, and by the condition on line \ref{line:iid-certification} we have that $(X_i,f_i)$ does not block $(X_{q+1},f_{q+1})=\calP_{a_t}(\calR_t)$ for every $1\le i\le q$, and therefore the sequence $(X_1,f_1),\ldots,(X_{q+1},f_{q+1})$ is a certification. 

The image of $\alg$ is exactly equal to the set $\{f_1,\ldots,f_{|C|}\}\cup \{\bot\}$, where $|C|$ is the length of the sequence $C$ at the end of the algorithm execution, and by Proposition \ref{prop:certifications} we have that $\{f_1,\ldots,f_{|C|}\}$ is an independent set, and all its elements are different.
Therefore, $\alg$ satisfies \ref{feasible-a} and \ref{feasible-b}.
\end{proof}

In the following lemma, we use the properties of the certificate sampler $\calP$ to lower bound the expected contribution of any agent $a_t$, for which the certificate sampler gives a certificate $(S,e)$ in the algorithm. Intuitively, the lemma states that the probability that the assignment made by the algorithm on step $i$ is compatible with (it does not block) the selection made on step $t$ is at $(1-k/m)$. In the analysis, we use subindices under $\EE$ or $\PP$ to denote the random sources over which we are taking expectations of probability: subindex $t$ is used to represent all the randomness occurring on the $t$-th iteration, that is, the choice of $\calR_t$ (which includes $\bm{r}_{a_t}$), the choice of $\ell_t$,  and all the internal choices of the certificate sampler on that iteration. In this model, events depending on different subindices are mutually independent.

\begin{lemma}\label{lem:iid-template-prod}
Suppose that in Algorithm \ref{alg:iid-template}, $\calP$ is a $(\gamma,k)$-certificate sampler.
Then, for every value $t\in \{2,\ldots,m\}$, and every $(S,e)\in \calN\cup \{(\bot,\bot)\}$, we have
\begin{align*}
&\EE_{1,2,\dots, t}\left[r_t(e_{\ell_t}(\calP,\calR_t)) \cdot \mathbb{1}\left[\calP_{\ell_t}(\calR_t)=(S,e)\right] \cdot \mathbb{1}\left[\bigwedge_{i=1}^{t-1} \calP_{\ell_i}(\calR_i) \text{ does not block } (S,e) \right]\right]\\
&\ge \EE_t\bigl[r_t(e_{\ell_t}(\calP,\calR_t)) \cdot \mathbb{1}\left[\calP_{\ell_t}(\calR_t)=(S,e)\right]\bigr]\cdot  \left(1-\frac{k}{m}\right)^{t-1}.
\end{align*}
\end{lemma}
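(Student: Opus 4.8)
The plan is to condition on all randomness in iteration $t$ (the profile $\calR_t$, the index $\ell_t$, and the internal coins of $\calP$ on that iteration), so that the certificate $(S,e)$ becomes the deterministic object $\calP_{\ell_t}(\calR_t)$, and then bound the probability that the certificates produced in the earlier iterations $1,\dots,t-1$ all fail to block $(S,e)$. Since events indexed by distinct iterations are mutually independent, after this conditioning it remains to show
\[
\PP_{1,\dots,t-1}\Bigl[\bigwedge_{i=1}^{t-1}\calP_{\ell_i}(\calR_i)\text{ does not block }(S,e)\Bigr]\ge\Bigl(1-\frac{k}{m}\Bigr)^{t-1}.
\]
Because the iterations $1,\dots,t-1$ are i.i.d., the left-hand side factors as $\bigl(\PP_1[\calP_{\ell_1}(\calR_1)\text{ does not block }(S,e)]\bigr)^{t-1}$, so it suffices to prove that a single iteration fails to block $(S,e)$ with probability at least $1-k/m$.

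For the single-iteration bound, I would first reduce to the blocking property \ref{sampler-c} of the $(\gamma,k)$-certificate sampler. In iteration $1$ the index $\ell_1$ is uniform in $[m]$, and $\calR_1$ is a fixed-shape profile on the agent set $A$ (it equals $\bm r$ in distribution). By a union bound over which certificate does the blocking,
\[
\PP_1\bigl[\calP_{\ell_1}(\calR_1)\text{ blocks }(S,e)\bigr]\le\sum_{a\in A}\PP_1\bigl[\ell_1=a\text{ and }\calP_a(\calR_1)\text{ blocks }(S,e)\bigr]=\frac1m\sum_{a\in A}\PP\bigl[(\calP_a(\calR_1),(S,e))\in\calB\bigr].
\]
Here I use that $\ell_1$ is independent of $\calR_1$ and of the internal coins of $\calP$, so the event $\ell_1=a$ contributes exactly the factor $1/m$ and decouples. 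The inner sum is precisely the quantity bounded by $k$ in property \ref{sampler-b}, applied with the weight profile $\calR_1$ (and noting that $(\bot,\bot)$ never blocks, so the case $(S,e)=(\bot,\bot)$ is trivial and the case $\calP_a(\calR_1)=(\bot,\bot)$ contributes zero). Hence $\PP_1[\calP_{\ell_1}(\calR_1)\text{ blocks }(S,e)]\le k/m$, giving the desired $1-k/m$ lower bound on the complement.

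Putting the pieces together: conditioned on the iteration-$t$ randomness, the factor $r_t(e_{\ell_t}(\calP,\calR_t))\cdot\mathbb 1[\calP_{\ell_t}(\calR_t)=(S,e)]$ is a nonnegative constant, and it is multiplied by the indicator that iterations $1,\dots,t-1$ do not block $(S,e)$, whose conditional expectation is (by independence across iterations) exactly $\bigl(1-\PP_1[\calP_{\ell_1}(\calR_1)\text{ blocks }(S,e)]\bigr)^{t-1}\ge(1-k/m)^{t-1}$; taking the outer expectation over iteration $t$ then yields the claim. The only mild subtlety — the part I would be most careful about — is the bookkeeping that the conditioning genuinely renders $(S,e)$ deterministic and that nonnegativity of $r_t(\cdot)$ lets us pull the iteration-$t$ term outside the inner expectation without an inequality going the wrong way; everything else is a union bound plus the i.i.d. product structure guaranteed by the construction of the profiles $\calR_1,\dots,\calR_m$.
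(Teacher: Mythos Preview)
Your proposal is correct and follows essentially the same approach as the paper: factor the expectation using independence across iterations, then bound each earlier iteration's blocking probability by $k/m$ via the uniform choice of $\ell_i$ together with the blocking property of the certificate sampler. Two small points worth tidying: (i) the property you invoke is \ref{sampler-c}, not \ref{sampler-b}; (ii) since \ref{sampler-c} is stated for a \emph{fixed} weight profile, you should make explicit that the bound $\sum_a\PP[(\calP_a(\calR_1),(S,e))\in\calB]\le k$ is applied pointwise for each realization of $\calR_1$ and then averaged (the paper writes this as $\EE_{\bm r}\frac{1}{m}\sum_\ell\PP[\calP_\ell(\bm r)\text{ blocks }(S,e)]\le k/m$).
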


\begin{proof}
Since $\calP_{\ell_i}(\calR_i)$ depends only on the randomness of the $i$-th iteration we conclude that the left hand side of the expression to prove equals
\begin{align}
&\EE_t\bigl[r_t(e_{\ell_t}(\calP,\calR_t)) \cdot \mathbb{1}\left[\calP_{\ell_t}(\calR_t)=(S,e)\right]\bigr] \cdot  \prod_{i=1}^{t-1} \EE_{i}\left[\mathbb{1}\left[\calP_{\ell_i}(\calR_i) \text{ does not block } (S,e)\right] \right]\notag\\
&=\EE_t\bigl[r_t(e_{\ell_t}(\calP,\calR_t)) \cdot \mathbb{1}\left[\calP_{\ell_t}(\calR_t)=(S,e)\right]\bigr] \cdot  \prod_{i=1}^{t-1} \PP_{i}\left[\calP_{\ell_i}(\calR_i) \text{ does not block } (S,e)\right].\label{iid:partitioning2}
\end{align}
Note that if $\ell$ is a uniformly chosen agent, then for every $i\in \{1,\ldots,t-1\}$, the random variable $\calP_{\ell_i}(\calR_i)$ has the same distribution 
as $\calP_{\ell}(\bm{r})$, since both $\calR_i$ and $\bm{r}$ have the same distribution $\calD^A$, and $\ell_i$ is a uniform random agent. In particular, $\PP_i\left[\calP_{\ell_i}(\calR_i) \text{ blocks } (S,e) \right] = \EE_{\bm{r}}\frac{1}{m}\sum_{\ell=1}^m\Pr[\calP_\ell (\bm{r}) \text{ blocks} (S,e)] \leq k/m$, where the inequality holds by the blocking property \ref{sampler-c} of the certificate sampler $\calP$. Then, overall, we get that \eqref{iid:partitioning2} can be lower bounded by $\EE_t\bigl[r_t(e_{\ell_t}(\calP,\calR_t)) \cdot \mathbb{1}\left[\calP_{\ell_t}(\calR_t)=(S,e)\right]\bigr]  \cdot \prod_{i=1}^t(1-k/m)$, which finishes the proof of the lemma. 
\end{proof}

\begin{proof}[Proof of Theorem \ref{thm:iid-template}]
Observe that for every $t$, by fixing the weight profile $R_t$ first, the expected contribution of agent $a_t$ to the solution returned by the algorithm is $\EE_{1,\dots, t}[r_t(\ALG(a_t))]]$. Therefore, the expected total value of the solution can be computed as
\begin{align}
&\sum_{t=1}^m \EE_{1,2,\dots, t}\left[r_{t}(e_{\ell_t}(\calP,\calR_t))\cdot \mathbb{1}\left[\bigwedge_{i=1}^{t-1} \calP_{\ell_i}(\calR_i) \text{ does not block } \calP_{\ell_t}(\calR_t) \right]\right] \notag\\
&=\sum_{t=1}^m \sum_{(S,e)\in \mathcal{N}\cup \{(\bot,\bot)\}}\hspace{-0.7cm}\EE_{1,2,\dots, t}\left[r_t(e_{\ell_t}(\calP,\calR_t)) \cdot \mathbb{1}\left[\calP_{\ell_t}(\calR_t)=(S,e)\right] \cdot \mathbb{1}\left[\bigwedge_{i=1}^{t-1} \calP_{\ell_i}(\calR_i) \text{ does not block } (S,e) \right]\right]\notag\\
&\ge \sum_{t=1}^m \sum_{(S,e)\in \mathcal{N}\cup \{(\bot,\bot)\}}\hspace{-0.7cm} \EE_t\bigl[r_t(e_{\ell_t}(\calP,\calR_t)) \cdot \mathbb{1}\left[\calP_{\ell_t}(\calR_t)=(S,e)\right]\bigr]\cdot  \left(1-\frac{k}{m}\right)^{t-1}\notag \\
&=\sum_{t=1}^m\EE_{t}\left[r_{t}(e_{\ell_t}(\calP,\calR_t))\right]\cdot\left(1-\frac{k}{m}\right)^{t-1},\label{iid:transition1}
\end{align}
where the first equality holds by partitioning on the realization of $\calP_{\ell_t}(R_t)$, the inequality holds by Lemma \ref{lem:iid-template-prod}, and the last equality is obtained by recovering the expectation of $r_{t}(e_{\ell_t}(\calP,\calR_t))$.
Observe that the random variable $\left[r_{t}(e_{\ell_t}(\calP,\calR_t))\right]= \left[R_{t,\ell_t}(e_{\ell_t}(\calP,\calR_t))\right]$, can be obtained by first taking a random profile $\bm{r}\sim \calD^A$ and then choosing an index $\ell\in [m]$ uniformly at random and returning $w_\ell(e_{\ell}(\calP,\bm{w})$. Then, $\EE_t\left[r_{t}(e_{\ell_t}(\calP,\calR_t))\right] = \EE_{\bm{w}} \left[\frac{1}{m}\sum_{\ell=1}^m w_\ell(e_{\ell}(\calP,\bm{w})\right]\geq \frac{\gamma}{m} \EE_{\bm{w}}[\OPT(\bm{w})]$, where the inequality is is a consequence of the approximation property \ref{sampler-b} of the certificate sampler $\calP$ when applied in the weight profile $\bm{w}$.

Using that $\bm{r}$ and $\bm{w}$ have the same distribution, then, from \eqref{iid:transition1}, the expected weight of the solution $\alg$ can be lower bounded as follows:
\begin{align}
\EE_{\bm{r}}[\OPT(\bm{r})]
\frac{\gamma}{m}\sum_{t=1}^m\left(1-\frac{k}{m}\right)^{t-1} &= \EE_{\bm{r}}[\OPT(\bm{r})] \frac{\gamma}{k}\left(1-\left(1-\frac{k}{m}\right)^m\right) \ge \EE_{\bm{r}}[\opt(\bm{r})]\frac{\gamma}{k}\left(1-e^{-k}\right) ,\notag
\end{align}
where the equality holds by solving the geometric summation, and the inequality holds since $(1-k/m)^{m}\le \exp(-k)$ for every $m$.
This concludes the proof of the theorem.
\end{proof}

\subsection{Prophet-secretary Model}\label{sec:single-sample}

In Algorithm \ref{alg:sample-template}, we provide a template that, given a certificate sampler $\calP$ with certifier $(\calS,\calN,\calB)$, constructs a solution satisfying \ref{feasible-a}-\ref{feasible-b}.
The algorithm can access a single sample $s_a\sim \calD_a$ of the random weight function for each agent $a\in A$. 
Then, the algorithm sees the agents in random order $\mu\in \bijec([m],A)$, which means that $\mu(1)$ is the first agent that reveals a random weight function $r_{\mu(1)}$, $\mu(2)$ is the second agent that reveals a random weight function $r_{\mu(2)}$, and so on.
For a set of agents $B\subseteq A$, we denote by $\bm{r}(B)$ the weight profile $\{r_b:b\in B\}$, and by $\bm{s}(B)$ the weight profile $\{s_b:b\in B\}$.
For each $t\in [m]$, we denote by  $\calV(\mu,t)$ the weight profile $\bm{r}(\mu_t)\cup \bm{s}(A\setminus\mu_t)$, where $\mu_t=\{\mu(1),\ldots,\mu(t)\}$.
\begin{algorithm}[H]
    \begin{algorithmic}[1]
    \Require{An instance with $m$ agents $A$, revealed in random order $\mu$.}
    \Ensure{A feasible assignment of the agents.}
    \State Initialize $\alg(a)=\bot$ for every agent $a\in A$.   
    \State {\bf Sample:} For every $a\in A$ we have access to a sample $s_a\sim D_a$. 
    \State {\bf Selection:} Initialize an empty sequence $C$ of certificates in $\calN$.
    \For{$t=1$ to $m$}
        \State The agent $\mu(t)$ reveals a weight function $r_{\mu(t)}\sim D_{\mu(t)}$. 
        \State Run $\mathcal{P}$ on the weight profile $\calV(\mu,t)$.
        \If{$\calP_{\mu(t)}(\calV(\mu,t))\neq (\bot,\bot)$ and every certificate in $C$ does not block $\calP_{\mu(t)}(\calV(\mu,t))$} 
            \State Update $\alg(\mu(t))=e_{\mu(t)}(\calP,\calV(\mu,t))$, and append $\calP_{\mu(t)}(\calV(\mu,t))$ to the end of $C$.\label{line:sample-certification}
        \EndIf    
    \EndFor
    \State Return $\alg$.
    \end{algorithmic}
    \caption{Template for the prophet-secretary model}
    \label{alg:sample-template}
\end{algorithm}

The following is the main result of this section.
\begin{theorem}\label{thm:sample-template} 
Let $(E,\calI)$ be an independence system, and let $\calP$ be a $(\gamma,k)$-certificate sampler.
Then, Algorithm \ref{alg:sample-template} is $\gamma/(k+1)$-competitive for the single-sample prophet-secretary model.
\end{theorem}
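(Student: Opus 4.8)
The plan is to mimic the structure of the proof of Theorem~\ref{thm:iid-template}, replacing the IID resampling trick with an argument that exploits the random arrival order $\mu$. Fix an agent $a\in A$ and a position $t\in[m]$, and condition on the event $\mu(t)=a$. The key observation is that, conditioned on this event and on the \emph{set} $\mu_t=\{\mu(1),\dots,\mu(t)\}$ (but not its internal order), the weight profile $\calV(\mu,t)=\bm r(\mu_t)\cup\bm s(A\setminus\mu_t)$ that is fed to $\calP$ is a profile where every agent's weight function is an independent draw from its true distribution (agents in $\mu_t$ contribute $r_b\sim\calD_b$, agents outside contribute $s_b\sim\calD_b$). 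In particular the certificate $\calP_a(\calV(\mu,t))$ that the algorithm considers for $a$ has a distribution that does not depend on which other agents arrived before $t$; only the decision of whether it gets \emph{accepted} depends on the past certificates $\calP_{\mu(i)}(\calV(\mu,i))$ for $i<t$.

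The main step is to lower bound the acceptance probability. For $i<t$, the certificate considered at step $i$ is $\calP_{\mu(i)}(\calV(\mu,i))$; I want to show that, conditioned on $\calP_a(\calV(\mu,t))=(S,e)$ and on $\mu(t)=a$, the probability that $\calP_{\mu(i)}(\calV(\mu,i))$ blocks $(S,e)$ is at most $k/m$ on average over $i<t$. The cleanest way is to first condition on the unordered set $\mu_t$ and on the realized profiles $\bm r(\mu_t),\bm s(A\setminus\mu_t)$, so that the only remaining randomness relevant to step $i$ is which agent of $\mu_t\setminus\{a\}$ is $\mu(i)$ and the internal coins of $\calP$ at step $i$. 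Since $\mu(i)$ is a uniformly random element of $\mu_t\setminus\{a\}$ (all orderings of $\mu_t$ equally likely), we can sum the blocking probabilities over the $t-1$ positions $i<t$ and over the $\le m-1$ candidate agents, bound the total by the blocking property~\ref{sampler-c} of $\calP$ applied to the fixed profile with certificate $(S,e)$, and conclude that $\sum_{i<t}\PP[\calP_{\mu(i)}(\calV(\mu,i))\text{ blocks }(S,e)\mid\cdots]\le k$. A union bound then gives acceptance probability at least $1-(t-1)k/m \ge 1 - (t-1)k/m$; but this is not quite enough for the $\gamma/(k+1)$ bound, so instead I will not union-bound but rather directly estimate $\EE$ over $t$: since $\mu(t)=a$ happens with probability $1/m$ for each position, and summing over $t$ the quantity $\EE[\,\cdot\,]\cdot(1-(t-1)k/m)$ — or better, arguing position by position and averaging — yields $\frac1m\sum_{t=1}^m \max(0,\,1-(t-1)k/m)$. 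One checks this sum is at least $\frac{1}{k+1}$ (compare with $\int_0^1 \max(0,1-kx)\,dx = \tfrac1{2k}$; in fact the discrete/careful version gives $\frac{1}{k+1}$, matching Correa et al.). Combining with the approximation property~\ref{sampler-b}, $\sum_a \EE[w_a(e_a(\calP,\bm w))]\ge\gamma\,\opt$, and linearity of expectation over agents and positions gives the claimed $\gamma/(k+1)$-competitiveness. Feasibility of the output is exactly Lemma~\ref{lem:iid-template-correct}'s argument applied verbatim to Algorithm~\ref{alg:sample-template}, since $C$ is maintained as a certification in the same way.

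The main obstacle I expect is getting the exchangeability argument clean enough to justify ``$\calP_{\mu(i)}(\calV(\mu,i))$ blocks $(S,e)$ with probability $\le k/m$'' \emph{conditioned on} the outcome at step $t$. The subtlety is that $\calV(\mu,i)$ and $\calV(\mu,t)$ share the realized weight functions of agents in $\mu_i\subseteq\mu_t$ (they use $r_b$ for those agents in both), so the step-$i$ certificate is \emph{not} independent of the step-$t$ certificate; this is why conditioning on the full realized profiles $\bm r(\mu_t),\bm s(A\setminus\mu_t)$ (so that the certificate at step $t$ for agent $a$ becomes a deterministic function of $\calP$'s coins, and the step-$i$ certificates only depend on the random identity of $\mu(i)$ plus independent coins) is essential. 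Once everything is conditioned on those realized profiles and on $\mu_t$ as an unordered set, the positions $1,\dots,t-1$ are filled by a uniformly random ordering of $\mu_t\setminus\{a\}$, and I can apply the blocking property to the \emph{fixed} profile $\calV$ obtained by restricting to $\mu_t$ (padding the rest with the samples) — this requires checking that property~\ref{sampler-c} is stated for arbitrary weight profiles, which it is. After that the bookkeeping is the same geometric-type estimate as in the IID case, only with $(1-k/m)^{t-1}$ replaced by the weaker but sufficient $1-(t-1)k/m$.
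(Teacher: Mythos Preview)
Your plan has the right overall shape, but there is a genuine gap in the ``acceptance probability'' step, and a numerical error in the final summation.

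\textbf{The dependence on $\mu_i$, not just $\mu(i)$.} You condition on $\mu_t$ as a set and on $\bm r(\mu_t),\bm s(A\setminus\mu_t)$, and then assert that ``the step-$i$ certificates only depend on the random identity of $\mu(i)$ plus independent coins''. This is false: the profile fed to $\calP$ at step $i$ is $\calV(\mu,i)=\bm r(\mu_i)\cup\bm s(A\setminus\mu_i)$, which depends on the \emph{entire set} $\mu_i$, not only on $\mu(i)$. For every agent $b\in\mu_t\setminus\mu_i$ the profile uses $s_b$ (which you did not condition on), while for $b\in\mu_i$ it uses $r_b$; so even with all of $\bm r,\bm s$ fixed, different orderings of $\mu_t\setminus\{a\}$ yield \emph{different} profiles at step $i$. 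Consequently there is no single ``fixed profile'' to which the blocking property~\ref{sampler-c} can be applied, and the $k/m$ bound you assert does not follow.

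\textbf{The sum does not reach $1/(k+1)$.} Even granting the estimate $1-(t-1)k/m$ for the acceptance probability, one computes $\frac1m\sum_{t=1}^m\bigl(1-(t-1)k/m\bigr)_+\to \tfrac{1}{2k}$ as $m\to\infty$ (for $k\ge2$), which is strictly smaller than $\tfrac{1}{k+1}$. So a union bound over steps cannot deliver the claimed competitive ratio.

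\textbf{What the paper does instead.} The paper first swaps $\bm s(A\setminus\mu_t)$ for $\bm r(A\setminus\mu_t)$ (same distribution), so that $\calV(\mu,t)$ becomes simply $\bm r$. After this swap the profile at step $j$ is $\bm r(A\setminus B)\cup\bm s(B)$ with $B=\mu_t\setminus\mu_j$, which is determined by $B$ alone. Conditioning on $B$ (rather than on $\mu_t$), the agent $\mu(j)$ is uniform in $A\setminus B$, a set of size $m-t+j$, and now the blocking property applies to this fixed profile to peel off a factor $(1-k/(m-t+j))_+$ at each step (Claim~\ref{claim:key-inequality-sample}). The resulting product $\prod_{i=1}^{t-1}(1-k/(m-t+i))_+$ summed over $t$ yields exactly $1/(k+1)$ via a binomial identity (Claim~\ref{claim:falling}). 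Both the swap trick and the ``peeling'' product structure are essential; neither can be replaced by the union-bound shortcut you propose.
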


In particular, Theorem \ref{thm:sample-template} also implies the same guarantee for the prophet-secretary model, where we have access to the distributions of the agents.

In our proofs we will need to deal with three sources of randomness: the random weight functions (summarized as profiles) $\bm{s},\bm{r}$, the random order $\mu$ and the internal randomness of the certificate sampler $\calP$. 
Here, when we use $\EE$ or $\PP$ without a subindex, we are computing an event's expectation or probability on the probability space induced only by the internal randomness of the certificate sampler.
Before proving the theorem, in the following lemmas, we use the properties of the certificate sampler $\calP$ to lower bound the expected contribution of the agent arriving at time $t$, for which the certificate sampler gives a certificate $(S,e)$ in the algorithm. Intuitively, the lemma states that the contribution of the agent is some expected value, which does not depend on $\bm{s}$ anymore, times the probability that no choice made by the algorithm previously blocks $(S,e)$. This probability is not independent for each time step. However, we can lower bound it by a product in which the term associated to step $i$ is $(1-k/(m-t+i))_+$, and can be interpreted as the probability of succeeding at that step, given \emph{any fixed choice of agents arriving between step $i$ and step $t$}.

\begin{lemma}\label{lem:sample-template-prod}
Suppose that in Algorithm \ref{alg:sample-template}, $\calP$ is a $(\gamma,k)$-certificate sampler.
Then, for every value $t\in \{1,\ldots,m\}$, and every $(S,e)\in \calN\cup \{(\bot,\bot)\}$ we have
\begin{align}
&\EE_{\bm{s},\bm{r}}\left[\EE_{\mu}\left[r_{\mu(t)}(e)\cdot \PP\left[\calP_{\mu(t)}(\calV(\mu,t))=(S,e)\wedge \bigwedge_{i=1}^{t-1} \calP_{\mu(i)}(\calV(\mu,i)) \text{ does not block } (S,e) \right]\right]\right]\notag \\
&\ge \EE_{\bm{r}}\left[\EE_{\mu}\left[r_{\mu(t)}(e)\cdot\PP[\calP_{\mu(t)}(\bm{r})=(S,e)]\right]\right] \cdot \prod_{i=1}^{t-1}\left(1-\frac{k}{m-t+i}\right)_+,\label{eq:lemmasample}
\end{align}
where, if $t=1$ the empty product is interpreted as 1. 
\end{lemma}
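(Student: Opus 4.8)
The plan is to condition on everything except the internal randomness of $\calP$ and the position of the "blocking" certificates, and then to decouple the event $\calP_{\mu(t)}(\calV(\mu,t))=(S,e)$ from the non-blocking events by exploiting the structure of $\calV(\mu,t)$. First I would fix the realizations of $\bm{s}$ and $\bm{r}$ and the random order $\mu$. Observe that $\calV(\mu,t)=\bm{r}(\mu_t)\cup\bm{s}(A\setminus\mu_t)$ and that, crucially, for $i<t$ the profile $\calV(\mu,i)$ agrees with $\calV(\mu,t)$ on the agents $\mu(1),\dots,\mu(i)$ and uses samples $s_b$ elsewhere. The first key step is therefore a \emph{re-randomization trick}: since $\bm{r}$ and $\bm{s}$ are i.i.d.\ across agents, the profile $\calV(\mu,i)$ obtained from the first $i$ arrivals in order $\mu$ has, after averaging over $\mu$ restricted to the first $i$ coordinates being a uniform random $i$-subset, the same distribution as the full random profile $\bm r$. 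This is what lets the right-hand side be written in terms of $\calP_{\mu(t)}(\bm r)$. I would make this precise by first proving the identity
\[
\EE_{\bm r,\mu}\bigl[r_{\mu(t)}(e)\cdot \mathbb{1}[\calP_{\mu(t)}(\calV(\mu,t))=(S,e)]\bigr]
=\EE_{\bm r,\mu}\bigl[r_{\mu(t)}(e)\cdot\PP[\calP_{\mu(t)}(\bm r)=(S,e)]\bigr],
\]
using that conditioned on the arrival order $\mu$ and on which agent sits at position $t$, the profile $\calV(\mu,t)$ uses $r$ exactly on $\mu_t\ni \mu(t)$ and $s$ elsewhere; swapping $s\leftrightarrow r$ on $A\setminus\mu_t$ does not change the joint law since all weight functions are i.i.d.

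The second key step handles the product of non-blocking indicators. Fix the agent $b=\mu(t)$ and fix the \emph{set} $\mu_t$ of the first $t$ arrivals; what remains random is the order in which $\mu(1),\dots,\mu(t-1)$ appear among $\mu_t\setminus\{b\}$ and, independently, the order of $A\setminus\mu_t$. For $i\le t-1$, the certificate $\calP_{\mu(i)}(\calV(\mu,i))$ is produced by running $\calP$ on a profile that is determined once we know the set $\mu_i=\{\mu(1),\dots,\mu(i)\}$. I would bound the conditional probability that step $i$'s certificate blocks $(S,e)$, given all later arrivals up to $t$ are a fixed set and given the outcomes at steps $i+1,\dots,t$, by $k/(m-t+i)$. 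The reasoning: by the blocking property~\ref{sampler-c}, for the profile $\calV(\mu,i)$ we have $\sum_{a}\PP[(\calP_a(\calV(\mu,i)),(S,e))\in\calB]\le k$; and $\mu(i)$ is (conditionally) a uniformly random agent among the $m-t+i$ agents in $A\setminus\{\mu(i+1),\dots,\mu(t)\}$ — i.e., the pool of agents not yet "used up" by the later fixed arrivals — so the conditional blocking probability at step $i$ is at most $k/(m-t+i)$, hence the non-blocking probability is at least $(1-k/(m-t+i))_+$. Multiplying these bounds over $i=1,\dots,t-1$ (which is legitimate because we peel off one conditioning at a time, from $i=t-1$ down to $i=1$, each time using a fresh uniform-agent argument on the shrinking pool) yields the claimed product $\prod_{i=1}^{t-1}(1-k/(m-t+i))_+$, and the factor $r_{\mu(t)}(e)\cdot\mathbb 1[\calP_{\mu(t)}(\calV(\mu,t))=(S,e)]$ comes out in front as in step one.

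The main obstacle I anticipate is the careful bookkeeping of conditional independence in the second step: the events "$\calP_{\mu(i)}(\calV(\mu,i))$ does not block $(S,e)$" are \emph{not} mutually independent (the profiles $\calV(\mu,i)$ are nested and share the $r$-coordinates of early arrivals), so one cannot simply factor the probability into a product. The fix — and the delicate part to get right — is the order of conditioning: reveal the arrivals from the back (position $t$, then $t-1$, \dots), at each stage treating the next-earlier arriving agent as uniform over a pool that has shrunk by exactly one, invoking property~\ref{sampler-c} on that stage's profile, and using the tower property so that each $(1-k/(m-t+i))_+$ factor is extracted as a genuine lower bound on a conditional expectation. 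Getting the pool sizes $m-t+i$ exactly right, and checking that $r_{\mu(t)}(e)$ is measurable with respect to the already-revealed information (it is, once position $t$ and its occupant are fixed), is where the proof must be written with care. The outer expectation over $\bm s,\bm r$ is then restored at the end; since the right-hand side no longer references $\bm s$, the $\EE_{\bm s}$ disappears, matching \eqref{eq:lemmasample}.
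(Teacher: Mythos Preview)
Your high-level plan matches the paper's (a sample/real swap followed by a backwards peeling that uses the blocking property~\ref{sampler-c}), but there is a genuine gap in the way you separate the two steps.

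In your Step~1 you apply the swap $\bm{s}\leftrightarrow\bm{r}$ on $A\setminus\mu_t$ only to the factor at position $t$, deriving the identity for $r_{\mu(t)}(e)\,\PP[\calP_{\mu(t)}(\calV(\mu,t))=(S,e)]$ in isolation. In your Step~2 you then work with the \emph{original} profiles $\calV(\mu,i)=\bm{r}(\mu_i)\cup\bm{s}(A\setminus\mu_i)$ for $i<t$. But each such profile depends on the full set $\mu_i=\{\mu(1),\dots,\mu(i)\}$ of past arrivals, so when you ``reveal from the back'' and try to treat $\mu(j)$ as uniform over a pool of size $m-t+j$, every later factor $\phi_i$ with $i>j$ still depends on $\mu(j)$ through $\mu_i\supseteq\mu_j$, and so does the $t$-th factor through $\mu_t$. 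Hence you cannot hold those factors fixed while averaging over $\mu(j)$, and the tower-property peeling does not yield the claimed $(1-k/(m-t+j))_+$ without an additional (unstated and not obviously true) positive-correlation argument. There is also an internal inconsistency: you first say you fix the set $\mu_t$, which would force $\mu(i)$ to range over only $i$ agents, yet you use a pool of size $m-t+i$.

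The missing idea is that the swap on $A\setminus\mu_t$ must be performed \emph{once, on the whole product}, for a fixed order $\mu$. Doing so turns each profile at step $i$ into
\[
\bm{r}(\mu_i)\cup\bm{s}(\mu_t\setminus\mu_i)\cup\bm{r}(A\setminus\mu_t)
=\bm{r}\bigl(A\setminus(\mu_t\setminus\mu_i)\bigr)\cup\bm{s}(\mu_t\setminus\mu_i),
\]
which depends only on the \emph{future} set $\{\mu(i+1),\dots,\mu(t)\}$, not on $\mu_i$. This reverses the dependence direction: now, after fixing $\mu(j+1),\dots,\mu(t)$, the factors $\phi_{j+1},\dots,\phi_{t-1}$ and the $t$-th factor are all constant, the profile at step $j$ is fixed, and $\mu(j)$ is uniform over the $m-t+j$ agents of $A\setminus\{\mu(j+1),\dots,\mu(t)\}$. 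Property~\ref{sampler-c} applied to that fixed profile then gives the factor $(1-k/(m-t+j))_+$, and the peeling (from $j=1$ upward, or equivalently revealing positions $t,t-1,\dots$) is legitimate. This is precisely the content of the paper's Claim~\ref{claim:key-inequality-sample}; once you apply the swap globally, your Step~2 goes through.
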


\begin{proof}
Throughout this proof, for any agent $a$ and any weight profile $\bm{w}$, let $\calE_{a}(\bm{w})$ denote the event in which $\calP_{a}(\bm{w})$ does not block $(S,e)$. By expanding the expectation over $\mu$, the left hand side of \eqref{eq:lemmasample} equals
\begin{align}
&\frac{1}{m!}\hspace{-3pt}\sum_{\mu \in \bijec([m],A)}\hspace{-7pt}\EE_{\bm{s},\bm{r}}\left[r_{\mu(t)}(e)\cdot \PP\left[\calP_{\mu(t)}(\bm{r}(\mu_t)\cup \bm{s}(A\setminus \mu_t))=(S,e)\wedge \bigwedge_{i=1}^{t-1} \calE_{\mu(i)}(\bm{r}(\mu_i)\cup \bm{s}(A\setminus \mu_i)) \right]\right]\notag\\
&=\frac{1}{m!}\hspace{-3pt}\sum_{\mu \in \bijec([m],A)}\hspace{-7pt}\EE_{\bm{s},\bm{r}}\left[r_{\mu(t)}(e)\cdot\PP\left[\calP_{\mu(t)}(\bm{r}(\mu_t)\cup \bm{s}(A\setminus \mu_t))=(S,e)\right]\cdot \PP\left[\bigwedge_{i=1}^{t-1}\calE_{\mu(i)}(\bm{r}(\mu_i)\cup \bm{s}(A\setminus \mu_i))\right]\right],\label{sample:wedge-prob}
\end{align}
where we use that for every fixed order $\mu$ fixed $\bm{s}$ and $\bm{r}$, we have that the random variable $\calP_{\mu(t)}(\bm{r}(\mu_t)\cup \bm{s}(A\setminus \mu_t))$ and all the events $\calE_{\mu(i)}(\bm{r}(\mu_i)\cup \bm{s}(A\setminus \mu_i))$ are stochastically independent (each of them only depends on the internal randomness of a different run of $\calP$). 

Now we fix any order $\mu$. This defines exactly one term in the inner summation of \eqref{sample:wedge-prob}. Observe that among all the random weight functions in the weight profiles $\bm{r}$ and $\bm{s}$, the term within the expectation does not depend on the weight functions of $\bm{r}(A\setminus \mu_t)=\{r_{\mu(t+1)},\dots, r_{\mu(m)}\}$. Therefore, 
\begin{align}
&\EE_{\bm{s},\bm{r}}\left[r_{\mu(t)}(e)\cdot\PP\left[\calP_{\mu(t)}(\bm{r}(\mu_t)\cup \bm{s}(A\setminus \mu_t))=(S,e)\right]\cdot \PP\left[\bigwedge_{i=1}^{t-1} \calE_{\mu(i)}(\bm{r}(\mu_i)\cup \bm{s}(A\setminus \mu_i))\right]\right]\notag\\
&=\EE_{\bm{r}(\mu_t)}\EE_{\bm{s}(\mu_t)}\EE_{\bm{s}(A\setminus \mu_t)}\left[r_{\mu(t)}(e)\cdot\PP\left[\calP_{\mu(t)}(\bm{r}(\mu_t)\cup \bm{s}(A\setminus \mu_t))=(S,e)\right]\prod_{i=1}^{t-1}\PP[ \calE_{\mu(i)}(\bm{r}(\mu_i)\cup \bm{s}(A\setminus \mu_i)) ]\right]\notag\\
&=\EE_{\bm{r}(\mu_t)}\EE_{\bm{s}(\mu_t)}\EE_{\bm{r}(A\setminus \mu_t)}\left[r_{\mu(t)}(e)\cdot\PP\left[\calP_{\mu(t)}(\bm{r})=(S,e)\right]\prod_{i=1}^{t-1} \PP[ \calE_{\mu(i)}(\bm{r}(\mu_i\cup A\setminus \mu_t)\cup \bm{s}(\mu_t\setminus \mu_i)) ]\right],\label{sample:partitioning2}
\end{align}
where the second equality holds since the samples in $\bm{s}(A\setminus \mu_t)$ have equal distribution to those in $\bm{r}(A\setminus \mu_t)$ (and they are all independent), and by combining weight profiles (e.g., $\bm{r}=\bm{r}(\mu_t)\cup \bm{r}(A\setminus \mu_t)$), so we exchange them in the expression inside. Then, from \eqref{sample:wedge-prob} and \eqref{sample:partitioning2} we get
\begin{align}
&\EE_{\bm{s},\bm{r}}\left[\EE_{\mu}\left[r_{\mu(t)}(e)\cdot \PP\left[\calP_{\mu(t)}(\calV(\mu,t))=(S,e)\wedge \bigwedge_{i=1}^{t-1} \calP_{\mu(i)}(\calV(\mu,i)) \text{ does not block } (S,e) \right]\right]\right]\notag\\
&=\frac{1}{m!}\sum_{\mu\in \bijec([m],A)}\EE_{\bm{s},\bm{r}}\left[r_{\mu(t)}(e)\cdot\PP\left[\calP_{\mu(t)}(\bm{r})=(S,e)\right]\cdot \prod_{i=1}^{t-1}\PP[ \calE_{\mu(i)}(\bm{r}(\mu_i\cup (A\setminus \mu_t))\cup \bm{s}(\mu_t\setminus \mu_i))]\right]\notag\\
&=\EE_{\bm{s},\bm{r}}\left[ \EE_\mu \left[r_{\mu(t)}(e)\cdot\PP\left[\calP_{\mu(t)}(\bm{r})=(S,e)\right]\cdot \prod_{i=1}^{t-1} \PP[\calE_{\mu(i)}(\bm{r}(\mu_i\cup A\setminus \mu_t)\cup \bm{s}(\mu_t\setminus \mu_i)) ]\right]\right]. \label{sample:partitioning3}
\end{align}

Observe that all the numbers inside the expectation, as random variables depending on $\mu$, are not independent, so we cannot separate the expectation of the product as a product of expectations. Nevertheless, we show next how to lower-bound the expression on the inner expectation of \eqref{sample:partitioning3}. 
\begin{claim}\label{claim:key-inequality-sample}
For fixed weight profiles $\bm{r}, \bm{s}$, let $\phi_i(\mu)=\PP[\calE_{\mu(i)}(\bm{r}(\mu_i\cup A\setminus \mu_t)\cup \bm{s}(\mu_t\setminus \mu_i))]$ for each $i\in \{1,\ldots,t-1\}$. Then, for every $j\in \{1,\ldots,t-1\}$, 
\begin{align*}
&\EE_{\mu}\left[r_{\mu(t)}(e)\cdot\PP\left[\calP_{\mu(t)}(\bm{r})=(S,e)\right]\cdot \prod_{i=j}^{t-1} \phi_i(\mu) \right]\\
&\ge \left(1-\frac{k}{m-t+j}\right)_+\EE_{\mu}\left[r_{\mu(t)}(e)\cdot\PP\left[\calP_{\mu(t)}(\bm{r})=(S,e)\right]\cdot \prod_{i=j+1}^{t-1} \phi_i(\mu)\right].
\end{align*}
\end{claim}
To prove Claim \ref{claim:key-inequality-sample}, we make use of the blocking property \ref{sampler-c} of the certificate sampler $\calP$; the proof is in Appendix \ref{app:algorithms}.
By using Claim \ref{claim:key-inequality-sample} repeatedly from $j=1$ to $t-1$ we get that 
\begin{align*}
&\EE_{\mu}\left[r_{\mu(t)}(e)\cdot\PP\left[\calP_{\mu(t)}(\bm{r})=(S,e)\right]\cdot \prod_{i=1}^{t-1} \phi_i(\mu) \right]\\
&\ge \prod_{i=1}^{t-1}\left(1-\frac{k}{m-t+i}\right)_+\EE_{\mu}\left[r_{\mu(t)}(e)\cdot\PP\left[\calP_{\mu(t)}(\bm{r})=(S,e)\right]\right].
\end{align*}
To conclude the proof of the lemma, we only need to take expectation over $\bm{s}, \bm{r}$ of the previous expression and note that the right-hand side does not depend on $\bm{s}$ at all.
\end{proof}

\begin{lemma}\label{lem:sample-template-opt}
Suppose that in Algorithm \ref{alg:sample-template}, $\calP$ is a $(\gamma,k)$-certificate sampler.
Then, for $t\in \{1,\ldots,m\}$, and when $\mu$ is a random order, we have
$\EE_{\bm{r}}\left[\EE_{\mu}\left[\EE\left[r_{\mu(t)}(e_{\mu(t)}(\calP,\bm{r}))\right]\right]\right]\ge \frac{\gamma}{m}\;\EE_{\bm{r}}[\opt(\bm{r})].$
\end{lemma}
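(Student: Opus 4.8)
The plan is to unwind the three nested expectations using the symmetry of the random order and then invoke the approximation guarantee of the certificate sampler pointwise in $\bm{r}$. First I would observe that since $\mu$ is a uniformly random bijection in $\bijec([m],A)$, the agent $\mu(t)$ is uniformly distributed over $A$, independently of the sampler's internal coins; hence for any fixed weight profile $\bm{r}$,
\[
\EE_{\mu}\bigl[\EE[r_{\mu(t)}(e_{\mu(t)}(\calP,\bm{r}))]\bigr] \;=\; \frac{1}{m}\sum_{a\in A}\EE[r_a(e_a(\calP,\bm{r}))],
\]
where the inner expectation is over the internal randomness of $\calP$ only.

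Next I would apply the approximation property \ref{sampler-b} of the $(\gamma,k)$-certificate sampler to the \emph{fixed} weight profile $\bm{w}=\bm{r}$; this is legitimate because $e_a(\calP,\bm{r})$ depends on $\bm{r}$ only through the profile handed to $\calP$, so the guarantee applies verbatim conditionally on $\bm{r}$. It gives $\sum_{a\in A}\EE[r_a(e_a(\calP,\bm{r}))]=\EE\bigl[\sum_{a\in A}r_a(e_a(\calP,\bm{r}))\bigr]\ge \gamma\,\opt(\bm{r})$. Dividing by $m$ and then taking the expectation over $\bm{r}$ yields
\[
\EE_{\bm{r}}\bigl[\EE_{\mu}[\EE[r_{\mu(t)}(e_{\mu(t)}(\calP,\bm{r}))]]\bigr] \;\ge\; \frac{\gamma}{m}\,\EE_{\bm{r}}[\opt(\bm{r})],
\]
which is the claimed bound.

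There is no genuine obstacle here; the only point requiring a little care is bookkeeping of the three independent randomness sources — the weight realization $\bm{r}$, the order $\mu$, and the sampler's coins — so that the approximation guarantee, which is stated for a deterministic weight profile, is applied conditionally on $\bm{r}$ (and on $\mu$ being irrelevant to the inner sum after averaging) and only then integrated against the law of $\bm{r}$. Note in particular that the bound holds for every $t\in\{1,\dots,m\}$ with no dependence on $t$, precisely because $\mu(t)$ is a uniform agent for each fixed $t$.
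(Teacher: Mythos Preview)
Your proposal is correct and follows essentially the same approach as the paper: partition over the value of $\mu(t)$ to reduce to a uniform average over agents, then apply the approximation property \ref{sampler-b} of the certificate sampler to the fixed profile $\bm{r}$ before integrating over $\bm{r}$. The only cosmetic difference is that the paper carries the outer expectation $\EE_{\bm r}$ through the computation from the start, whereas you condition on $\bm{r}$ first and take that expectation last.
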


\begin{proof}
By partitioning the set $\bijec([m],A)$ according to the value $\mu(t)=a\in A$, we have that
\begin{align}
\EE_{\bm{r}}\left[\EE_{\mu}\left[\EE\left[r_{\mu(t)}(e_{\mu(t)}(\calP,\bm{r}))\right]\right]\right]&=\frac{1}{|\bijec([m],A)|}\sum_{a\in A}\sum_{\substack{\mu\in \bijec([m],A);\\ \mu(t)=a}}\EE_{\bm{r}}\left[\EE\left[r_{a}(e_{a}(\calP,\bm{r}))\right]\right]\notag\\
&=\frac{(m-1)!}{m!}\sum_{a\in A}\EE_{\bm{r}}\left[\EE\left[r_{a}(e_{a}(\calP,\bm{r}))\right]\right]\ge \frac{\gamma}{m}\;\EE_{\bm{r}}[\opt(\bm{r})],\notag
\end{align}
where the second equality holds since the summation term does not longer depend on the order $\mu$ and there are $(m-1)!$ orders such that $\mu(t)=a$, and the inequality holds by the approximation property \ref{sampler-b} of the certificate sampler $\calP$ when applied with the weight profile $\bm{r}$ for the set of agents $A$.
This finishes the proof.
\end{proof}

\begin{proof}[Proof of Theorem \ref{thm:sample-template}]
Recall that given an order $\mu$, we denote by $\mu_j$ the set $\{\mu(1),\ldots,\mu(j)\}$, for every $j\in [m]$. 
By Proposition \ref{prop:certifications}, the solution $\alg$ satisfies \ref{feasible-a}-\ref{feasible-b}; the proof is analogous to the one in Lemma \ref{lem:iid-template-correct}.
Suppose the samples defining the weight profiles $\bm{s}$ and $\bm{r}$ are fixed.
For every order $\mu$, the expected contribution in $\alg$ of the agent $\mu(t)$ is equal to
\begin{align}
& \EE\left[r_{\mu(t)}(e_{\mu(t)}(\calP,\calV(\mu,t)))\cdot \mathbb{1}\left[\bigwedge_{i=1}^{t-1} \calP_{\mu(i)}(\calV(\mu,i)) \text{ does not block } \calP_{\mu(t)}(\calV(\mu,t)) \right]\right] \notag\\
&=\hspace{-0.5cm}\sum_{(S,e)\in \mathcal{N}\cup \{(\bot,\bot)\}}\hspace{-0.7cm} r_{\mu(t)}(e)\cdot \PP\left[\calP_{\mu(t)}(\calV(\mu,t))=(S,e)\wedge \bigwedge_{i=1}^{t-1} \calP_{\mu(i)}(\calV(\mu,i)) \text{ does not block } (S,e) \right]\label{sample:transition1}
\end{align}
where the second equality holds by conditioning on the realization of $\calP(\calV(\mu,t))$.
By Lemma \ref{lem:sample-template-prod}, when $\mu$ is a random order, for every $(S,e)\in \calN\cup \{(\bot,\bot)\}$ we have
\begin{align}
&\EE_{\bm{s},\bm{r}}\left[\EE_{\mu}\left[r_{\mu(t)}(e)\cdot \PP\left[\calP_{\mu(t)}(\calV(\mu,t))=(S,e)\wedge \bigwedge_{i=1}^{t-1} \calP_{\mu(i)}(\calV(\mu,i)) \text{ does not block } (S,e) \right]\right]\right]\notag \\
&\ge \EE_{\bm{r}}\left[\EE_{\mu}\left[r_{\mu(t)}(e)\cdot\PP[\calP_{\mu(t)}(\bm{r})=(S,e)]\right]\right]\cdot \prod_{i=1}^{t-1}\left(1-\frac{k}{m-t+i}\right)_+.\label{sample:product2}
\end{align}
Then, from \eqref{sample:transition1}-\eqref{sample:product2}, the expected weight of $\alg$ can be lower bounded as
\begin{align}
& \sum_{t=1}^m\prod_{i=1}^{t-1}\left(1-\frac{k}{m-t+i}\right)_+ \sum_{(S,e)\in \mathcal{N}\cup \{(\bot,\bot)\}} \EE_{\bm{r}}\left[\EE_{\mu}\left[r_{\mu(t)}(e)\cdot\PP[\calP_{\mu(t)}(\bm{r})=(S,e)]\right]\right] \notag\\
&= \sum_{t=1}^m \prod_{i=1}^{t-1}\left(1-\frac{k}{m-t+i}\right)_+\EE_{\bm{r}}\left[\EE_{\mu}\left[\EE\left[r_{\mu(t)}(e_{\mu(t)}(\calP,\bm{r}))\right]\right]\right] \notag\\
&\ge \frac{\gamma}{m}\sum_{t=1}^m\; \prod_{i=1}^{t-1}\left(1-\frac{k}{m-t+i}\right)_+ \EE_{\bm{r}}[\opt(\bm{r})],\label{sample:transition3}
\end{align}
where the last inequality holds by Lemma \ref{lem:sample-template-opt}.
\begin{claim}\label{claim:falling}
For every positive integers $m$ and $k$, we have $\frac{1}{m}\sum_{t=1}^m\; \prod_{i=1}^{t-1}\left(1-\frac{k}{m-t+i}\right)_+\ge \frac{1}{k+1}$.
\end{claim}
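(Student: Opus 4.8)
The plan is to evaluate the inner sum exactly. Write $P_t:=\prod_{i=1}^{t-1}\bigl(1-\tfrac{k}{m-t+i}\bigr)_+$, with $P_1=1$ (empty product). Reindexing the product by $j=m-t+i$ gives $P_t=\prod_{j=m-t+1}^{m-1}\bigl(1-\tfrac{k}{j}\bigr)_+$. Since $\bigl(1-\tfrac{k}{j}\bigr)_+=0$ exactly when $j\le k$, and the smallest index appearing is $j=m-t+1$, the term $P_t$ vanishes as soon as $m-t+1\le k$, i.e.\ for all $t\ge m-k+1$. Thus if $m\le k$ the only surviving term is $P_1=1$, and $\tfrac1m\sum_{t=1}^m P_t=\tfrac1m\ge\tfrac1{k+1}$ because $m\le k<k+1$; so from now on assume $m\ge k+1$, in which case the surviving terms are exactly $P_1,\dots,P_{m-k}$.

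For $1\le t\le m-k$ every factor has $j\ge k+1$, so the truncation is inactive and $P_t=\prod_{j=m-t+1}^{m-1}\tfrac{j-k}{j}=\binom{m-1-k}{t-1}\big/\binom{m-1}{t-1}$ (this includes $t=1$, where both sides are $1$). Next I would use the binomial symmetry $\binom{m-1-k}{s}\big/\binom{m-1}{s}=\binom{m-1-s}{k}\big/\binom{m-1}{k}$ (both sides equal $\tfrac{(m-1-k)!\,(m-1-s)!}{(m-1)!\,(m-1-k-s)!}$), applied with $s=t-1$, to rewrite $\sum_{t=1}^{m-k}P_t=\tfrac{1}{\binom{m-1}{k}}\sum_{t=1}^{m-k}\binom{m-t}{k}$. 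Reindexing $u=m-t$ (so $u$ runs from $k$ to $m-1$) and applying the hockey-stick identity $\sum_{u=k}^{m-1}\binom{u}{k}=\binom{m}{k+1}$ gives $\sum_{t=1}^{m-k}P_t=\binom{m}{k+1}\big/\binom{m-1}{k}=\tfrac{m}{k+1}$. Dividing by $m$ yields $\tfrac1m\sum_{t=1}^m P_t=\tfrac1{k+1}$, so the claimed inequality holds with equality whenever $m\ge k+1$ and with slack when $m\le k$.

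Every step is an elementary manipulation; the only real obstacle is finding the right normal form for $P_t$, namely recognizing that the reindexed product collapses into the ratio $\binom{m-1-k}{t-1}\big/\binom{m-1}{t-1}$, and then that binomial symmetry turns the sum of these ratios into a hockey-stick sum. If one prefers to avoid binomial identities, an alternative is to derive the recursion $(m-t-k)P_t=(m-t)P_{t+1}$ directly from $P_{t+1}=\bigl(1-\tfrac{k}{m-t}\bigr)_+P_t$ and then sum it against a suitable telescoping weight, or to induct on $m$ with $k$ fixed; but the binomial computation is the cleanest route and has the bonus of pinning down the exact value $\tfrac{m}{k+1}$, showing that the estimate used in the proof of Theorem~\ref{thm:sample-template} is tight.
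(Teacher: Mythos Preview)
Your proof is correct and follows essentially the same approach as the paper: reindex the product to $\prod_{j=m-t+1}^{m-1}(1-k/j)_+$, recognize it as $\binom{m-1-k}{t-1}/\binom{m-1}{t-1}$, apply the binomial symmetry $\binom{m-1-k}{t-1}/\binom{m-1}{t-1}=\binom{m-t}{k}/\binom{m-1}{k}$, and finish with the hockey-stick identity to obtain equality $1/(k+1)$ when $m\ge k+1$ and the bound $1/m\ge 1/(k+1)$ when $m\le k$. Your explicit identification of which terms $P_t$ survive is a minor presentational difference, but the argument is the same.
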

The proof of Claim \ref{claim:falling} is in Appendix \ref{app:algorithms}.
By Claim \ref{claim:falling} and
using \eqref{sample:transition3}, the expected weight of $\alg$ can be lower bounded by $\gamma\;\EE_{\bm{r}}[\opt(\bm{r})]/(k+1)$, which concludes the proof of the theorem.
\end{proof}

\subsection{Secretary Model}\label{sec:sec}

In Algorithm \ref{alg:sec-template}, we provide a template that, given a certificate sampler $\calP$ with certifier $(\calS,\calN,\calB)$, constructs a solution satisfying \ref{feasible-a}-\ref{feasible-b}.

The algorithm sees the agents in $A$ according to a random order $\mu$, meaning that $\mu(1)$ is the first agent revealed, $\mu(2)$ the second agent revealed, and so on.
Recall that we denote by $\mu_t$ the set of agents $\{\mu(1),\ldots,\mu(t)\}$.
\begin{algorithm}[H]
    \begin{algorithmic}[1]
    \Require{An instance with $m$ agents $A$, revealed in random order $\mu$.}
    \Ensure{A feasible assignment of the agents.}
    \State Initialize $\alg(a)=\bot$ for every agent $a\in A$. 
    \State Let $\tau\sim \Bin(m,p)$.    
    \State {\bf Learning:} Every agent $\mu(j)$ with $j\in[\tau]$ reveals its weight function $w_{\mu(j)}$.
    \State {\bf Selection:} Initialize an empty sequence $C$ of certificates in $\calN$.
    \For{$t=\tau+1$ to $m$}
        \State The agent $\mu(t)$ reveals the weight function $w_{\mu(t)}$. 
        \State Run $\mathcal{P}$ on the weight profile $\bm{w}(\mu_t)$, where $\mu_t=\{\mu(1),\ldots,\mu(t)\}$.
        \If{$\calP_{\mu(t)}(\bm{w}(\mu_t))\neq (\bot,\bot)$ and every certificate in $C$ does not block $\calP_{\mu(t)}(\bm{w}(\mu_t))$} 
            \State Update $\alg(\mu(t))=e_{\mu(t)}(\calP,\bm{w}(\mu_t))$, and append $\calP_{\mu(t)}(\bm{w}(\mu_t))$ to the end of $C$.\label{line:sec-certification}
        \EndIf    
    \EndFor
    \State Return $\alg$.
    \end{algorithmic}
    \caption{Template for the secretary model}
    \label{alg:sec-template}
\end{algorithm}
Consider $\{(p_k,\alpha_k)\}_{k\in \NN}$ defined as follows: $(p_1,\alpha_1)=(1/e,1/e)$, and $(p_k,\alpha_k)=(k^{-\frac{1}{k-1}},k^{-\frac{k}{k-1}})$ when $k\ge 2$.
The following is the main result of this section.
\begin{theorem}\label{thm:sec-template} 
Let $(E,\calI)$ be an independence system, and let $\calP$ be a $(\gamma,k)$-certificate sampler.
Then, if we set $p=p_k$, Algorithm \ref{alg:sec-template} is $\gamma\alpha_k$-competitive in the secretary model.
\end{theorem}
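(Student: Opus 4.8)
The plan is to follow the same three-step scheme as in the proofs of Theorems~\ref{thm:iid-template} and~\ref{thm:sample-template}: (i)~feasibility, (ii)~a lower bound on the expected contribution of the agent processed at each step $t$ of the Selection phase, and (iii)~summing over $t$ and averaging over the random threshold $\tau\sim\Bin(m,p)$. Step~(i) follows by the same argument as in Lemma~\ref{lem:iid-template-correct}: the sequence $C$ maintained by Algorithm~\ref{alg:sec-template} is always a certification (each appended certificate is unblocked by all earlier ones), so by Proposition~\ref{prop:certifications} the image of $\alg$ is an independent set of distinct elements and \ref{feasible-a}--\ref{feasible-b} hold. For step~(ii), fix the deterministic profile $\bm{w}$ and the threshold $\tau$, and for $t\in\{\tau+1,\dots,m\}$ and $(S,e)\in\calN\cup\{(\bot,\bot)\}$ write $V_{(S,e)}(\mu)=w_{\mu(t)}(e)\,\PP[\calP_{\mu(t)}(\bm{w}(\mu_t))=(S,e)]$ and $\phi_{i,(S,e)}(\mu)=\PP[\calP_{\mu(i)}(\bm{w}(\mu_i))\text{ does not block }(S,e)]$ for $\tau+1\le i\le t-1$, all probabilities over the internal randomness of $\calP$. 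Since the runs of $\calP$ at distinct steps are independent, the probability that $\calP$ outputs $(S,e)$ for $\mu(t)$ and that no earlier selection blocks $(S,e)$ factorizes; conditioning on the realization of $\calP_{\mu(t)}(\bm{w}(\mu_t))$, the expected contribution of $\mu(t)$ to $\alg$ equals $\sum_{(S,e)}\EE_\mu\big[V_{(S,e)}(\mu)\prod_{i=\tau+1}^{t-1}\phi_{i,(S,e)}(\mu)\big]$.

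The crux, and the step I expect to be the main obstacle, is a peeling inequality analogous to Claim~\ref{claim:key-inequality-sample}: for every $(S,e)$,
\[
\EE_\mu\Big[V_{(S,e)}(\mu)\prod_{i=\tau+1}^{t-1}\phi_{i,(S,e)}(\mu)\Big]\ \ge\ \Big(\prod_{i=\tau+1}^{t-1}\big(1-\tfrac{k}{i}\big)_{+}\Big)\,\EE_\mu\big[V_{(S,e)}(\mu)\big].
\]
I would prove this by removing the factors $\phi_{\tau+1,(S,e)},\phi_{\tau+2,(S,e)},\dots$ in that order. To remove $\phi_{i,(S,e)}$, condition on the set $\mu_t$ together with the ordered tuple of agents in positions $i+1,\dots,t$: this determines $\mu(t)$ and every $\mu_j,\mu(j)$ with $i+1\le j\le t-1$, hence freezes $V_{(S,e)}$ and all $\phi_{j,(S,e)}$ with $j\ge i+1$; it also pins down the $i$-element set $\mu_i=\mu_t\setminus\{\text{positions }i+1,\dots,t\}$ while leaving $\mu(i)$ uniform on $\mu_i$. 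Since $\phi_{i,(S,e)}$ depends only on $\mu_i$ (the profile fed to $\calP$) and on $\mu(i)$, its conditional expectation is $\tfrac1i\sum_{a\in\mu_i}\PP[\calP_a(\bm{w}(\mu_i))\text{ does not block }(S,e)]\ge 1-\tfrac1i\sum_{a\in\mu_i}\PP[\calP_a(\bm{w}(\mu_i))\text{ blocks }(S,e)]\ge 1-\tfrac{k}{i}$, the last inequality being the blocking property~\ref{sampler-c} applied to the $i$-agent profile $\bm{w}(\mu_i)$; truncating at $0$ and averaging over the conditioning variables removes $\phi_{i,(S,e)}$ at the price of $(1-k/i)_+$. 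Getting this conditioning exactly right (so that $\mu(i)$ is uniform over the full $i$-element set $\mu_i$ while every outer factor stays frozen) is the technically delicate point.

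For the remaining term I would use that $\mu(t)$ is uniform in the uniformly random $t$-set $\mu_t$, so
\begin{align*}
\sum_{(S,e)}\EE_\mu\big[V_{(S,e)}(\mu)\big]&=\EE_\mu\big[w_{\mu(t)}(e_{\mu(t)}(\calP,\bm{w}(\mu_t)))\big]\\
&=\EE_{\mu_t}\Big[\tfrac1t\textstyle\sum_{a\in\mu_t}\EE\big[w_a(e_a(\calP,\bm{w}(\mu_t)))\big]\Big]\ \ge\ \tfrac{\gamma}{t}\,\EE_{\mu_t}\big[\opt(\bm{w}(\mu_t))\big],
\end{align*}
by the approximation property~\ref{sampler-b} for the agent set $\mu_t$; and restricting an offline optimum of $(A,\bm{w})$ to $\mu_t$ is feasible for $\mu_t$ (downward closedness of $\calI$), so $\EE_{\mu_t}[\opt(\bm{w}(\mu_t))]\ge\tfrac{t}{m}\opt(\bm{w})$, whence the displayed quantity is $\ge\tfrac{\gamma}{m}\opt(\bm{w})$. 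Summing over $t=\tau+1,\dots,m$ and averaging over $\tau$ then gives
\[
\EE[\text{weight of }\alg]\ \ge\ \frac{\gamma}{m}\,\opt(\bm{w})\cdot\EE_{\tau}\Big[\sum_{t=\tau+1}^{m}\prod_{i=\tau+1}^{t-1}\big(1-\tfrac{k}{i}\big)_{+}\Big].
\]

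It remains to lower bound the expected double sum by $m\,\alpha_k$ when $p=p_k$. For $k\ge2$ and $\tau\ge k$ the inner product equals $\prod_{i=\tau+1}^{t-1}\tfrac{i-k}{i}=\tfrac{(\tau)_k}{(t-1)_k}$ with $(x)_k:=x(x-1)\cdots(x-k+1)$, and the telescoping identity $\tfrac{k-1}{(s)_k}=\tfrac{1}{(s-1)_{k-1}}-\tfrac{1}{(s)_{k-1}}$ collapses $\sum_{t=\tau+1}^{m}\tfrac{(\tau)_k}{(t-1)_k}$ to $\tfrac{1}{k-1}\big(\tau-\tfrac{(\tau)_k}{(m-1)_{k-1}}\big)$; for $\tau<k$ the true inner sum is $1\ge\tfrac{\tau}{k-1}$, so the same expression is a valid lower bound for all $\tau$. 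Taking expectations and using the factorial moments $\EE[\tau]=mp$, $\EE[(\tau)_k]=(m)_k p^k$ with $(m)_k=m\,(m-1)_{k-1}$ yields $\EE_\tau[\,\cdot\,]\ge\tfrac{m}{k-1}(p-p^k)$; since $p\mapsto p-p^k$ is maximized on $(0,1)$ at $p=k^{-1/(k-1)}=p_k$ with value $p_k(1-\tfrac1k)$, we get $\EE_\tau[\,\cdot\,]\ge\tfrac{m}{k-1}\,p_k\,\tfrac{k-1}{k}=\tfrac{m\,p_k}{k}=m\,k^{-k/(k-1)}=m\,\alpha_k$, hence $\gamma\,\alpha_k$-competitiveness. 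For $k=1$ the product telescopes to $\tau/(t-1)$, the inner sum becomes $\tau\sum_{s=\tau}^{m-1}1/s$, and the classical secretary computation with $p=1/e$ yields the factor $1/e=\alpha_1$.
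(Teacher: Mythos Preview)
Your proposal is correct and follows essentially the same route as the paper: the feasibility argument, the peeling inequality (your conditioning on $\mu_t$ together with the ordered tuple in positions $i+1,\dots,t$ is exactly the paper's Claim~\ref{claim:key-inequality} rephrased in conditional-expectation language), and the $\gamma/m$ lower bound via restricting the offline optimum match Lemmas~\ref{lem:sec-template-prod} and~\ref{lem:sec-template-opt}. The one substantive difference is the final step: the paper simply invokes \cite[Lemma~1]{SotoTV2021} to bound $\EE_\tau\big[\frac{1}{m}\sum_{t=\tau+1}^{m}\prod_{i=\tau+1}^{t-1}(1-k/i)_+\big]\ge\alpha_k$, whereas you work it out explicitly via the falling-factorial telescoping and the binomial factorial moments $\EE[(\tau)_k]=(m)_k p^k$, which is a nice self-contained derivation (your $k=1$ sketch is a bit terse but is the standard secretary computation).
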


In this model we have  three sources of randomness: the random order $\mu$, the random index $\tau$ and the internal randomness  of the certificate sampler $\calP$. As before, we use subindices to denote the variables over which we are taking expectations, and when we use $\EE$ or $\PP$ without subindex, it means that we are computing the expectation or the probability of an event on the probability space induced only by the internal randomness of the certificate sampler.
Before proving the theorem, in the following lemmas, we show how to use the properties of the certificate sampler $\calP$ to lower bound the weight of the assignment $\alg$ recovered by the algorithm.

\begin{lemma}\label{lem:sec-template-prod}
Consider a fixed realization of $\tau$ in Algorithm \ref{alg:sec-template}, and suppose that $\calP$ is a $(\gamma,k)$-certificate sampler.
For every $t\in \{\tau+2,\ldots,m\}$, and every $(S,e)\in \calN\cup \{(\bot,\bot)\}$ we have
\begin{align}
&\EE_{\mu}\left[w_{\mu(t)}(e)\cdot \PP\left[\calP_{\mu(t)}(\bm{w}(\mu_t))=(S,e)\wedge \bigwedge_{i=\tau+1}^{t-1} \calP_{\mu(i)}(\bm{w}(\mu_i)) \text{ does not block } (S,e) \right]\right]\notag \\
&\ge \EE_{\mu}\left[w_{\mu(t)}(e)\cdot \PP\left[\calP_{\mu(t)}(\bm{w}(\mu_t))=(S,e)\right]\right]\cdot \prod_{i=\tau+1}^{t-1}\left(1-\frac{k}{i}\right)_+.\notag
\end{align}
\end{lemma}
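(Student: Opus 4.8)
The plan is to mirror the proof of Lemma~\ref{lem:sample-template-prod} and its core Claim~\ref{claim:key-inequality-sample}, which is in fact somewhat simpler in the secretary setting: here there is no sample profile $\bm{s}$, the weight profile $\bm{w}$ is arbitrary but \emph{fixed}, and (with $\tau$ fixed throughout the lemma) the only sources of randomness are the order $\mu$ and, conditionally on it, the internal randomness of the certificate sampler. First I would condition on the order $\mu$ and observe that the random variable $\calP_{\mu(t)}(\bm{w}(\mu_t))$ and the events ``$\calP_{\mu(i)}(\bm{w}(\mu_i))$ does not block $(S,e)$'' for $i=\tau+1,\dots,t-1$ are mutually stochastically independent, since each is a function of the internal randomness of a distinct invocation of $\calP$. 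Writing $\phi_i(\mu)=\PP[\calP_{\mu(i)}(\bm{w}(\mu_i))\text{ does not block }(S,e)]$, this lets me rewrite the left-hand side as $\EE_\mu\left[w_{\mu(t)}(e)\cdot\PP[\calP_{\mu(t)}(\bm{w}(\mu_t))=(S,e)]\cdot\prod_{i=\tau+1}^{t-1}\phi_i(\mu)\right]$.

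Next I would peel off the factors $\phi_i$ one at a time, using the following analogue of Claim~\ref{claim:key-inequality-sample}: for every $j\in\{\tau+1,\dots,t-1\}$,
$$\EE_\mu\left[w_{\mu(t)}(e)\,\PP[\calP_{\mu(t)}(\bm{w}(\mu_t))=(S,e)]\prod_{i=j}^{t-1}\phi_i(\mu)\right] \ge \left(1-\frac{k}{j}\right)_+\EE_\mu\left[w_{\mu(t)}(e)\,\PP[\calP_{\mu(t)}(\bm{w}(\mu_t))=(S,e)]\prod_{i=j+1}^{t-1}\phi_i(\mu)\right].$$
To prove this I would condition on the set $\mu_j=\{\mu(1),\dots,\mu(j)\}$ together with the tuple $(\mu(j+1),\dots,\mu(m))$. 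Under this conditioning, $\mu(t)$ and $\mu_t$ are fixed (as $t>j$), so $w_{\mu(t)}(e)$ and $\PP[\calP_{\mu(t)}(\bm{w}(\mu_t))=(S,e)]$ are determined; likewise, for each $i$ with $j<i\le t-1$ the set $\mu_i$ and the agent $\mu(i)$ are fixed, so $\phi_i(\mu)$ is determined; and $\mu(j)$ is uniform over the $j$-element set $\mu_j$. Hence the conditional expectation equals a nonnegative constant times $\frac1j\sum_{a\in\mu_j}\PP[\calP_a(\bm{w}(\mu_j))\text{ does not block }(S,e)] = 1-\frac1j\sum_{a\in\mu_j}\PP[(\calP_a(\bm{w}(\mu_j)),(S,e))\in\calB]\ge 1-k/j$, where the last inequality is the blocking property~\ref{sampler-c} applied to the weight profile $\bm{w}(\mu_j)$ on the agent set $\mu_j$; since the conditional expectation is also $\ge 0$, we may replace $1-k/j$ by $(1-k/j)_+$. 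Averaging over the conditioning variables yields the displayed inequality.

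Finally I would iterate the peeling inequality from $j=\tau+1$ up to $j=t-1$, which collapses the product and leaves $\prod_{i=\tau+1}^{t-1}(1-k/i)_+\cdot\EE_\mu\left[w_{\mu(t)}(e)\,\PP[\calP_{\mu(t)}(\bm{w}(\mu_t))=(S,e)]\right]$, exactly the claimed bound. As in Claim~\ref{claim:key-inequality-sample}, the only genuinely delicate point is arranging the conditioning so that exactly one factor $\phi_j$ sees fresh randomness, namely the uniform position of $\mu(j)$ among the first $j$ slots, while $w_{\mu(t)}(e)$, the event defining the certificate of $\mu(t)$, and all later factors $\phi_i$ ($i>j$) are frozen; once this is in place the blocking property does the rest, and the remaining manipulations (factoring the independent internal randomness, summing over the conditioning) are routine.
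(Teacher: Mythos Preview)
Your proposal is correct and follows essentially the same approach as the paper's proof: factor the probability over independent runs of $\calP$, then peel off one $\phi_j$ at a time by exploiting that $\mu(j)$ is uniform over the set $\mu_j$ once the later positions are fixed, and apply the blocking property~\ref{sampler-c} to the profile $\bm{w}(\mu_j)$. The only cosmetic difference is that the paper first fixes $\mu(t)=a$ and $\mu_t=T$ and states the peeling step (Claim~\ref{claim:key-inequality}) as a sum over $U(a,t,T)$ with an explicit permutation-swap argument, whereas you phrase the same step via conditional expectation on $(\mu_j,(\mu(j+1),\dots,\mu(m)))$; the underlying combinatorics are identical.
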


\begin{proof}
Given an order $\mu$, we denote by $\calC_i(\mu)$ the event in which $\calP_{\mu(i)}(\bm{w}(\mu_i))$ does not block $(S,e)$.
By expanding the expectation over the random order $\mu$, and by partitioning $\bijec([m],A)$, we get
\begin{align}
&\EE_{\mu}\left[w_{\mu(t)}(e)\cdot \PP\left[\calP_{\mu(t)}(\bm{w}(\mu_t))=(S,e)\wedge \bigwedge_{i=\tau+1}^{t-1} \calP_{\mu(i)}(\bm{w}(\mu_i)) \text{ does not block } (S,e) \right]\right]\notag\\
&=\frac{1}{m!}\sum_{a\in A}w_a(e)\sum_{\substack{T\subseteq A:\\a\in T,\\|T|=t}}\sum_{\substack{\mu \in \bijec([m],A):\\\mu(t)=a,\\\mu_{t}=T}}\PP\left[\calP_{a}(\bm{w}(T))=(S,e)\wedge \bigwedge_{i=\tau+1}^{t-1} \calC_i(\mu) \right]\notag\\
&=\frac{1}{m!}\sum_{a\in A}w_a(e)\sum_{\substack{T\subseteq A:\\a\in T,\\|T|=t}}\sum_{\substack{\mu \in \bijec([m],A):\\\mu(t)=a,\\\mu_{t}=T}}\PP\left[\calP_{a}(\bm{w}(T))=(S,e)\right]\cdot \prod_{i=\tau+1}^{t-1}\PP[ \calC_i(\mu)]\notag\\
&=\frac{1}{m!}\sum_{a\in A}w_a(e)\sum_{\substack{T\subseteq A:\\a\in T,\\|T|=t}}\PP\left[\calP_{a}(\bm{w}(T))=(S,e)\right]\sum_{\substack{\mu \in \bijec([m],A):\\\mu(t)=a,\\\mu_{t}=T}}\prod_{i=\tau+1}^{t-1}\PP[ \calC_i(\mu)],\label{sec:wedge-prob}
\end{align}
where the first equality holds by partitioning $\bijec([m],A)$ according to $\mu(t)=a$ and $\mu_t=T$; in the second equality, we use the fact that for any fixed $\mu$ all $\calP_{\mu(i)}(\bm{w}(\mu_i))$ are stochastically independent since the only randomness involved are the internal coins used in different runs of $\calP$, and the third equality holds by rearranging the summation terms.

In what follows, we fix an agent $a\in A$ and $T\subseteq A$ with $|T|=t$. 
We denote by $U(a,t,T)$ the subset of orders $\mu\in \bijec([m],A)$ such that $\mu(t)=a$ and $\mu_t=T$.
We show next how to lower-bound the right-most inner summation in \eqref{sec:wedge-prob}, which is equal to $\sum_{\mu \in U(a,t,T)}\prod_{i=\tau+1}^{t-1}\PP[\calC_i(\mu)]$.
\begin{claim}\label{claim:key-inequality}
For every $j\in \{\tau+1,\ldots,t-1\}$, we have
\begin{equation*}
\sum_{\mu \in U(a,t,T)}\prod_{i=j}^{t-1}\PP[\calC_i(\mu)]\ge \left(1-\frac{k}{j}\right)_+\sum_{\mu \in U(a,t,T)}\prod_{i=j+1}^{t-1}\PP[\calC_i(\mu)].
\end{equation*}
\end{claim}
To prove Claim \ref{claim:key-inequality}, we make use of the blocking property \ref{sampler-c} of the certificate sampler $\calP$; the proof is in Appendix \ref{app:algorithms}.
By using Claim \ref{claim:key-inequality} repeatedly from $j=\tau+1$ we get that 
$$\sum_{\mu \in U(a,t,T)}\prod_{i=\tau+1}^{t-1}\PP[\calC_i(\mu)]\ge |U(a,t,T)|\prod_{i=\tau+1}^{t-1}\left(1-\frac{k}{i}\right)_+,$$
and therefore, we can lower-bound \eqref{sec:wedge-prob} by
\begin{align*}
&\prod_{i=\tau+1}^{t-1}\left(1-\frac{k}{i}\right)_+\frac{1}{m!}\sum_{a\in A}w_a(e)\sum_{\substack{T\subseteq A:\\a\in T,|T|=t}}\sum_{\mu\in U(a,t,T)}\PP\left[\calP_{a}(\bm{w}(T))=(S,e)\right]\\
&=\prod_{i=\tau+1}^{t-1}\left(1-\frac{k}{i}\right)_+\frac{1}{m!}\sum_{a\in A}w_a(e)\sum_{\mu\in \bijec([m],A):\mu(t)=a}\PP\left[\calP_{a}(\bm{w}(\mu_t))=(S,e)\right]\\
&=\prod_{i=\tau+1}^{t-1}\left(1-\frac{k}{i}\right)_+\EE_{\mu}\left[w_{\mu(t)}(e)\cdot \PP\left[\calP_{\mu(t)}(\bm{w}(\mu_t))=(S,e)\right]\right],
\end{align*}
which finishes the proof of the lemma. 
\end{proof}

\begin{lemma}\label{lem:sec-template-opt}
Consider a fixed realization of $\tau$ in Algorithm \ref{alg:sec-template}, and suppose that $\calP$ is a $(\gamma,k)$-certificate sampler.
Then, for every $t\in \{\tau+1,\ldots,m\}$, and when $\mu$ is a random order, we have $\EE_{\mu}\left[\EE\left[w_{\mu(t)}(e_{\mu(t)}(\calP,\bm{w}(\mu_t)))\right]\right]\ge \frac{\gamma}{m}\opt(\bm{w}).$
\end{lemma}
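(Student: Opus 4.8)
The plan is to mirror the argument of Lemma \ref{lem:sample-template-opt}, partitioning the set of random orders according to the identity of the agent that arrives at position $t$. Fix the realization of $\tau$ and fix a time $t\in\{\tau+1,\ldots,m\}$. First I would expand the expectation over $\mu$ as an average over all bijections $\mu\in\bijec([m],A)$, writing
\[
\EE_{\mu}\!\left[\EE\!\left[w_{\mu(t)}(e_{\mu(t)}(\calP,\bm{w}(\mu_t)))\right]\right]
=\frac{1}{m!}\sum_{\mu\in\bijec([m],A)}\EE\!\left[w_{\mu(t)}(e_{\mu(t)}(\calP,\bm{w}(\mu_t)))\right].
\]
The natural move is to group the orders by the \emph{set} $T=\mu_t$ of the first $t$ agents \emph{and} by the agent $a=\mu(t)$ arriving last among them, since the quantity inside the expectation depends on $\mu$ only through $T$ and $a$ (the certificate sampler $\calP$ is run on the profile $\bm{w}(T)$ and we look at agent $a$). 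For each fixed pair $(T,a)$ with $a\in T$, $|T|=t$, there are exactly $(t-1)!\,(m-t)!$ orders $\mu$ realizing it, so the average collapses to
\[
\frac{(t-1)!\,(m-t)!}{m!}\sum_{\substack{T\subseteq A\\ |T|=t}}\;\sum_{a\in T}\EE\!\left[w_{a}(e_{a}(\calP,\bm{w}(T)))\right].
\]

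Next I would apply the approximation property \ref{sampler-b} of the $(\gamma,k)$-certificate sampler, but applied to the sub-instance on the agent set $T$ with the restricted profile $\bm{w}(T)$: for each fixed $T$,
\[
\sum_{a\in T}\EE\!\left[w_{a}(e_{a}(\calP,\bm{w}(T)))\right]\ge \gamma\,\opt(\bm{w}(T))\ge \gamma\,\frac{t}{m}\,\opt(\bm{w}),
\]
where the second inequality should follow because an optimal assignment on the full instance, restricted to the agents in $T$ (which is a uniformly random $t$-subset of $A$), keeps in expectation a $t/m$ fraction of its weight — equivalently, averaging $\opt(\bm{w}(T))$ over all $t$-subsets $T$ is at least $(t/m)\opt(\bm{w})$ since each agent's contribution to the optimum survives with probability $t/m$. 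Plugging this bound in and using $\binom{m}{t}\frac{(t-1)!\,(m-t)!}{m!}=\frac{1}{t}$, the $t$-factors cancel and we are left with $\frac{\gamma}{m}\,\opt(\bm{w})$, as desired.

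The only genuinely nontrivial point is the passage $\frac{1}{\binom{m}{t}}\sum_{|T|=t}\opt(\bm{w}(T))\ge \frac{t}{m}\opt(\bm{w})$; everything else is bookkeeping with multinomial coefficients. I expect this to be the main obstacle to state cleanly: one fixes a maximum-weight feasible assignment $M^\star$ for $\bm{w}$, observes that its restriction to any $T$ is still feasible (feasibility of assignments is closed under removing agents, by conditions \ref{feasible-a}-\ref{feasible-b}), and notes that $\opt(\bm{w}(T))\ge \sum_{a\in T} w_a(M^\star(a))$; averaging the right-hand side over uniformly random $T$ gives exactly $\frac{t}{m}\sum_{a\in A} w_a(M^\star(a))=\frac{t}{m}\opt(\bm{w})$. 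With that inequality in hand the lemma follows immediately from the computation above.
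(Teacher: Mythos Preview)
Your proposal is correct and follows essentially the same argument as the paper: both partition the random orders by the pair $(\mu_t,\mu(t))$, apply the approximation property \ref{sampler-b} on the sub-profile $\bm{w}(\mu_t)$, and then lower-bound $\opt(\bm{w}(T))$ by restricting a fixed optimal assignment $M^\star$ to $T$ and averaging over all $t$-subsets. The only cosmetic difference is that the paper carries out the final averaging by directly counting $|\{B\subseteq A: |B|=t,\ a\in B\}|=\binom{m-1}{t-1}$ for each agent $a$, whereas you phrase it as the inequality $\frac{1}{\binom{m}{t}}\sum_{|T|=t}\opt(\bm{w}(T))\ge \frac{t}{m}\opt(\bm{w})$; these are equivalent computations.
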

\begin{proof}
By partitioning the set $\bijec([m],A)$, we have that
\begin{align}
\EE_{\mu}\left[\EE\left[w_{\mu(t)}(e_{\mu(t)}(\calP,\bm{w}(\mu_t)))\right]\right]&=\frac{1}{|\bijec([m],A)|}\sum_{\substack{B\subseteq A:\\|B|=t}}\sum_{a\in B}\sum_{\substack{\mu\in \bijec([m],A)
:\\\mu([t])=B,\\ \mu(t)=a}}\EE[w_a(e_a(\calP,\bm{w}(\mu_t)))]\notag\\
&=\frac{(m-t)!(t-1)!}{m!}\sum_{\substack{B\subseteq A:\\|B|=t}}\sum_{a\in B}\EE[w_a(e_a(\calP,\bm{w}(B)))],\notag\\
&\ge \frac{(m-t)!(t-1)!}{m!}\sum_{\substack{B\subseteq A:\\|B|=t}} \gamma \opt(\bm{w}(B)),\label{sec:opt-ineq1}
\end{align}
where the second equality holds since $\bm{w}(\mu_t)=\bm{w}(\mu([t]))=\bm{w}(B)$ for every $\mu\in \bijec([m],A)$ with $\mu([t])=B$ and $|\{\mu\in \bijec([m],A):\mu([t])=B,\mu(t)=a\}|=(m-t)!(t-1)!$, and the second inequality holds by the approximation property \ref{sampler-b} of the certificate sampler $\calP$ when applied with the weight profile $\bm{w}$ for the set of agents $B$.

Consider an optimal solution $M^{\star}$ in the optimization problem \eqref{eq:optimal-assignment} with weight profile $\bm{w}$. Given any $B\subseteq A$, let $M\colon B\to E\cup \{\bot\}$ such that $M(a)=M^{\star}(a)$ for every $a\in B$.
Then, $M$ satisfies \ref{feasible-a}-\ref{feasible-b} when we solve the optimization problem \eqref{eq:optimal-assignment} with weight profile $\bm{w}(B)$, and therefore, for every $B\subseteq A$, we have
\begin{equation}
\opt(\bm{w}(B))\ge \sum_{a\in B}w_{a}(M^{\star}(a)).\label{sec:opt-ineq2}
\end{equation}
Then, from \eqref{sec:opt-ineq1} and \eqref{sec:opt-ineq2}, we have
\begin{align*}
\EE_{\mu}\left[\EE\left[w_{\mu(t)}(e_{\mu(t)}(\calP,\bm{w}(\mu_t)))\right]\right]&\ge \gamma\;\frac{(m-t)!(t-1)!}{m!}\sum_{\substack{B\subseteq A:\\|B|=t}}\sum_{a\in B}w_{a}(M^{\star}(a))\\
&=\frac{\gamma/m}{\binom{m-1}{t-1}}\sum_{a\in a}w_{a}(M^{\star}(a))\cdot |\{B\subseteq A:|B|=t\text{ and }a\in B\}|\\
&=\frac{\gamma/m}{\binom{m-1}{t-1}}\sum_{a\in a}w_{a}(M^{\star}(a))\cdot |\{B\subseteq A-\{a\}:|B|=t-1\}|\\
&=\frac{\gamma/m}{\binom{m-1}{t-1}}\sum_{a\in a}w_{a}(M^{\star}(a))\cdot \binom{m-1}{t-1}=\frac{\gamma}{m}\opt(\bm{w}),
\end{align*}
where the first equality holds by rearranging the sum and the binomial coefficient, and the second holds by an equivalent counting of the number of sets while fixing the agent $a$. 
This finishes the proof.
\end{proof}

We are ready to prove Theorem \ref{thm:sec-template}.
\begin{proof}[Proof of Theorem \ref{thm:sec-template}]
Recall that given an order $\mu$, we denote by $\mu_j$ the set $\{\mu(1),\ldots,\mu(j)\}$, for every $j\in [n]$. 
By Proposition \ref{prop:certifications}, the solution $\alg$ satisfies \ref{feasible-a}-\ref{feasible-b}; the proof is analogous to the one in Lemma \ref{lem:iid-template-correct}.
Suppose that the value of $\tau$ has been realized, and let $t\geq \tau+1$.
For every order $\mu$, the expected contribution in $\alg$ of the agent $\mu(t)$ is equal to
\begin{align}
& \EE\left[w_{\mu(t)}(e_{\mu(t)}(\calP,\bm{w}(\mu_t)))\cdot \mathbb{1}\left[\bigwedge_{i=\tau+1}^{t-1} \calP_{\mu(i)}(\bm{w}(\mu_i)) \text{ does not block } \calP_{\mu(t)}(\bm{w}(\mu_t)) \right]\right] \notag\\
&=\hspace{-0.5cm}\sum_{(S,e)\in \mathcal{N}\cup \{(\bot,\bot)\}}\hspace{-0.7cm} w_{\mu(t)}(e)\cdot \PP\left[\calP_{\mu(t)}(\bm{w}(\mu_t))=(S,e)\wedge \bigwedge_{i=\tau+1}^{t-1} \calP_{\mu(i)}(\bm{w}(\mu_i)) \text{ does not block } (S,e) \right]\label{sec:transition1}
\end{align}
where the second equality holds by conditioning on the realization of $\calP(\bm{w}(\mu_t))$.
By Lemma \ref{lem:sec-template-prod}, when $\mu$ is a random order, for every $(S,e)\in \calN\cup \{(\bot,\bot)\}$ we have
\begin{align}
&\EE_{\mu}\left[w_{\mu(t)}(e)\cdot \PP\left[\calP_{\mu(t)}(\bm{w}(\mu_t))=(S,e)\wedge \bigwedge_{i=\tau+1}^{t-1} \calP_{\mu(i)}(\bm{w}(\mu_i)) \text{ does not block } (S,e) \right]\right]\notag \\
&\ge \EE_{\mu}\left[w_{\mu(t)}(e)\cdot \PP\left[\calP_{\mu(t)}(\bm{w}(\mu_t))=(S,e)\right]\right]\prod_{i=\tau+1}^{t-1}\left(1-\frac{k}{i}\right)_+.\label{sec:product2}
\end{align}
Then, the expected contribution of agent $\mu(t)$ can be lower bounded by
\begin{align}
&\sum_{(S,e)\in \mathcal{N}\cup \{(\bot,\bot)\}} \EE_{\mu}\left[w_{\mu(t)}(e)\cdot \PP\left[\calP_{\mu(t)}(\bm{w}(\mu_t))=(S,e)\right]\right]\prod_{i=\tau+1}^{t-1}\left(1-\frac{k}{i}\right)_+\notag\\
&= \EE_{\mu}\left[\EE\left[w_{\mu(t)}(e_{\mu(t)}(\calP,\bm{w}(\mu_t)))\right]\right] \prod_{i=\tau+1}^{t-1}\left(1-\frac{k}{i}\right)_+ \ge \frac{\gamma}{m}\prod_{i=\tau+1}^{t-1}\left(1-\frac{k}{i}\right)_+ \opt(\bm{w}),\label{sec:transition3}
\end{align}
where the first inequality holds from  \eqref{sec:transition1} and \eqref{sec:product2}, the equality is a consequence of the linearity of the expectation, and the second inequality holds by Lemma \ref{lem:sec-template-opt}.
Then, using \eqref{sec:transition3}, and when $\tau\sim \Bin(m,p_k)$, the expected weight of the solution $\alg$ computed by the algorithm from $\tau+1$ up to $m$ can be lower bounded as
\begin{align*}
& \gamma \opt(\bm{w})\cdot \EE_{\tau}\left[\frac{1}{m}\sum_{t=\tau+1}^m\prod_{i=\tau+1}^{t-1}\left(1-\frac{k}{i}\right)_+\right]\ge \gamma\alpha_k \opt(\bm{w}),
\end{align*}
where the inequality holds by \cite[Lemma 1, p. 8]{SotoTV2021}. This finishes the proof.
\end{proof}

\section{Guarantees via Polynomial-time Certificate Samplers}\label{sec:polytime-certifiers}
In this section, we show how to design polynomial-time certificate samplers for several independence systems, including hypergraph matchings, matroids, and matroid intersections. 
Together with our algorithmic templates from Section \ref{sec:algorithms}, we can provide new and improved guarantees for the online combinatorial assignment problem in independence systems.
Recall that the $k$-bounded combinatorial auction problem is captured by problem \eqref{eq:optimal-assignment} in $k$-hypergraph matchings.
In Section \ref{sec:polytime-matching}, we show a polynomial-time certificate sampler, based on linear programming, for this independence system.
Using this, we get the following guarantees for $k$-bounded combinatorial auctions.

\begin{theorem}\label{thm:dbca}
For the online $k$-bounded combinatorial auction problem, the following holds:
\begin{enumerate}[itemsep=0pt,label=(\alph*)]
    \item There exists a $(1-e^{-k})/k$-competitive algorithm in the prophet IID model.\label{dbca-iid}
    \item There exists a $1/(k+1)$-competitive algorithm in the prophet-secretary model, using a single sample per agent.\label{dbca-ps}
    \item There exists a $k^{-k/(k-1)}$-competitive algorithm in the secretary model for every $k\ge 2$ and a 1/e-competitive algorithm for $k=1$.\label{dbca-sec}
\end{enumerate}
Furthermore, all these algorithms run in polynomial time.
\end{theorem}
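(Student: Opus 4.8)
The plan is to reduce all three parts to the construction of a single object: a polynomial-time $(1,k)$-certificate sampler for the independence system of $k$-hypergraph matchings, equipped with the certifier of Example~\ref{example:hypergraph} (where $\calS=E$, the certificates are the pairs $(e,e)$, and $(e,e)$ blocks $(f,f)$ exactly when $e\cap f\ne\emptyset$). Once such a sampler $\calP$ is available, the three statements follow by plugging $\gamma=1$ into the algorithmic templates: Theorem~\ref{thm:iid-template} gives the $\gamma(1-e^{-k})/k=(1-e^{-k})/k$-competitive algorithm for prophet IID, proving \ref{dbca-iid}; Theorem~\ref{thm:sample-template} gives the $\gamma/(k+1)=1/(k+1)$-competitive algorithm for the single-sample prophet-secretary model, proving \ref{dbca-ps}; and Theorem~\ref{thm:sec-template}, run with $p=p_k$, gives the $\gamma\alpha_k=\alpha_k$-competitive algorithm for the secretary model, which is $k^{-k/(k-1)}$ for $k\ge 2$ and $1/e$ for $k=1$, proving \ref{dbca-sec}. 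Feasibility of the returned assignments is inherited from Proposition~\ref{prop:certifications} (as in Lemma~\ref{lem:iid-template-correct}), and polynomial running time of each template is inherited from that of $\calP$, since each template just runs $\calP$ at most $m$ times with polynomial overhead.

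The sampler is built from the natural assignment LP relaxation of \eqref{eq:optimal-assignment} for $k$-hypergraph matchings. Given a weight profile $\bm{w}$ for a set $A$ of agents, introduce variables $x_{a,e}\ge 0$ for each $a\in A$ and each $e$ in the support of $w_a$, the agent constraints $\sum_{e} x_{a,e}\le 1$ for every $a\in A$, the node-packing constraints $\sum_{a\in A}\sum_{e\ni v} x_{a,e}\le 1$ for every node $v$, and the objective $\max\sum_{a}\sum_{e} w_a(e)\,x_{a,e}$. Any feasible assignment in \eqref{eq:optimal-assignment} induces a feasible $0/1$ point of the same value, so the optimal LP value $\mathrm{LP}(\bm{w})$ satisfies $\mathrm{LP}(\bm{w})\ge \opt(\bm{w})$. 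Solve the LP in polynomial time to obtain an optimal vertex $x^\star$, and define $\calP$ as follows: independently for each agent $a\in A$, output the certificate $(e,e)$ with probability $x^\star_{a,e}$, and $(\bot,\bot)$ with the remaining probability $1-\sum_{e} x^\star_{a,e}$.

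It remains to check the two properties of Definition~\ref{def:good-sampler}. For the approximation property \ref{sampler-b}, $\EE[\sum_{a\in A} w_a(e_a(\calP,\bm{w}))]=\sum_{a}\sum_{e} w_a(e)\,x^\star_{a,e}=\mathrm{LP}(\bm{w})\ge \opt(\bm{w})$, so $\gamma=1$. For the blocking property \ref{sampler-c}, fix a certificate $(f,f)\in\calN$, i.e.\ an edge $f\in E$; by the definition of $\calB$ in Example~\ref{example:hypergraph}, $\calP_a(\bm{w})$ blocks $(f,f)$ precisely when $\calP_a(\bm{w})=(e,e)$ for some $e$ with $e\cap f\ne\emptyset$. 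Hence $\sum_{a\in A}\PP[(\calP_a(\bm{w}),(f,f))\in\calB]=\sum_{a}\sum_{e:\,e\cap f\ne\emptyset} x^\star_{a,e}\le \sum_{v\in f}\sum_{a}\sum_{e\ni v} x^\star_{a,e}\le \sum_{v\in f}1=|f|\le k$, where the first inequality is a union bound over the at most $k$ nodes of $f$ (each edge meeting $f$ contains at least one such node) and the second uses the node-packing constraints. Therefore $\calP$ is a $(1,k)$-certificate sampler, and applying the three template theorems finishes the proof.

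The one genuinely delicate point — which is exactly what Section~\ref{sec:polytime-matching} must pin down — is the computational model: the LP above has $\sum_{a\in A}|\mathrm{supp}(w_a)|$ variables, so it is solved in honest polynomial time when the valuations are given explicitly over the (polynomially many) bundles, while for succinctly represented valuations one instead solves/separates it using demand oracles. The second point to note, though it requires no extra work, is that the templates feed $\calP$ \emph{auxiliary} weight profiles — the resampled profile $\calR_t$ in Algorithm~\ref{alg:iid-template} and the sample-augmented profile $\calV(\mu,t)$ in Algorithm~\ref{alg:sample-template} — rather than the ``true'' one; this is harmless because Definition~\ref{def:good-sampler} quantifies over \emph{every} agent set $A$ and \emph{every} weight profile, and our construction simply solves the LP for whichever profile it is handed. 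Everything else is bookkeeping.
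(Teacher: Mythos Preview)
Your proposal is correct and is essentially the paper's own proof: it builds the same LP relaxation \ref{lp-hm}, rounds it independently per agent to get a $(1,k)$-certificate sampler for the certifier of Example~\ref{example:hypergraph} (this is exactly Lemma~\ref{lem:sampler-matching} and Algorithm~\ref{alg:sampler-matchings}), and then invokes Theorems~\ref{thm:iid-template}, \ref{thm:sample-template}, and~\ref{thm:sec-template}. Your side remarks on demand oracles go slightly beyond the paper's setup, where the hypergraph $G=(V,E)$ is fixed and explicit, so the LP is genuinely polynomial in $|V|,|E|,|A|$ without any oracle subtleties.
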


We prove Theorem \ref{thm:dbca} in Section \ref{sec:polytime-matching}.
Then, in Section \ref{sec:matroids}, we construct certificate samplers for matroids, defining the new concept of {\it $k$-directed certifier}. 

\begin{definition}
A tuple $(\mathcal{S},\mathcal{N},\mathcal{B})$ for an independence system $(E,\calI)$ is a $k$-directed certifier if it satisfies conditions \ref{cert-a}, \ref{cert-b}, and \ref{cert-c} in Definition \ref{def:certifier}, and the following extra conditions also hold:
%(we add them to conditions \ref{cert-a}, \ref{cert-b}, \ref{cert-c})
\begin{enumerate}[itemsep=0pt,label=(\alph*),start=4] 
    \item $\mathcal{S}=\mathcal{I}$ and $\mathcal{N}=\{(I,e)\colon I\in \mathcal{I}, e\in I\}$.\label{cert-d} 
    \item For every $I, J \in \mathcal{I}$ and every $f\in J$, we have $|\{e\in I\colon ((I,e),(J,f))\in \mathcal{B}|\leq k$. \label{cert-e} 
\end{enumerate}
\end{definition}

The concept of $k$-directed certifier is very similar to the $k$-forbidden technique developed by Soto et al. \cite{SotoTV2021} for the matroid secretary problem. Most of the $k$-forbidden algorithms in \cite{SotoTV2021} can be implemented by finding an appropriate directed certifier for the matroid and then running our algorithm. 
In Section \ref{sec:matroids}, we construct certificate samplers for matroids admitting a $k$-directed certifier.
Combined with our algorithmic framework from Section \ref{sec:algorithms}, we obtain the following result.
\begin{theorem}\label{thm:matroids}
Let $\calM$ be a matroid admitting a $k$-directed certifier.  
For the online combinatorial assignment problem in $\calM$  the following holds:
\begin{enumerate}[itemsep=0pt,label=(\alph*)]
    \item There exists a $(1-e^{-k})/k$-competitive algorithm in the prophet IID model.\label{matroid-iid}
    \item There exists a $1/(k+1)$-competitive algorithm in the prophet-secretary model, using a single sample per agent.\label{matroid-ps}
    \item There exists a $k^{-k/(k-1)}$-competitive algorithm in the secretary model for every $k\ge 2$ and a $1/e$-competitive algorithm for $k=1$.\label{matroid-sec}
\end{enumerate}
Furthermore, all these algorithms run in polynomial time.
\end{theorem}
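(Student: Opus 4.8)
The plan is to reduce the theorem to Theorems \ref{thm:iid-template}, \ref{thm:sample-template}, and \ref{thm:sec-template} by exhibiting, for a matroid $\calM=(E,\calI)$ equipped with a $k$-directed certifier $(\calS,\calN,\calB)$, a polynomial-time $(\gamma,k)$-certificate sampler with $\gamma=1$. Once such a sampler is built, parts \ref{matroid-iid}, \ref{matroid-ps}, and \ref{matroid-sec} follow immediately: plugging $\gamma=1$ into the three algorithmic templates yields competitiveness $(1-e^{-k})/k$, $1/(k+1)$, and $\alpha_k$ (which is $k^{-k/(k-1)}$ for $k\ge 2$ and $1/e$ for $k=1$), respectively; polynomial running time of the overall algorithms follows from polynomial running time of the sampler plus the fact that each template invokes the sampler $O(m)$ times per run and the blocking checks reduce to membership queries in $\calB$.

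The core construction is the certificate sampler itself. Given a weight profile $\bm{w}$ for the agent set $A$, first solve the \emph{offline} combinatorial assignment problem \eqref{eq:optimal-assignment} exactly in polynomial time — this is possible because for a matroid the problem \eqref{eq:optimal-assignment} reduces to a matroid intersection / maximum-weight matroid-constrained assignment, which is solvable in polynomial time (using e.g.\ weighted matroid intersection as noted in the Preliminaries). Let $M^\star\colon A\to E\cup\{\bot\}$ be an optimal assignment and let $I^\star=\{M^\star(a):a\in A\}\setminus\{\bot\}\in\calI$. For each agent $a$ with $M^\star(a)=e\ne\bot$, the sampler outputs the certificate $(I^\star,e)\in\calN$ (valid by condition \ref{cert-d}, since $e\in I^\star\in\calI$); for each agent with $M^\star(a)=\bot$, it outputs $(\bot,\bot)$. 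Note this sampler is in fact deterministic given $\bm{w}$ (we may break ties arbitrarily but consistently). The approximation property \ref{sampler-b} then holds with $\gamma=1$ with equality, since $\sum_a w_a(e_a(\calP,\bm w))=\sum_a w_a(M^\star(a))=\opt(\bm w)$.

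It remains to verify the blocking property \ref{sampler-c}: for every fixed certificate $(J,f)\in\calN$ we need $\sum_{a\in A}\PP[(\calP_a(\bm w),(J,f))\in\calB]\le k$. Since the sampler is deterministic, each probability is $0$ or $1$, so the sum counts the number of agents $a$ for which $(\calP_a(\bm w),(J,f))\in\calB$. Every such agent has $\calP_a(\bm w)=(I^\star,M^\star(a))$ with $M^\star(a)\in I^\star$, and distinct such agents receive distinct elements of $I^\star$ by feasibility \ref{feasible-a}; hence the count is at most $|\{e\in I^\star : ((I^\star,e),(J,f))\in\calB\}|$, which is at most $k$ by condition \ref{cert-e} of the $k$-directed certifier applied with $I=I^\star$ and $J$ (note $I^\star,J\in\calI=\calS$ and $f\in J$). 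This establishes \ref{sampler-c} with parameter $k$, so $\calP$ is a polynomial-time $(1,k)$-certificate sampler, and the three claimed competitiveness guarantees and polynomiality follow from Theorems \ref{thm:iid-template}, \ref{thm:sample-template}, and \ref{thm:sec-template}.

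The main obstacle is twofold. First, one must be careful that the offline problem \eqref{eq:optimal-assignment} over a matroid is genuinely polynomial-time solvable in the relevant generality — for single-minded agents this is weighted matroid intersection, and for general weight functions over $E$ one should argue the reduction (e.g.\ by a bipartite-style gadget matroid on agent–element pairs intersected with $\calM$) carefully, or invoke whatever offline subroutine the paper uses elsewhere. Second, and more subtly, the blocking count in \ref{sampler-c} must hold for \emph{every} $(J,f)\in\calN$, including $J$'s unrelated to $I^\star$; the argument above handles this because condition \ref{cert-e} is stated for all pairs $I,J\in\calI$, but if in some instantiation the directed certifier only bounds $|\{e\in I: ((I,e),(J,f))\in\calB\}|$ under extra hypotheses, one would need the sampler to randomize over optimal (or near-optimal) solutions to spread the blocking mass — this is presumably why the definition of certificate sampler permits randomization and only asks for expectation bounds. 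I would check that \ref{cert-e} as stated suffices before adding any randomization.
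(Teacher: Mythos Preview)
Your proposal is correct and follows essentially the same approach as the paper: you build the deterministic certificate sampler of Algorithm~\ref{alg:sampler-kdirected} (solve \eqref{eq:optimal-assignment} exactly via matroid intersection, output $(I^\star,M^\star(a))$), verify it is a $(1,k)$-certificate sampler exactly as in Lemma~\ref{lem:sampler-k-directed}, and then invoke Theorems~\ref{thm:iid-template}, \ref{thm:sample-template}, and \ref{thm:sec-template}. Your hedging in the last paragraph is unnecessary---the paper explicitly spells out the matroid-intersection reduction for polynomial solvability of \eqref{eq:optimal-assignment}, and condition~\ref{cert-e} is indeed stated for arbitrary $I,J\in\calI$, so no randomization is needed.
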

The proof of Theorem \ref{thm:matroids} is in Section \ref{sec:matroids}.
In particular, Theorem \ref{thm:matroids} holds for unitary matroids $(k=1)$, graphic, transversal and matching matroids $(k=2)$, and they hold for general $k$ in $k$-sparse matroids, $k$-framed matroids, $k$-exchangeable gammoids, and $k$-exchangeable matroidal packings as defined in \cite{SotoTV2021}.

In Section \ref{sec:combinations} we discuss certificate samplers for {\it matchoid} independence systems \cite{lovasz1980matroid,jenkyns1975matchoids}, which generalize matchings and matroids.
Given $(E_i,\calI_i)_{i\in [t]}$ an arbitrary collection of matroids whose ground sets do not necessarily coincide, the \emph{matchoid} associated to the collection is the independence system $(E,\calI)$ with $E=\bigcup_{i\in [t]}E_i$ and such that $X\subseteq E$ is independent if and only if for all $i\in [t]$, $X\cap E_i\in \calI_i$.
Note that if all $E_i$ are the same, this is simply the intersection of the $t$ matroids. This construction is usually called an $\ell$-matchoid if every element $e\in E$, is active in at  most $\ell$ of the matroids (i.e., $e$ belongs to at most $\ell$ sets $E_i$). Just like matching on bipartite graphs can be modeled as the intersection of 2 matroids, matchings on a general $k$-hypergraph $G=(V,E)$ can be modeled as a $k$-matchoid. Indeed, we can define a unitary matroid $(E_v,\calI_v)$ for every vertex $v\in V$ where $E_v=\{e\in E\colon e \text{ is incident to } v\}$ and $X\subseteq E_v$ is a member of $\calI_v$ if $|X|\leq 1$. Since every edge of the hypergraph has at most $k$ endpoints, this construction is indeed a $k$-matchoid.
We show the following guarantee.

\begin{theorem}\label{thm:matroid-intersection}
Let $(E,\calI)$ be an independence system obtained as either the intersection of $\ell$ matroids $(E,\calI_1),\ldots,  (E,\calI_\ell)$ where matroid $(E,\calI_i)$ admits a $k_i$-directed certifier, or as a matchoid such that for each $e\in E$, the $t(e)$ matroids involving element $e$ in the matchoid admit a directed certifier with parameters $k_1(e),\dots, k_{t(e)}(e)$.
Let $k=\sum_{i=1}^{\ell}k_i$ in the first case, and $k=\textstyle\max_{e\in E}\sum_{i=1}^{t(e)}k_i(e)$ in the second case. Then, for the online combinatorial assignment problem in $(E,\calI)$  the following holds:
\begin{enumerate}[itemsep=0pt,label=(\alph*)]
    \item There exists a $(1-e^{-k})/k$-competitive algorithm in the prophet IID model.\label{mi-iid}
    \item There exists a $1/(k+1)$-competitive algorithm in the prophet-secretary model, using a single sample per agent.\label{mi-ps}
    \item There exists a $k^{-k/(k-1)}$-competitive algorithm in the secretary model for every $k\ge 2$ and a $1/e$-competitive algorithm for $k=1$.\label{mi-sec}
\end{enumerate}
Furthermore, all these algorithms run in polynomial time.
\end{theorem}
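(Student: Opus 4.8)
The plan is to build, from the given $k_i$-directed certifiers, a single $(1,k)$-certificate sampler for the combined independence system $(E,\calI)$, and then invoke Theorems~\ref{thm:iid-template}, \ref{thm:sample-template}, and \ref{thm:sec-template} with $\gamma=1$. First I would assemble a certifier $(\calS,\calN,\calB)$ for $(E,\calI)$ by taking the coordinatewise product of the component certifiers $(\calI_i,\calN_i,\calB_i)$: set $\calS=\prod_i\calI_i$, declare $((I_1,\dots,I_\ell),e)$ a certificate whenever $e\in I_i$ for every matroid $i$ active on $e$ (the remaining coordinates being arbitrary members of $\calI_i$), and let $((I_1,\dots,I_\ell),e)$ block $((I'_1,\dots,I'_\ell),e')$ iff $e=e'$ or there is a matroid $i$ active on both $e$ and $e'$ with $((I_i,e),(I'_i,e'))\in\calB_i$. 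Properties~\ref{cert-a}--\ref{cert-b} of Definition~\ref{def:certifier} are immediate from this construction. For \ref{cert-c}, given a certification in $(\calS,\calN,\calB)$, its subsequence of certificates whose elements lie in a fixed ground set $E_m$ is, by the non-blocking condition, a certification of the $m$-th component certifier, so by \ref{cert-c} applied there the corresponding elements form an independent set of $\calI_m$; since $\{e_1,\dots,e_t\}\cap E_m\in\calI_m$ for every $m$, we get $\{e_1,\dots,e_t\}\in\calI$. The per-element blocking parameter will turn out to be $\sum_{i:\,e\in E_i}k_i$, which is $k$ in both the intersection case (every element active on all $\ell$ matroids) and the matchoid case by the theorem's definition of $k$.

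Next, I would design the certificate sampler. Solve the LP relaxation of \eqref{eq:optimal-assignment} with variables $x_{a,e}\ge 0$, agent constraints $\sum_e x_{a,e}\le 1$ for every $a\in A$, and, for every matroid $i$, the requirement that the marginal vector $z^i=(\sum_a x_{a,e})_{e\in E_i}$ lie in the independent-set polytope $P(\calI_i)$. This LP is solvable in polynomial time using the matroid-rank separation oracle, and plugging in $x_{a,M^\star(a)}=1$ for an optimal offline assignment $M^\star$ shows its value is at least $\opt(\bm w)$. Let $x^\star$ be an optimal LP solution, and for each matroid $i$ compute (in polynomial time) a decomposition $z^i=\sum_j\mu^i_j\,\mathbb{1}_{B^i_j}$ of the marginal into a convex combination of independent sets $B^i_j\in\calI_i$. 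The sampler then draws, independently across agents, $e_a$ with $\PP[e_a=e]=x^\star_{a,e}$ (and $e_a=\bot$ with the remaining probability); if $e_a\neq\bot$, then for each matroid $i$ active on $e_a$ it picks an index $j$ with $e_a\in B^i_j$ with probability $\mu^i_j/z^i_{e_a}$ (these weights sum to one since $z^i_{e_a}=\sum_{j:\,e_a\in B^i_j}\mu^i_j$), sets $I^a_i=B^i_j$, and outputs the certificate $((I^a_i)_i,e_a)$, which is legal because $e_a\in I^a_i$ for every active $i$.

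It then remains to check the $(1,k)$-certificate-sampler properties. For \ref{sampler-b}, independence of the draws gives $\EE[\sum_a w_a(e_a)]=\sum_{a,e}x^\star_{a,e}w_a(e)$, the LP value, which is at least $\opt(\bm w)$. For \ref{sampler-c}, fix a certificate $((J_i)_i,e)\in\calN$; by a union bound over the matroids active on $e$, it suffices to show $\sum_a\PP[((I^a_i,e_a),(J_i,e))\in\calB_i]\le k_i$ for each such $i$. Conditioning on the event $I^a_i=B^i_j$ and invoking property \ref{cert-e} of the $k_i$-directed certifier, the set $N_j=\{e'\in B^i_j:((B^i_j,e'),(J_i,e))\in\calB_i\}$ has $|N_j|\le k_i$; hence $\sum_a\PP[((I^a_i,e_a),(J_i,e))\in\calB_i]=\sum_j\sum_{e'\in N_j}\sum_a x^\star_{a,e'}\cdot\mu^i_j/z^i_{e'}=\sum_j\mu^i_j\,|N_j|\le\sum_j\mu^i_j\,k_i=k_i$, where the middle equality uses $\sum_a x^\star_{a,e'}=z^i_{e'}$. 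Summing the union bound over the matroids active on $e$ gives $\sum_a\PP[\calP_a(\bm w)\text{ blocks }((J_i)_i,e)]\le\sum_{i:\,e\in E_i}k_i\le k$. With this $(1,k)$-certificate sampler, Theorems~\ref{thm:iid-template}, \ref{thm:sample-template}, and \ref{thm:sec-template} yield parts~\ref{mi-iid}, \ref{mi-ps}, and \ref{mi-sec}, respectively, and all the steps above (LP solving, polytope decomposition, sampling, certifier verification) are polynomial-time.

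The main obstacle I expect is the coupling in the sampler: the per-agent draw of $e_a$ has to be correlated with the per-matroid convex decompositions so that the realized independent set $B^i_j$ always contains $e_a$, while the joint probabilities remain exactly $x^\star_{a,e'}\,\mu^i_j/z^i_{e'}$, which is what makes the telescoping $\sum_a\sum_{e'\in N_j}x^\star_{a,e'}/z^i_{e'}=|N_j|$ collapse cleanly; getting this bookkeeping right (and making sure the certificate only constrains the matroids active on its element, so the blocking parameter in the matchoid case is $\max_e\sum_{i:\,e\in E_i}k_i$ rather than $\sum_i k_i$) is the delicate part, whereas the coordinatewise verification of Definition~\ref{def:certifier} and the LP facts are routine.
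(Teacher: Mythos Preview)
Your proposal is correct and follows essentially the same route as the paper: you build the product certifier on $\prod_i\calI_i$ with blocking inherited coordinatewise from the $\calB_i$, solve the LP relaxation with matroid-polytope constraints on the marginals, sample $e_a\sim x^\star_{a,\cdot}$ and then, for each matroid active on $e_a$, sample an independent set from the Carath\'eodory decomposition conditioned on containing $e_a$, and finally use property~\ref{cert-e} of each $k_i$-directed certifier together with $\sum_a x^\star_{a,e'}=z^i_{e'}$ to collapse the blocking sum to $\sum_j\mu^i_j|N_j|\le k_i$. The only cosmetic differences from the paper are that it fixes the inactive coordinates to $\emptyset$ rather than leaving them arbitrary, and it omits your explicit ``$e=e'$'' clause in the blocking relation (which is redundant anyway, since any matroid active on $e$ already forces $((I_i,e),(I'_i,e))\in\calB_i$ by property~\ref{cert-b}).
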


The proof of Theorem \ref{thm:matroid-intersection} is in Section \ref{sec:combinations}.

\subsection{Certificate Samplers for Hypergraph Matchings}\label{sec:polytime-matching}

Let $(E,\calI)$ be the system of $k$-hypergraph matchings in $G=(V,E)$.
We consider the certifier constructed in Example \ref{example:hypergraph}, i.e., $\calS=E$, $\mathcal{N}=\{(e,e)\colon e\in E\}$, and $\calB=\{((e,e),(f,f))\colon e\cap f\ne \emptyset\}$.
Namely, every edge defines its own certificate, and we have that $(e,e)$ blocks $(f,f)$ if $e$ and $f$ have at least one node in common.
Given a weight profile $\bm{w}$ for $A$, consider the following linear program: 
\begin{center}
\begin{minipage}{0.7\textwidth}
\begin{align}
\max \quad\;  \sum_{a\in A}\sum_{e \in E} w_a(e) &x_{a}(e) \tag*{\mbox{HM$(\bm{w})$}} \label{lp-hm} \\
\text{s.t.} \quad  \sum_{a\in A}\sum_{e \in \delta(v)} x_{a}(e) &\leq 1 \quad \text{ for every } v \in V, \notag\\
 x_{a}(\bot)+\sum_{e \in E} x_{a}(e) &= 1 \quad \text{ for every } a\in A,\notag\\
 x_{a}(e) &\geq 0 \quad \text{ for every } a\in A,\text{ and every } e \in E\cup \{\bot\}.\notag
\end{align}
\end{minipage}
\end{center}
\vspace{.3cm}

In the linear program \ref{lp-hm}, we have one variable $x_{a}(e)$ for each $a\in A$ and each $e\in E\cup \{\bot\}$, and it models whether $e\in E\cup \{\bot\}$ is assigned to $a\in A$.
This linear program corresponds to a linear relaxation of the problem \eqref{eq:optimal-assignment} over the $k$-hypergraph matchings of $G$.\footnote{We assume that the linear program has a unique optimal solution, e.g., solved by perturbing the weights lexicographically using names of edges and weight functions.}
The first set of constraints ensures that in every node we have a total fractional allocation of at most one, and the second set of constraints encodes that every $a\in A$ has a probability distribution over $E\cup \{\bot\}$.
Consider the following randomized algorithm.
\begin{algorithm}[H]
    \begin{algorithmic}[1]
    \Require{A weight profile for $A$}
    \Ensure{A collection of certificates in $\calN$}
    \State Find the unique optimal solution $x^{\star}$ of the linear program \ref{lp-hm}.
    \State For every $a\in A$, sample $e_a \in E\cup \{\bot\}$ according to the probability distribution $(x^{\star}_a(e))_{e\in E\cup \{\bot\}}$.
    \State Return $(e_a,e_a)$ for every $a\in A$. 
    \end{algorithmic}
    \caption{Certificate sampler for hypergraph matchings}
    \label{alg:sampler-matchings}
\end{algorithm}

\begin{lemma}\label{lem:sampler-matching}
Algorithm \ref{alg:sampler-matchings} is a $(1,k)$-certificate sampler for $k$-hypergraphs matchings, and it runs in polynomial time.
\end{lemma}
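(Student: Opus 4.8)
The plan is to verify the three certifier conditions \ref{cert-a}--\ref{cert-c} for the tuple $(\calS,\calN,\calB)$ from Example~\ref{example:hypergraph} (these are already checked there, so I would only recall them), and then establish the two requirements of Definition~\ref{def:good-sampler}: the approximation guarantee with $\gamma = 1$ and the blocking guarantee with parameter $k$. First I would argue polynomial running time: the linear program \ref{lp-hm} has one variable per agent-edge pair plus one per agent, and polynomially many constraints, so its optimal solution $x^\star$ can be computed in polynomial time (and the lexicographic perturbation ensuring uniqueness only adds polynomially many bits); drawing each $e_a$ from the distribution $(x^\star_a(e))_{e\in E\cup\{\bot\}}$ is clearly polynomial-time since that vector is a genuine probability distribution by the second constraint of \ref{lp-hm}.

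For the approximation property \ref{sampler-b}, I would observe that $\EE[w_a(e_a(\calP,\bm w))] = \sum_{e\in E\cup\{\bot\}} x^\star_a(e)\, w_a(e) = \sum_{e\in E} x^\star_a(e)\, w_a(e)$ (using $w_a(\bot)=0$), so summing over $a$ gives exactly the optimal LP value of \ref{lp-hm}. Since \ref{lp-hm} is a relaxation of the integer program \eqref{eq:optimal-assignment} over $k$-hypergraph matchings — every feasible assignment $M$ yields a $0/1$ point satisfying all constraints — its optimum is at least $\opt(\bm w)$, which gives $\EE[\sum_{a} w_a(e_a(\calP,\bm w))] \ge \opt(\bm w)$, i.e.\ $\gamma = 1$.

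For the blocking property \ref{sampler-c}, I would fix a certificate $(S,e) = (f,f)\in \calN$, so $f\in E$, and bound $\sum_{a\in A}\PP[(\calP_a(\bm w),(f,f))\in\calB]$. By the definition of $\calB$ in Example~\ref{example:hypergraph}, $(\calP_a(\bm w),(f,f))=((e_a,e_a),(f,f))\in\calB$ exactly when $e_a\cap f\neq\emptyset$, equivalently when $e_a\in \bigcup_{v\in f}\delta(v)$. Hence
\begin{align*}
\sum_{a\in A}\PP[(\calP_a(\bm w),(f,f))\in\calB]
&= \sum_{a\in A}\sum_{v\in f}\sum_{e\in\delta(v)} \PP[e_a = e]
= \sum_{v\in f}\sum_{a\in A}\sum_{e\in\delta(v)} x^\star_a(e)\\
&\le \sum_{v\in f} 1 = |f| \le k,
\end{align*}
where I have used that an edge $e\ni v$ that also touches another vertex of $f$ is counted once for each such vertex (so the displayed bound is an over-count, still at most $k$), then the first constraint of \ref{lp-hm} at each node $v$, and finally $|f|\le k$ since $G$ is a $k$-hypergraph. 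That completes the verification that $\calP$ is a $(1,k)$-certificate sampler.

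The only mild subtlety — and the step I would be most careful about — is the double-counting in the blocking estimate: an edge $e$ incident to several vertices of $f$ contributes to several $\delta(v)$ terms, so strictly $\sum_{a}\PP[e_a\cap f\neq\emptyset] \le \sum_{v\in f}\sum_{a}\sum_{e\in\delta(v)}x^\star_a(e)$ is an inequality, not an equality; but since the right-hand side is still bounded by $|f|\le k$ via the node constraints, this is harmless. Everything else is a direct unpacking of definitions, so I expect no real obstacle.
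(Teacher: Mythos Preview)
Your proposal is correct and follows essentially the same approach as the paper: use that \ref{lp-hm} is a relaxation to get $\gamma=1$, and bound the blocking probability by summing over the vertices of $f$ and invoking the node constraints to get $|f|\le k$. The only cosmetic issue is the first equality in your displayed chain for the blocking bound, which (as you yourself note afterward) should be an inequality because an edge meeting several vertices of $f$ is over-counted; the paper writes this step directly as $\sum_{a}\sum_{e:e\cap f\neq\emptyset}x^\star_a(e)\le \sum_{v\in f}\sum_{a}\sum_{e\in\delta(v)}x^\star_a(e)$, so just replace that first ``$=$'' by ``$\le$'' and you match the paper exactly.
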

\begin{proof}
It is clear that the algorithm runs in polynomial time on a weight profile $\bm{w}$.
Note that 
   \begin{align*}
       \sum_{a\in A} \EE[w_a(e_a)]&=\sum_{a\in A}\sum_{e\in E\cup \{\bot\}} w_a(e)x^{\star}_a(e)\ge \sum_{a\in A}\sum_{e\in E} w_a(e)x^{\star}_a(e)\geq \opt(\bm{w}),
   \end{align*}
   where we have used that $w_a(\bot)=0$ for all $a$ and that the \ref{lp-hm} is a relaxation of the maximum weight feasible assignment problem \eqref{eq:optimal-assignment}, so its value is at least $\opt(\bm{w})$.  
   Thus, condition \ref{sampler-b} in Definition \ref{def:good-sampler} holds with $\gamma=1$. 
   Now let $(f,f)$ be any certificate. We have that
   \begin{align*}
    \sum_{a\in A}\PP[((e_a,e_a),(f,f))\in \calB]&= \sum_{a\in A} \sum_{e\in E: e\cap f\ne \emptyset } x_{a}(e)\\
    &\leq \sum_{a\in A}\sum_{v\in V: v\in f} \sum_{e\in \delta(v)} x_{a}(e)\leq \sum_{v\in V: v\in f} \sum_{a\in A}\sum_{e\in \delta(v)} x_{a}(e)\leq |f|\le k,
    \end{align*}
    where the first inequality holds by decomposing the inner summation over each node incident to the edge $f$, the second holds by exchanging the summation order,
    the third inequality holds by the first set of constraints in the linear program \ref{lp-hm}, and the last inequality follows since every edge in the hypergraph has size at most $k$.
    Therefore, condition \ref{sampler-c} in Definition \ref{def:good-sampler} holds.
\end{proof}

\begin{proof}[Proof of Theorem \ref{thm:dbca}]
The proof holds directly by calling the $(1,k)$-certificate sampler $\calP$ of Lemma \ref{lem:sampler-matching}: The guarantee \ref{dbca-iid} in the IID model holds by using Theorem \ref{thm:iid-template}, the guarantee \ref{dbca-ps} in the prophet-secretary model holds by using Theorem \ref{thm:sample-template}, and the guarantee \ref{dbca-sec} in the secretary model holds by using Theorem \ref{thm:sec-template}.
\end{proof}
\subsection{\texorpdfstring{$k$}{k}-directed Certifiers for Matroids }\label{sec:matroids}

In this section we consider the case of independence systems admitting $k$-directed certifiers. The main result of this section is as follows. Let $(\mathcal{S},\mathcal{N},\mathcal{B})$ be a $k$-directed certifier for an independence system $(E,\calI)$. 
Consider the following deterministic algorithm:

\begin{algorithm}[H]
    \begin{algorithmic}[1]
    \Require{A weight profile $\bm{w}$ for $A$}
    \Ensure{A collection of certificates in $\calN$}
    \State Find an optimal solution $M^{\star}\colon A \to E\cup \{\bot\}$ of the combinatorial assignment problem \eqref{eq:optimal-assignment}. Let $I(\bm{w})=\{M^{\star}(a)\colon a\in A\}$ be the associated independent set.
    \State For every $a\in A$ with $M^{\star}(a)=\{\bot\}$ return $\mathcal{P}_a=(\bot,\bot)$, and for every $a\in A$ with $M^{\star}(a)\in I(\bm{w})$, return $\mathcal{P}_a=(I(\bm{w}),M^{\star}(a))$. 
    \end{algorithmic}
    \caption{Certificate sampler for $k$-directed certifiers}
    \label{alg:sampler-kdirected}
\end{algorithm}

\begin{lemma}\label{lem:sampler-k-directed}
Algorithm \ref{alg:sampler-kdirected} is a $(1,k)$-certificate sampler for $(E,\calI)$ whenever $(\mathcal{S},\mathcal{N},\mathcal{B})$ is a $k$-directed certifier. Furthermore, if the combinatorial assignment problem \eqref{eq:optimal-assignment} in $(E,\calI)$ can be solved in polynomial time, then the certificate sampler also runs in polynomial time.
\end{lemma}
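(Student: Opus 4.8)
The plan is to check directly that Algorithm~\ref{alg:sampler-kdirected} meets the three requirements listed in Definition~\ref{def:good-sampler} for a $(1,k)$-certificate sampler, exploiting that the algorithm is deterministic, so every probability appearing in Definition~\ref{def:good-sampler} is $0$ or $1$ and every expectation is an actual value. Fix a weight profile $\bm w$ for $A$, let $M^\star$ be the optimal assignment of \eqref{eq:optimal-assignment} computed in Line~1, and let $I=I(\bm w)=\{M^\star(a)\colon a\in A\}\setminus\{\bot\}$; since $M^\star$ satisfies \ref{feasible-b}, we have $I\in\calI$, and since it satisfies \ref{feasible-a}, the map $a\mapsto M^\star(a)$ is injective on $\{a\in A\colon M^\star(a)\ne\bot\}$. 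Note also that for every $a$ one has $e_a(\calP,\bm w)=M^\star(a)$: this is immediate when $M^\star(a)\in I$, and when $M^\star(a)=\bot$ the output $(\bot,\bot)$ gives $e_a(\calP,\bm w)=\bot=M^\star(a)$.

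First I would dispatch the two easy conditions. For running time: once $M^\star$ and $I$ are known, Line~2 just emits, for each of the $m$ agents, a single certificate $(\bot,\bot)$ or $(I,M^\star(a))\in\calN$, which is polynomial; and by hypothesis $M^\star$ can be found in polynomial time, so the sampler is polynomial-time. For the approximation property~\ref{sampler-b}: using $e_a(\calP,\bm w)=M^\star(a)$ and $w_a(\bot)=0$, we get $\sum_{a\in A}w_a(e_a(\calP,\bm w))=\sum_{a\in A}w_a(M^\star(a))=\opt(\bm w)$, so~\ref{sampler-b} holds with $\gamma=1$ (indeed with equality).

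The substantive step is the blocking property~\ref{sampler-c}, and here I would invoke conditions~\ref{cert-d} and~\ref{cert-e} of a $k$-directed certifier. Fix a certificate $(S,e)\in\calN$; by~\ref{cert-d} we may write it as $(J,f)$ with $J\in\calI$ and $f\in J$. Because the sampler is deterministic, $\sum_{a\in A}\PP[(\calP_a(\bm w),(J,f))\in\calB]$ equals the number of agents $a$ whose certificate blocks $(J,f)$. Agents assigned $(\bot,\bot)$ block nothing, so this count is the number of agents $a$ with $M^\star(a)\in I$ and $((I,M^\star(a)),(J,f))\in\calB$; by the injectivity of $M^\star$ on its support this equals $|\{g\in I\colon ((I,g),(J,f))\in\calB\}|$, which is at most $k$ by~\ref{cert-e}. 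Hence~\ref{sampler-c} holds with parameter $k$, and combining the three parts finishes the proof.

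I do not expect a real obstacle: the argument is essentially unwinding definitions. The one point that needs care is that condition~\ref{cert-e} bounds a ``blocking budget'' counted over the elements of the independent set $I$, whereas~\ref{sampler-c} asks for a bound counted over \emph{agents}; the translation between the two is exactly the feasibility condition~\ref{feasible-a}, which forces distinct agents (with a non-$\bot$ image) to be assigned distinct elements of $I$.
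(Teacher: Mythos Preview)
Your proof is correct and follows essentially the same approach as the paper's: verify the approximation property by summing $w_a(M^\star(a))$ to recover $\opt(\bm w)$, and verify the blocking property by reducing the sum over agents to a count of elements $e\in I(\bm w)$ with $((I(\bm w),e),(J,f))\in\calB$, then invoke condition~\ref{cert-e}. If anything, you are more explicit than the paper in spelling out why the agent-indexed sum equals the element-indexed count (via the injectivity of $M^\star$ on its support, i.e., condition~\ref{feasible-a}).
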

\begin{proof}
Recall that from our notation in Definition \ref{def:certificate-sampler}, for every $a$ with $M^{\star}(a)\in I(\bm{w})$ we have $e_a(\calP,\bm{w})=M^{\star}(\bm{w})$.
Then, we have
$\sum_{a\in A}w_a(e_a(\mathcal{P},\bm{w}))=\sum_{a\in A} w_a(M^*(a))=\OPT(\bm{w}),$
and so condition \ref{sampler-b} in Definition \ref{def:good-sampler} holds with $\gamma=1$.
Now, let $(J,f)\in \mathcal{N}$ be any certificate. Since the algorithm is deterministic, and $(\mathcal{S},\mathcal{N},\mathcal{B})$ is a $k$-directed certifier, we have 
\begin{align*}
\sum_{a\in A}\PP[(\mathcal{P}_a(\bm{w}),(J,f))\in \mathcal{B}] &= \sum_{e\in I(\bm{w})} \mathbb{1}[((I(\bm{w}),e),(J,f))\in \mathcal{B}]\\
&= |\{e\in I(\bm{w})\colon ((I(\bm{w}),e),(J,f))\in \mathcal{B}\}| \leq k,
\end{align*}
and therefore condition \ref{sampler-c} in Definition \ref{def:good-sampler} holds.
\end{proof}

Given a matroid $\mathcal{M}=(E,\calI)$ and a weight profile $\bm{w}$ for a set $A$ of agents, we can solve in polynomial time the combinatorial assignment problem \eqref{eq:optimal-assignment} for $\bm{w}$ by modeling it as a matroid intersection. 
Indeed, the independence system $\mathcal{M}_A$ with ground set $E\times A$, where sets $F=\{(f_1,a_1), \dots (f_s,a_s)\}$ are independent if and only if $(a_1,\dots, a_s)$ are all different, and $(f_1,\dots, f_s)$ are all different with $\{f_1,\dots, f_s\}\in \calI$ is the intersection of a partition matroid associated to the agents, and a parallel extension of the matroid $\mathcal{M}$, which is also a matroid. For any maximum weight independent set $F^*=\{(f_1,a_1), \dots (f_s,a_s)\}$ of $\mathcal{M}_A$, the assignment $M\colon A\to E\cup \{\bot\}$ given by $M(a)=f_i$ if $a=a_i$ for $i\in [s]$ and $M(a)=\bot$ if $a\in A\setminus \{a_1,\dots, a_s\}$ is optimal for the combinatorial assignment problem \eqref{eq:optimal-assignment}. 
Below we describe $k$-directed certifiers for some basic matroids. 

\noindent{\bf Unitary partition matroids.} Consider the partition matroid $(V,\calI)$ defined by a collection of disjoint nonempty parts $V_1,\dots, V_\ell$. The ground set is $V=\bigcup_{i=1}^\ell V_i$. A set $I\subseteq A$ is independent if and only if $|I\cap A_i|\leq 1$ for all $i\in [\ell]$. 
We claim that $(\mathcal{S},\mathcal{N},\mathcal{B})$ given by $\calS=\calI$, $\calN=\{(I,v)\colon I\in \calI, v\in I)$ and $\calB=\{((I,v),(I',v'))\colon$
$\text{$v$ and $v'$ are in the same $V_i$}\}$ is a $1$-directed certifier. Indeed, conditions \ref{cert-a}, \ref{cert-b}, \ref{cert-d} are satisfied by construction. Each certification $(I_1,v_1),\ldots,(I_t,v_t)$ is such that for every $1\le i<j\le t$ the elements $v_i$ and $v_j$ are in different sets of $\{V_q\}_{q\in \ell}$. Therefore, $\{v_1,\dots, v_t\}$ contains elements from different parts, and so, it is independent. Thus, property \ref{cert-c} holds.

Finally, for every $I, I'\in \mathcal{I}$ and every element $v'\in I'$, we have 
$|\{v\in I\colon ((I,v),(I',v'))\in \mathcal{B}\}|=
|\{v\in I\colon v \text{ is in the same part as $v'$}\}| \leq 1,$
since every $I$ contains at most one element from each part. So part \ref{cert-e} holds and $(\mathcal{S},\mathcal{N},\mathcal{B})$ is a 1-directed certifier.

\noindent{\bf Graphic matroids.}  The graphic matroid associated with a loopless graph $G=(V,E)$ is the independent system $(E,\mathcal{I})$ where $F\subseteq E$ is an independent in $\calI$ if $F$ is acyclic. Let  $\text{or}(F)\subseteq V\times V$ be the orientation of $F$ obtained by labeling the nodes in $[n]$ as $v_1,v_2,\ldots,v_n$, rooting each connected component $C$ of $(V,F)$ (which are trees) on the node $r(C)$ with smallest label, and orient all edges of $C$ away from $r(C)$. For each $e\in F$, let $(e^-_F,e^+_F)\in \text{or}(F)$ its orientation. Consider $\calS=\calI$ and $(\mathcal{N},\calB)$ be the digraph defined as follows: $\mathcal{N}=\{(F,e)\colon F\in \calI, e\in F\},$ and $\calB=\{(F,e),(F',e'))\colon e^+_F \text{ is a node of } e'\}\}$.
Properties \ref{cert-a}, \ref{cert-b} and \ref{cert-d} are satisfied by construction. For every certification $(F_1,e_1),\ldots, (F_t,e_t)$, and every $i$, the \emph{head} of edge $e_i$, i.e. the node $(e_i)^+_{F_i}$, is not covered by any of the following edges of the sequence, and then $e_i$ cannot close a cycle with $\{e_{i+1},\dots, e_{t}\}$. By induction, this implies that $\{e_1,\dots, e_t\}$ is acyclic and, therefore, $\{e_1,\ldots,e_t\}\in \calI$. This shows property \ref{cert-c}. 

Finally, for every $F, F'\in \calI$ and every edge $e'\in F'$, we have
$|\{e\in F\colon ((F,e),(F',e'))\in \mathcal{B}\}|=|\{e\in F\colon e^+_F \text{ is a node of } e'\}|\leq 2,$
since in the orientation $\text{or}(F)$, every node $v$ has indegree at most 1. Hence, for every node $v$ of $e'$ there is at most one arc oriented toward $v$ in $\text{or}(F)$. The value 2 appears since $e'$ has two endpoints. This shows that $(\calS,\calN,\calB)$ is a 2-directed certifier.

\noindent{\bf Transversal matroids and matching matroids.} For transversal matroids,  \cite{SotoTV2021}, gave a 1-forbidden algorithm based on the bipartite graph representation of the matroid: the transversal matroid associated to the bipartite graph $(L\cup R, E)$ is the system $(L,\mathcal{I})$ where $X\subseteq L$ is independent if and only if there exist a matching $M_X$ covering all vertices in $X$. But the natural certifier $(\calS,\calN,\calB)$ given by $\calS=\calI$, $\calN=\{(X,v)\colon X\in \calI, v\in X\}$ and $\calB=\{((X,v),(X',v'))\colon M_X(v)\cap M_{X'}(v')\neq \emptyset\}$, where $M_X(v)$ (resp. $M_{X'}(v')$) is the unique edge in $M_X$ covering $v$ (resp. in $M_{X'}$ covering $v'$) is only a 2-directed system. This difference is explained because in the single-minded secretary problem over transversal matroids (which is the problem studied in \cite{SotoTV2021}) the vertices on the right hand side of the graph may appear multiple times as part of a potential edge to be added, but the vertices of the left hand side ($v\in L$) can only appear once (they only appear at the moment the agent associated to $v$ appears). However, in the combinatorial assignment version, the node $v\in L$ may also appear multiple times, as different agents may assign high weight on them.
We note that a similar construction as above works for matching matroids in which the ground set is the vertex set of a graph and the independent sets are those sets of vertices that can be covered by a matching of the graph. So, both matching and transversal matroids, admit 2-directed certifiers.

\noindent{\bf Other matroids.} By adapting the constructions of \cite{SotoTV2021}, we can also get $k$-directed certifiers for $k$-sparse matroids, $k$-framed matroids, $k$ exchangeable gammoids and $k$ exchangeable matroidal packings. However the constructions used for laminar and semiplanar matroids cannot be used directly as they require to construct a $k$-directed certifier on the fly using information revealed during the agents' arrival process. Even though our framework allows for this type of constructions we won't seek them in this article.

\begin{proof}[Proof of Theorem \ref{thm:matroids}]
The proof is exactly the same as that of Theorem \ref{thm:dbca} but using the $(1,k)$-certificate sampler guaranteed by Lemma \ref{lem:sampler-k-directed}  instead.
\end{proof}

\subsection{Combining Certifiers: Matroid Intersection and Matchoids}\label{sec:combinations}

In this section, we consider the online combinatorial assignment problem in the matchoid $(E,\calI)$ obtained from the matroids $(E_i,\calI_i)_{i\in [\ell]}$, when the $i$-th matroid admits a $k_i$-directed certifier $(\calI_i,\calN_i,\calB_i)$. 
While it is possible to construct a certifier and a certificate sampler for $(E,\calI)$ achieving the sought guaranteed by intersecting (in an appropriate way) the certificates on each of the matroids, in general, this construction does not yield polynomial-time algorithms, since the combinatorial assignment problem for the matchoid is in general $\NP$-hard. 
Instead of intersecting the certificates, we will put them together in a matchoid certifier $(\calS,\calN,\calB)$ defined as follows: $\calS=\prod_{i=1}^{\ell} \calI_i,$
\begin{align*}
\calN&=\{(I_1,\dots, I_{\ell}; e)\in \calS\times E \colon \forall i\in [\ell]\, (e\in E_i \Rightarrow e\in I_i\in \calI_i) \text{ and } (e\not\in E_i \Rightarrow I_i=\emptyset)\}, \\
\calB&=\{((I_1,\dots, I_{\ell}; e),(I'_1,\dots, I'_{\ell}; e'))\in \calN\times \calN \colon \exists i\in [\ell]\,  e\in E_i, e'\in E'_i, ((I_i,e),(I'_i,e'))\in \calB_i\}.
\end{align*}

In other words, the certificates for any $e\in E$ on the matchoids consists of putting together the certificates $(I_i,e_i)$ for each of the matroids in a certificate bundle $(I_1,\dots, I_\ell;e)$, except that whenever $e$ is not active in matroid $i$, we replace $I_i$ by $\emptyset$. Furthermore a certificate bundle $(I_1,\dots, I_\ell;e)$ blocks another certificate bundle $(I'_1,\dots, I'_\ell;e')$ if there is a matroid $(E_i,\calI)$ where both $e$ and $e'$ are active and the certificate $(I_i,e)$ for $e$ blocks the certificate $(I'_i,e')$ for $e'$.

\begin{lemma} The triple $(\calS,\calN,\calB)$ defined above is indeed a certifier for the matchoid $(E,\calI)$    
\end{lemma}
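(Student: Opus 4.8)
The plan is to verify the three defining properties \ref{cert-a}, \ref{cert-b}, \ref{cert-c} of Definition \ref{def:certifier} directly from the construction. Property \ref{cert-a} is immediate: by definition $\calN\subseteq \calS\times E$ and $\calB\subseteq \calN\times\calN$, since I explicitly wrote the certificates and blocking pairs as subsets of these products. For \ref{cert-b}, I would take $e\in E$ and two certificates $(I_1,\dots,I_\ell;e),(I'_1,\dots,I'_\ell;e)\in\calN$ for the \emph{same} element $e$. Since $e$ is active in at least one matroid (otherwise $e$ is a loop of every $E_i$ containing it and, more to the point, $e$ must be active somewhere for a certificate to exist in the matchoid — I should check this, but in the worst case $e$ being active in no matroid would make $\{e\}$ trivially independent, and then I need a separate argument; the cleaner route is to note that if $e$ is active in $E_i$, then property \ref{cert-b} of the underlying $k_i$-directed certifier $(\calI_i,\calN_i,\calB_i)$ gives $((I_i,e),(I'_i,e))\in\calB_i$, hence by the definition of $\calB$ the bundles block each other). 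If $e$ is active in \emph{no} $E_i$, then $\{e\}\in\calI$ only if $e$ is not a loop of any matroid it belongs to, but $e\in E=\bigcup E_i$ means $e$ belongs to some $E_i$ and is thus active there — so this degenerate case does not arise, and \ref{cert-b} holds.

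The substantive step is \ref{cert-c}. I would take a sequence $(I^{(1)}_1,\dots,I^{(1)}_\ell;e_1),\dots,(I^{(t)}_1,\dots,I^{(t)}_\ell;e_t)\in\calN$ that is a certification, i.e.\ no earlier bundle blocks a later one, and show $\{e_1,\dots,e_t\}\in\calI$. By the definition of the matchoid, it suffices to fix an arbitrary matroid index $i\in[\ell]$ and prove that $\{e_1,\dots,e_t\}\cap E_i\in\calI_i$. Let $e_{j_1},\dots,e_{j_s}$ (with $j_1<\dots<j_s$) be the subsequence of elements active in $E_i$. For each such $e_{j_r}$ the bundle contains a genuine certificate $(I^{(j_r)}_i,e_{j_r})\in\calN_i$ by the definition of $\calN$. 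I claim $(I^{(j_1)}_i,e_{j_1}),\dots,(I^{(j_s)}_i,e_{j_s})$ is a certification in the $i$-th certifier: for $r<r'$, if $(I^{(j_r)}_i,e_{j_r})$ blocked $(I^{(j_{r'})}_i,e_{j_{r'}})$ in $\calB_i$, then since both $e_{j_r},e_{j_{r'}}$ are active in $E_i$, the definition of $\calB$ would force the bundle at position $j_r$ to block the bundle at position $j_{r'}$, contradicting that the original sequence is a certification (as $j_r<j_{r'}$). Hence by property \ref{cert-c} for $(\calI_i,\calN_i,\calB_i)$ we get $\{e_{j_1},\dots,e_{j_s}\}\in\calI_i$, i.e.\ $\{e_1,\dots,e_t\}\cap E_i\in\calI_i$. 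Since $i$ was arbitrary, $\{e_1,\dots,e_t\}\in\calI$, proving \ref{cert-c}.

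The main obstacle — really a bookkeeping subtlety rather than a genuine difficulty — is handling elements that repeat in the sequence and the interaction with the ``$I_i=\emptyset$ when $e\notin E_i$'' convention: I must be careful that when I pass to the subsequence active in $E_i$, the certificates I extract are exactly the members of $\calN_i$ (which the definition of $\calN$ guarantees), and that the loop property \ref{cert-b} of $\calB_i$ ensures repeated elements within the active subsequence still block each other, so property \ref{cert-c} of the $i$-th certifier applies verbatim. One more point to state explicitly: the elements $e_1,\dots,e_t$ need not be distinct at this stage — distinctness (and hence $|\{e_1,\dots,e_t\}|=t$) will follow later from Proposition \ref{prop:certifications} via property \ref{cert-a}, so I do not need it here; I only need the set $\{e_1,\dots,e_t\}$ to be independent, which is what the argument above delivers.
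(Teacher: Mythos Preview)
Your proposal is correct and follows essentially the same approach as the paper's proof: property \ref{cert-a} is immediate, property \ref{cert-b} follows by picking any $i$ with $e\in E_i$ and invoking property \ref{cert-b} of the $i$-th directed certifier, and property \ref{cert-c} is obtained by fixing $i$, passing to the subsequence of certificates with $e_j\in E_i$ (equivalently, with $I^{(j)}_i\neq\emptyset$, which is how the paper phrases it), and applying property \ref{cert-c} of $(\calI_i,\calN_i,\calB_i)$. Your extra bookkeeping remarks about repeated elements and the $I_i=\emptyset$ convention are more cautious than necessary but harmless.
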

\begin{proof}
Property \ref{cert-a} holds by construction. Consider certificate bundles $(I_1,\dots, I_\ell;e)$, $(J_1,\dots, J_\ell; e) \in \calN$ for the same element $e\in E$. Let $i$ be any index such that $e\in E_i$. Then, by definition of $\calN$, $e\in I_i \in \calI_i$ and $e\in J_i \in \calI_i$. But since $(\calI_i,\calN_i,\calB_i)$ is a $k_i$-directed certifier,  properties \ref{cert-d} and \ref{cert-b} for those certifiers implies that $(I_i,e)\in \calN_i$ and $(J_i,e)\in \calN_i$, and then $((I_i,e),(J_i,e))\in \calB_i$. This shows that $(E,\calI)$ satisfies property \ref{cert-b}. 
Let now $(I^{1}_1,\dots, I^{1}_\ell;e_1), (I^{2}_1,\dots, I^{2}_\ell; e_2), \dots , (I^{t}_1,\dots, I^{t}_\ell; e_t)$ be a certification. Let $i\in [\ell]$ be any index in which, and let $I^{s_1}_i, \dots, I^{s_q}_i$ be the subsequence of $I^1_i,\dots, I^{t}_i$ obtained by keeping only the nonempty sets. Since  the original sequence is a certification in $(\calS,\calN,\calB)$, then $(I^{s_1}_i,e_{s_1}),(I^{s_2}_i,e_{s_2}),\dots (I^{s_q}_i,e_{s_q})$ is also a certification in $(\calS_i,\calN_i,\calB_i)$. Then $\{e_1,\dots, e_t\}\cap E_i= \{e_{s_1},\dots, e_{s_q}\}\in \calI_i$. Since this is true for all $i$, we get that $\{e_1,\dots, e_t\}\in \calI$ and thus, property \ref{cert-c} holds.
\end{proof}

To construct the certificate sampler for $(E,\calI)$, we use a linear program. For each $i\in [\ell]$, let $P(\calI_i)=\{x\in \RR^{E_i}_+ \colon x(S)\leq r_i(S)\; \text{for all } S\subseteq E\}=\text{conv}(\chi^{I}\colon I\in \calI_i)$ be the independence polytope associated to matroid $(E_i,\calI_i)$, where $r_i$ is the rank function of that matroid. Given a weight profile $\bm{w}$ for $A$, consider the following linear program:

\begin{center}
\begin{minipage}{0.7\textwidth}
\begin{align}
\max \quad \sum_{a\in A}\sum_{e \in E} w_a(e) &x_{a}(e) \tag*{\mbox{Matchoid$(\bm{w})$}} \label{lp-matchoid} \\
\text{s.t.} \quad\quad  \sum_{a\in A}x_{a}(e)&= y(e) \quad\;\; \text{ for every } e\in E, \notag\\
y|_{E_i}&\in P(\calI_i)  \quad \text{ for every } i \in [\ell], \notag\\
 x_{a}(\bot)+\sum_{e \in E} x_{a}(e) &= 1 \quad\quad\;\;\; \text{ for every } a\in A,\notag\\
 x_{a}(e) &\geq 0 \quad\quad\;\;\; \text{ for every } a\in A,\text{ and every } e \in E\cup \{\bot\}.\notag
\end{align}
\end{minipage}
\end{center}

It is well known that we can solve the separation problem on each $P(\calI_i)$ in polynomial time having only oracle access to $\calI_i$ \cite{cunningham1984testing}. Then, we can also solve \ref{lp-matchoid} using the ellipsoid method \cite{grotschel2012geometric}. In fact, we can assume (by perturbing infinitesimally the weight functions using names of elements and/or weight functions) that this linear program has a unique solution.
Furthermore, for every $i\in [\ell]$ and every vector $z\in P(\calI_i)$ we can, in polynomial time, decompose $z$ as a convex combination of indicator vectors of independent sets in $\calI_i$, that is $z=\sum_{I\in \calI}\lambda_I \chi^{I}$, where $\lambda_I\geq 0$ for all $I$,  $\sum_{I\in \calI}\lambda_I=1$ and $\lambda_I$ is not zero for at most a polynomial (in $|E_i|$) number of sets. 
Consider now the following randomized algorithm.
\begin{algorithm}[H]
    \begin{algorithmic}[1]
    \Require{A weight profile $\bm{w}$ for $A$}
    \Ensure{A collection of certificates in $\calN$}
    \State Find the unique optimal solution $(x^*,y^*)$ of the linear program \ref{lp-matchoid}.
    \State For each $i\in [\ell]$, decompose $y^*|_{E_i}=\sum_{I\in \calI_i}\lambda^{(i)}_I \chi^{I}$ as a convex combination of indicator vectors of sets in $\calI_i$.
    \State For every $a\in A$, sample $e_a\in E\cup \{\bot\}$ according to the probability distribution $(x_a^*(e))_{e\in E\cup \{\bot\}}$. Then, for every $i\in [\ell]$ with $e_a\not\in E_i$, set $I^{a}_i=\emptyset$ and for every $i\in [\ell]$ with $e_a\in E_i$, sample $I^{a}_i$ from $\calI_i$ according to the probability distribution $\mu^{(i,e_a)}$ given by $$\mu^{(i,e_a)}_I=\frac{\lambda^{(i)}_I}{ \sum_{J\in \calI_i\colon e_a\in J} \lambda^{(i)}_J}\text{ for }I\in \calI_i\text{ with } e_a\in I.$$
    \State Return $(I_1^a,I_2^a,\dots, I_\ell^a;e_a)$ for each $a\in A$.
    \end{algorithmic}
    \caption{Certificate sampler for matchoids}
    \label{alg:sampler-matchoid}
\end{algorithm}

\begin{lemma}\label{lem:sampler-matchoid}
For each $e\in E$, let $k(e)=\sum_{i\in [\ell]\colon e\in E_i} k_i$, where $k_i$ is such that the matroid $(E_i,\calI_i)$ admits a $k_i$-directed certifier. And let $k=\max_{e\in E} k(e)$.
Algorithm \ref{alg:sampler-matchoid} is a $(1,k)$-certificate sampler for the matchoid $(E,\calI)$ formed from matroids $(E_i,\calI_i)_{i\in [\ell]}$, and it runs in polynomial time.
\end{lemma}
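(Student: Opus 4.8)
The plan is to verify the two defining conditions of a $(1,k)$-certificate sampler (Definition \ref{def:good-sampler}) for Algorithm \ref{alg:sampler-matchoid}, following closely the pattern of Lemma \ref{lem:sampler-matching}, and then to argue polynomial running time. First I would observe that, by construction, Algorithm \ref{alg:sampler-matchoid} always outputs an element of $\calN$: if $e_a\ne\bot$ then for each $i$ with $e_a\in E_i$ we sample $I^a_i\in\calI_i$ conditioned on $e_a\in I^a_i$ (this conditioning is well defined because $x^*_a(e_a)>0$ forces $y^*(e_a)>0$, hence some $\lambda^{(i)}_I$ with $e_a\in I$ is positive), and for $i$ with $e_a\notin E_i$ we set $I^a_i=\emptyset$; this is exactly the membership requirement for $\calN$. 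When $e_a=\bot$ the output is $(\bot,\dots,\bot;\bot)$, which we treat as the null certificate.

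For the approximation condition \ref{sampler-b}, the argument is essentially identical to the hypergraph case: since $\Pr[e_a=e]=x^*_a(e)$ and $w_a(\bot)=0$,
\[
\sum_{a\in A}\EE[w_a(e_a)]=\sum_{a\in A}\sum_{e\in E\cup\{\bot\}}w_a(e)x^*_a(e)\ge \sum_{a\in A}\sum_{e\in E}w_a(e)x^*_a(e)\ge \opt(\bm w),
\]
where the last inequality holds because \ref{lp-matchoid} is a valid relaxation of \eqref{eq:optimal-assignment} over the matchoid: any feasible assignment $M$ induces $x_a(e)=\mathbb 1[M(a)=e]$, $y(e)=\sum_a x_a(e)\in\{0,1\}$, and $y|_{E_i}$ is the indicator of the independent set $\{M(a):a\in A\}\cap E_i\in\calI_i$, hence lies in $P(\calI_i)$. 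So condition \ref{sampler-b} holds with $\gamma=1$.

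The blocking condition \ref{sampler-c} is the main point. Fix a certificate $(J_1,\dots,J_\ell;f)\in\calN$. By the definition of $\calB$ for the matchoid, $(I^a_1,\dots,I^a_\ell;e_a)$ blocks $(J_1,\dots,J_\ell;f)$ only if there exists some $i$ with $e_a\in E_i$, $f\in E_i$, and $((I^a_i,e_a),(J_i,f))\in\calB_i$; hence by a union bound
\[
\sum_{a\in A}\PP[(\calP_a(\bm w),(J_1,\dots,J_\ell;f))\in\calB]\le \sum_{i\,:\,f\in E_i}\;\sum_{a\in A}\PP\big[e_a\in E_i \text{ and } ((I^a_i,e_a),(J_i,f))\in\calB_i\big].
\]
For a fixed $i$ with $f\in E_i$, I would condition on the sampled element $e_a$: the inner probability is $\sum_{e\in E_i}x^*_a(e)\cdot \PP[((I^a_i,e),(J_i,f))\in\calB_i\mid e_a=e]$. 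Now, conditioned on $e_a=e\in E_i$, the set $I^a_i$ is distributed as $\mu^{(i,e)}$, i.e. it is a set $I\in\calI_i$ containing $e$ drawn with probability $\lambda^{(i)}_I/\sum_{J\ni e}\lambda^{(i)}_J$. Therefore
\[
\sum_{a\in A}\PP\big[e_a\in E_i,\ ((I^a_i,e_a),(J_i,f))\in\calB_i\big]
=\sum_{e\in E_i}\Big(\sum_{a\in A}x^*_a(e)\Big)\sum_{I\in\calI_i:\,e\in I}\frac{\lambda^{(i)}_I}{\sum_{J\ni e}\lambda^{(i)}_J}\,\mathbb 1\big[((I,e),(J_i,f))\in\calB_i\big].
\]
Here $\sum_{a\in A}x^*_a(e)=y^*(e)=\sum_{I\in\calI_i:\,e\in I}\lambda^{(i)}_I$ (the last equality from the decomposition of $y^*|_{E_i}$), which exactly cancels the denominator $\sum_{J\ni e}\lambda^{(i)}_J$. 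So the expression collapses to
\[
\sum_{e\in E_i}\sum_{I\in\calI_i:\,e\in I}\lambda^{(i)}_I\,\mathbb 1\big[((I,e),(J_i,f))\in\calB_i\big]
=\sum_{I\in\calI_i}\lambda^{(i)}_I\,\big|\{e\in I:((I,e),(J_i,f))\in\calB_i\}\big|\le \sum_{I\in\calI_i}\lambda^{(i)}_I\cdot k_i=k_i,
\]
using property \ref{cert-e} of the $k_i$-directed certifier $(\calI_i,\calN_i,\calB_i)$ (here we use that $(J_i,f)\in\calN_i$, which holds because $f\in E_i$ and $(J_1,\dots,J_\ell;f)\in\calN$) and $\sum_I\lambda^{(i)}_I=1$. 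Summing over the indices $i$ with $f\in E_i$ gives $\sum_{a}\PP[\cdots]\le\sum_{i:f\in E_i}k_i=k(f)\le k$, which is condition \ref{sampler-c}.

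Finally, for polynomial running time: the linear program \ref{lp-matchoid} has polynomially many variables and, although exponentially many constraints coming from $y|_{E_i}\in P(\calI_i)$, each such matroid polytope admits a polynomial-time separation oracle from an independence oracle \cite{cunningham1984testing}, so \ref{lp-matchoid} is solvable in polynomial time by the ellipsoid method \cite{grotschel2012geometric}; uniqueness of the optimum is ensured by the stated infinitesimal perturbation. Each restriction $y^*|_{E_i}\in P(\calI_i)$ can be decomposed into a convex combination of $\mathrm{poly}(|E_i|)$ independent-set indicators in polynomial time. Sampling $e_a$ from $(x^*_a(e))_e$ and, conditionally, $I^a_i$ from the (polynomially supported) distribution $\mu^{(i,e_a)}$ is then straightforward.

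I expect the one point requiring care — the ``main obstacle'' — to be the bookkeeping in the blocking bound: one must be careful that the conditional distribution of $I^a_i$ given $e_a=e$ is precisely $\mu^{(i,e)}$ (it does not depend on which other $I^a_{i'}$ were sampled, since these are drawn independently across $i$), and that the normalizing denominator $\sum_{J\ni e}\lambda^{(i)}_J$ cancels exactly against $y^*(e)=\sum_{a}x^*_a(e)$. Everything else parallels the hypergraph-matching argument of Lemma \ref{lem:sampler-matching}.
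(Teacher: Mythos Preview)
Your proposal is correct and follows essentially the same approach as the paper's proof: both verify the approximation condition via the LP relaxation, establish the blocking bound by a union bound over the matroids $i$ with $f\in E_i$, cancel $\sum_a x^*_a(e)=y^*(e)$ against the normalizer $\sum_{J\ni e}\lambda^{(i)}_J$, and then apply property \ref{cert-e} of the $k_i$-directed certifier together with $\sum_I\lambda^{(i)}_I=1$. Your write-up is in fact slightly more careful than the paper's in checking well-definedness of the conditional sampling and that $(J_i,f)\in\calN_i$, but the argument is the same.
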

\begin{proof}
By the discussion before this lemma, it is clear that the algorithm runs in polynomial time on a weight profile $\bm{w}$.
Observe that
   \begin{align*}
       \sum_{a\in A} \EE[w_a(e_a)]&=\sum_{a\in A}\sum_{e\in E\cup \{\bot\}} w_a(e)x^{\star}_a(e)\ge \sum_{a\in A}\sum_{e\in E} w_a(e)x^{\star}_a(e)\geq \opt(\bm{w}),
   \end{align*}
   where we have used that $w_a(\bot)=0$ for all $a$ and that the \ref{lp-matchoid} is a relaxation of the problem \eqref{eq:optimal-assignment}, so its value is at least $\opt(\bm{w})$.  
   Thus, condition \ref{sampler-b} holds with $\gamma=1$. 
   Now let $(J_1,J_2,\dots, J_\ell;f)\in \calN$ be any certificate. By a union bound,
   \begin{align}
    &\sum_{a\in A}\PP[((I_1^a,I_2^a,\dots,I_\ell^a;e_a),(J_1,J_2,\dots,J_\ell;f))\in \calB]\notag \\
    &= \sum_{a\in A}\PP[\text{Exists } i\in [\ell], e_a\in E_i, f\in E_i, ((I_i^a,e_a),(J_i,f))\in \calB_i]\notag  \\
    &\leq \sum_{a\in A}\; \sum_{i\in [\ell]\colon f\in E_i}\PP[e_a\in E_i, ((I_i^a,e_a),(J_i,f))\in \calB_i]\notag 
 \\
    &=\sum_{a\in A}\; \sum_{i\in [\ell]\colon f\in E_i} \; \sum_{e\in E_i} \sum_{I\in \calI_i} \PP[e_a=e \wedge I_i^a=I]\cdot \mathbb{1}[((I,e),(J_i,f))\in \mathcal{B}_i]\cdot \mathbb{1}[e\in I]. \label{eqn:largamatchoid}
    \end{align}
Recall that $y^*|_{E_i}=\sum_{I\in \calI_i}\lambda^{(i)}_I\chi^I$ and thus, for every $e\in E_i$, $y^*(e)=\sum_{I\in \calI_i\colon I\ni e}\lambda^{(i)}_I$. Furthermore, and every $I\in \calI_i$ such that $e\in I$, 
$\PP[e_a=e \wedge I_i^a=1]=x^*_a(e)\mu_I^{(i,e)}=x^*_a(e)\lambda^{(i)}_I/y^*(e).$ Using this and changing the order of summations, we have that  \eqref{eqn:largamatchoid} equals
\begin{align*}
    &\sum_{i\in [\ell]\colon f\in E_i} \sum_{I\in \calI_i}\sum_{e\in E}\frac{\lambda^{(i)}}{y^*(e)}\cdot \mathbb{1}[((I,e),(J_i,f))\in \mathcal{B}_i]\cdot \mathbb{1}[e\in I]\cdot \sum_{a\in A} x_a^*(e)\\
    &=\sum_{i\in [\ell]\colon f\in E_i} \sum_{I\in \calI_i}\lambda^{(i)} \sum_{e\in E}  \mathbb{1}[((I,e),(J_i,f))\in \mathcal{B}_i]\cdot\mathbb{1}[e\in I]\\
    &=\sum_{i\in [\ell]\colon f\in E_i} \sum_{I\in \calI_i}\lambda^{(i)}_I |\{e\in I\colon ((I,e),(J_i,f))\in \calB_i\}|\leq \sum_{i\in [\ell]\colon f\in E_i} \hspace{-.3cm}k_i = k(f) \leq k,
\end{align*}
where the first equality holds since $y^*(e)=\sum_{a\in A} x^*_a(e)$, and the last inequality holds since $\sum_{I\in \calI_i}\lambda_I=1$ and $(\calI_i,\calN_i,\calB_i)$ is a $k_i$ direct certifier. This finishes the proof. 
\end{proof}
\begin{proof}[Proof of Theorem \ref{thm:matroid-intersection}]
The proof is exactly the same as that of Theorem \ref{thm:dbca} but using the $(1,k)$-certificate sampler guaranteed by Lemma \ref{lem:sampler-matchoid}  instead.
\end{proof}

\section{Table of Known and New Results}\label{sec:table}

In Table \ref{big-table}, we provide a panorama of all the lower and upper bounds known for each online model, both for the single-minded version and the general version.
Our new results are in \textbf{bold} font. They appear in the format ($\textbf{new} \gets \text{old}$) if an upper bound is decreased of ($\text{old} \to \textbf{new}$) if a lower bound is increased. Going up or left in the table makes the problem \emph{easier} for the algorithm, so the quantities shown decrease. So any lower bound for a given model also holds for the models above and to the left. Similarly, any upper bound for a model also holds for models below and to the right.
For short, we refer to the $k$-bounded combinatorial auction problem as $k$BCA.

\begin{table}[ht!]
\hspace{-.9cm}
\footnotesize \begin{tabular}{@{}c|cc|cc@{}}
\toprule
\multirow{2}{*}{\begin{tabular}[c]{@{}c@{}}\textbf{Single-}\\\textbf{minded}\end{tabular}} & \multicolumn{2}{c|}{Random Order} & \multicolumn{2}{c}{Adversarial Order} \\
 &
  low &
  upp &
  low &
  upp \\ \midrule
iid &
  \multicolumn{2}{c|}{iid} &
  \multicolumn{2}{c}{} \\
(All $k$) &
  0.7451 ${ }^{\text{[r3]}}$ &
  0.7451 ${ }^{\text{[r1,r2]}}$ & 
   &
   \\ \midrule
\begin{tabular}[c]{@{}c@{}}known\\ dist.\end{tabular} &
  \multicolumn{2}{c|}{Prophet Secretary} &
  \multicolumn{2}{c}{Prophet} \\
($k=1$) &
  0.669 ${ }^{\text{[r21]}}$ &
  0.7254 ${ }^{\text{[r22]}}$ & 
  0.5 ${ }^{\text{[r17]}}$  &
  0.5 ${ }^{\text{[r17]}}$\\
($k=2$) &
  0.474 ${ }^{\text{[r4]}}$&
  $0.7254 { }^{\text{[r22]}}$& 
  0.344  ${ }^{\text{[r4]}}$&
$\frac37=0.42857 { }^{\text{[r5]}}$
 \\
($k\geq 3$) &
  $\frac{1}{k+1}$ ${ }^{\text{[r5]}}$ &
  $O\bigl(\frac{\log\log k}{\log k}\bigr)$ ${ }^{\text{[r6]}}$& 
  $\frac{1}{k+1}$ ${ }^{\text{[r5]}}$  &
  $O\bigl(\frac{\log\log k}{\log k}\bigr)$ ${ }^{\text{[r6]}}$ \\\midrule
s.sample &
  \multicolumn{2}{c|}{Single-sample Prophet Secretary} &
  \multicolumn{2}{c}{Single-sample Prophet} \\
 ($k=1$) &
  0.63518 ${ }^{\text{[r7]}}$  &
  $\ln 2=0.6931$ ${ }^{\text{[r9]}}$ & 
  0.5 ${ }^{\text{[r10]}}$&
  0.5 ${ }^{\text{[r17]}}$\\
  ($k=2$) &
  0.25 ${ }^{\text{[r8]}}\to \bm{0.333}$&
   $\ln 2=0.6931$  ${ }^{\text{[r9]}}$  & 
  0.08578 ${ }^{\text{[r15]}}$  &
$\frac{3}{7}=0.42857\ { }^{\text{[r5]}}$\\
  ($k\geq 3$) &
$\frac{1}{ek}$ ${ }^{\text{[r6]}}\to \bm{\frac{1}{k+1}}$ 
  & $O\bigl(\frac{\log\log k}{\log k}\bigr)$  ${ }^{\text{[r6]}}$  &
  $\Omega\bigl(\frac{1}{k^2}\bigr)$ ${ }^{\text{[r12]}}$  &
  $O\bigl(\frac{\log\log k}{\log k}\bigr)$  ${ }^{\text{[r6]}}$ \\ \midrule
\begin{tabular}[c]{@{}c@{}}Subset\\ Sampling\end{tabular} &
  \multicolumn{2}{c|}{Secretary} &
  \multicolumn{2}{c}{Order-Oblivious} \\
 ($k=1$) &
  $\frac{1}{e}=0.36787\ { }^{\text{[r13]}}$  &
  $\frac{1}{e}=0.36787\ { }^{\text{[r13]}}$ & 
  0.25 ${ }^{\text{[r14]}}$&
   0.25 ${ }^{\text{[r14]}}$ \\
  ($k=2$) &
  0.25 ${ }^{\text{[r8]}}$ &
  $\frac{1}{e}=0.36787\ { }^{\text{[r13]}}$ & 
  0.08578  ${ }^{\text{[r15]}}$ &
   0.25 ${ }^{\text{[r14]}}$  \\
  ($k\geq 3$) &
  $\frac{1}{ek}$ ${ }^{\text{[r6]}} \to \bm{k^{-\frac{k}{k-1}}}\ { }^{\text{[r8]}}$ &
  $O\bigl(\frac{\log\log k}{\log k}\bigr)$ ${ }^{\text{[r6]}}$ &
  $\Omega\bigl(\frac{1}{k^2}\bigr)$ ${ }^{\text{[r12]}}$  &
  $O\bigl(\frac{\log\log k}{\log k}\bigr)$  ${ }^{\text{[r6]}}$ \\ \bottomrule
 \toprule
\multirow{2}{*}{\begin{tabular}[c]{@{}c@{}}\textbf{General}\\\textbf{$k$BCA}\end{tabular}} & \multicolumn{2}{c|}{Random Order} & \multicolumn{2}{c}{Adversarial Order} \\
 &
  low &
  upp &
  low &
  upp \\ \midrule
iid &
  \multicolumn{2}{c|}{iid} &
  \multicolumn{2}{c}{} \\
($k=1$) &
   $1-\frac1e=0.6321$ ${ }^{\text{[r16]}}$ &
  0.703 ${ }^{\text{[r18]}}$ &
   &
   \\
 $(k=2)$ &
 $0.333$ ${ }^{\text{[r5]}}\to \bm{0.432=\frac{1-e^{-2}}{2}}$ &
 $0.703$ ${ }^{\text{[r18]}}$ & &   \\
   ($k\ge 3$) &
  $\frac{1}{k+1}$ ${ }^{\text{[r5]}}\to \bm{\frac{1-e^{-k}}{k}}$ & $\bm{O(\frac{\log k}{k})}$ & &\\
   \midrule
\begin{tabular}[c]{@{}c@{}}known\\ dist.\end{tabular} &
  \multicolumn{2}{c|}{Prophet Secretary} &
  \multicolumn{2}{c}{Prophet} \\
  ($k=1$) &
  $1-\frac1e=0.6321$ ${ }^{\text{[r20]}}$ &
  0.703 ${ }^{\text{[r18]}}$ & 
  0.5 ${ }^{\text{[r5]}}$  &
  0.5 ${ }^{\text{[r17]}}$\\
  ($k=2$) &
  0.333 ${ }^{\text{[r5]}}$& $0.703$ ${ }^{\text{[r18]}}$ &
  0.333  ${ }^{\text{[r5]}}$ &
$\frac{3}{7}=0.42857\ { }^{\text{[r5]}}$
 \\
  ($k\geq 3$) &
  $\frac{1}{k+1}$ ${ }^{\text{[r5]}}$ &
  $\bm{O\bigl(\frac{\log k}{k}\bigr)}\gets O\bigl(\frac{\log\log k}{\log k}\bigr)$ ${ }^{\text{[r6]}}$&
  $\frac{1}{k+1}$ ${ }^{\text{[r5]}}$  &
 $\bm{O\bigl(\frac{\log k}{k}\bigr)}\gets O\bigl(\frac{\log\log k}{\log k}\bigr)$ ${ }^{\text{[r6]}}$ \\ \midrule
s.sample &
  \multicolumn{2}{c|}{Single-sample Prophet Secretary} &
  \multicolumn{2}{c}{Single-sample Prophet} \\
 ($k=1$) &
  $\frac{1}{e}=0.36787\  { }^{\text{[r6]}} \to \bm{0.5}$  &
  $\ln 2=0.6931 { }^{\text{[r9]}}$ & 
  0.1715 ${ }^{\text{[r15]}}$&
  0.5 ${ }^{\text{[r17]}}$\\
  ($k=2$) &
  $\frac{1}{2e}=0.183939\ { }^{\text{[r6]}} \to \bm{0.333}$& 
$\ln2 = 0.6931 { }^{\text{[r9]}}$   &
$0.01388$  ${ }^{\text{[r12]}}$  &
$\frac{3}{7}=0.42857\ { }^{\text{[r5]}}$\\
  ($k\geq 3$) &
  $\frac{1}{ek}\ { }^{\text{[r6]}}\to \pmb{\frac{1}{k+1}}$  
  & $\bm{O\bigl(\frac{\log k}{k}\bigr)} \gets O\bigl(\frac{\log\log k}{\log k}\bigr)$ ${ }^{\text{[r6]}}$  & 
  $\Omega\bigl(\frac{1}{k^2}\bigr)$ ${ }^{\text{[r12]}}$  &
 $\bm{O\bigl(\frac{\log k}{k}\bigr)}\gets O\bigl(\frac{\log\log k}{\log k}\bigr)$ ${ }^{\text{[r6]}}$ \\ \midrule
\begin{tabular}[c]{@{}c@{}}Subset\\ Sampling\end{tabular} &
  \multicolumn{2}{c|}{Secretary} &
  \multicolumn{2}{c}{Order-Oblivious} \\
 ($k=1$) &
  $\frac{1}{e}=0.36787\ { }^{\text{[r6]}}$  &
  $\frac{1}{e}=0.36787\ { }^{\text{[r13]}}$ & 
  0.1715 ${ }^{\text{[r15]}}$&
   0.25 ${ }^{\text{[r14]}}$ \\
  ($k=2$) &
$\frac{1}{2e}=0.183939\ { }^{\text{[r6]}}\to \bm{0.25}$ &
  $\frac{1}{e}=0.36787\ { }^{\text{[r13]}}$ & 
  0.01388  ${ }^{\text{[r12]}}$ &
   0.25 ${ }^{\text{[r14]}}$  \\
  ($k\geq 3$) & 
  $\frac{1}{ek}$ ${ }^{\text{[r6]}} \to \bm{k^{-\frac{k}{k-1}}}$ & 
  $\bm{O\bigl(\frac{\log k}{k}\bigr)} \gets O\bigl(\frac{\log\log k}{\log k}\bigr)$ ${ }^{\text{[r6]}}$ & 
  $\Omega\bigl(\frac{1}{k^2}\bigr)$ ${ }^{\text{[r12]}}$  &
 $\bm{O\bigl(\frac{\log k}{k}\bigr)}\gets O\bigl(\frac{\log\log k}{\log k}\bigr)$ ${ }^{\text{[r6]}}$ \\ \bottomrule
    \hline \multicolumn{5}{c}{
\footnotesize\begin{tabular}[c]{@{}l@{}}
\textbf{Table references:}
 r1: Hill and Kertz \cite{Hill1982};\  r2: Kertz \cite{Kertz1986};\ r3: Correa et al. \cite{Correa2017PostedCustomers}; r4 MacRury et al. \cite{MacRury2023};\\
r5: Correa et al. \cite{Correa2022CFPW};
r6: Kesselheim et al. \cite{KesselheimRTV13}; r7: Correa et al. \cite{CorreaCES2020}; r8: Ezra et al. \cite{Ezra2020};\\ r9: Correa et al. \cite{Correa2019ProphetDistribution}; 
r10: Rubinstein et al.\cite{Rubinstein2020OptimalSamples}; r11: Correa et al.\cite{Correa2022CFPW};\\r12: Korula and Pál\cite{KP2019};  r13: Dynkin \cite{Dynkin1963}; r14: folklore; r15: Kaplan et al. \cite{Kaplan2022OnlineSample};\\
r16: Brubach et al. \cite{Brubach2016NewMatching};  r17: Krengel and Sucheston \cite{Krengel1987ProphetProcesses};  r18: Huang et al. \cite{HuangSY2022};\\
r19: Ezra et al. \cite{Ezra2020OnlineModels}; r20: Ehsani et al. \cite{Ehsani2018}; r21: Correa et al. \cite{CorreaSZ2021}; r22: Bubna and Chiplunkar \cite{bubna2023prophet}
\end{tabular} 
}\\ \bottomrule
\bottomrule
\end{tabular}
\caption{Current and new results for all models; our new  results are in \textbf{bold} font. }
\label{big-table}
\end{table}

\bibliographystyle{acm}
{\footnotesize	\bibliography{references}}

\appendix

\section{Missing Proofs from Section \ref{sec:algorithms}}\label{app:algorithms}

\begin{proof}[Proof of Claim \ref{claim:key-inequality-sample}]
For every $a\in A$, let $\beta_a(\bm{r})$ be the value $r_{a}(e)\cdot\PP\left[\calP_{a}(\bm{r})=(S,e)\right]$.
Observe that
\begin{align}
&m!\;\EE_{\mu}\left[r_{\mu(t)}(e)\cdot\PP\left[\calP_{\mu(t)}(\bm{r})=(S,e)\right]\cdot \prod_{i=j}^{t-1} \phi_i(\mu)\right]\notag\\
&=\sum_{a\in A}\beta_a(\bm{r})\sum_{\substack{\mu \in \bijec([m],A):\\ \mu(t)=a}}\PP[\calE_{\mu_j}(\bm{r}(\mu_j\cup A\setminus \mu_t)\cup \bm{s}(\mu_t\setminus \mu_j))]\cdot \prod_{i=j+1}^{t-1} \phi_i(\mu)\notag\\
&=\sum_{a\in A}\beta_a(\bm{r})\sum_{\substack{B\subseteq A:\\a\in B,\\|B|=t-j}}\sum_{\substack{\mu \in \bijec([m],A):\\ \mu(t)=a,\\ \mu_t\setminus \mu_j=B}}\PP[\calE_{\mu_j}(\bm{r}(A\setminus B)\cup \bm{s}(B))]\cdot \prod_{i=j+1}^{t-1} \phi_i(\mu)\notag\\
&=\sum_{a\in A}\beta_a(\bm{r})\sum_{\substack{B\subseteq A:\\a\in B,\\|B|=t-j}}\sum_{b\in A\setminus B}\PP[\calE_{b}(\bm{r}(A\setminus B)\cup \bm{s}(B))]\hspace{-.3cm}\sum_{\substack{\mu \in \bijec([m],A):\\ \mu(t)=a,\mu(j)=b,\\ \mu_t\setminus \mu_j=B}} \prod_{i=j+1}^{t-1} \phi_i(\mu),\label{sample:partitioning1}
\end{align}
where the first equality holds by partitioning the orders according to $\mu(t)=a\in A$, and expanding the definition of $\phi_j(\mu)$,
the second equality holds by partitioning the orders according to $\mu_t\setminus \mu_j=B$ (i.e., the set of agents presented between $j+1$ and $t$ is exactly the set $B$), and in the last equality, we partition the orders according to the value $\mu(j)=b\in A\setminus B$.

Let us prove that the inner sum above is independent of the agent $b\in A\setminus B$ considered. To see that, let $\mu\in \bijec([m],A)$ be any permutation with $\mu(t)=a$, $\mu(j)=b$ and $\mu_t\setminus \mu_j=B$. And let $c\in A\setminus B$ be any agent (not necessarily distinct from $b$). Let $j'\in \{1,\dots, j\}\cup \{t+1,\dots, m\}$ be the index such that $\mu(j')=c$. Consider the permutation $\nu$ that exchanges the preimages of $b$ and $c$, i.e. $\nu(\ell)=\mu(\ell)$ for all $\ell\not\in \{j,j'\}$, $\nu(j)=\mu(j')=c$, $\nu(j')=\mu(j)=b$. Observe that for all $i\in \{j+1,\dots, t-1\}$ $\nu(i)=\mu(i)$, $\nu_t\setminus \nu_i=\mu_t\setminus \mu_i$ and $\nu_i \cup A\setminus \nu_t = \mu_i \cup A\setminus \mu_t$, and thus, $\phi_i(\nu)=\phi_i(\mu)$. We deduce that
\vspace{-.5cm}

\begin{equation}
\sum_{\substack{\mu \in \bijec([m],A):\\ \mu(t)=a,\\ \mu_t\setminus \mu_j=B}} \prod_{i=j+1}^{t-1} \phi_i(\mu)=(m-t+j)\sum_{\substack{\mu \in \bijec([m],A):\\ \mu(t)=a,\mu(j)=b,\\ \mu_t\setminus \mu_j=B}} \prod_{i=j+1}^{t-1} \phi_i(\mu),\label{sample:mixing-tau}
\end{equation}
where the factor $m-t+j$ comes from the fact that $|A\setminus B|=m-t+j$.
Therefore, using \eqref{sample:mixing-tau}, the expression in \eqref{sample:partitioning1} is equal to
\begin{align}
&\sum_{a\in A}\beta_a(\bm{r})\sum_{\substack{B\subseteq A:\\a\in B,\\|B|=t-j}}\sum_{b\in A\setminus B}\frac{\PP[\calE_{b}(\bm{r}(A\setminus B)\cup \bm{s}(B))]}{m-t+j}\hspace{-.3cm}\sum_{\substack{\mu \in \bijec([m],A):\\ \mu(t)=a,\\ \mu_t\setminus \mu_j=B}} \prod_{i=j+1}^{t-1} \phi_i(\mu)\notag\\
&=\sum_{a\in A}\beta_a(\bm{r})\sum_{\substack{B\subseteq A:\\a\in B,\\|B|=t-j}}\sum_{\substack{\mu \in \bijec([m],A):\\ \mu(t)=a,\\ \mu_t\setminus \mu_j=B}} \prod_{i=j+1}^{t-1} \phi_i(\mu)\sum_{b\in A\setminus B}\frac{\PP[\calE_{b}(\bm{r}(A\setminus B)\cup \bm{s}(B))]}{m-t+j}\notag\\
&\ge \left(1-\frac{k}{m-t+j}\right)_{\!\!+}\sum_{a\in A}\beta_a(\bm{r})\sum_{\substack{B\subseteq A:\\a\in B,\\|B|=t-j}}\sum_{\substack{\mu \in \bijec([m],A):\\ \mu(t)=a,\\ \mu_t\setminus \mu_j=B}} \prod_{i=j+1}^{t-1} \phi_i(\mu)\notag\\
&= \left(1-\frac{k}{m-t+j}\right)_{\!\!+}m!\;\EE_{\mu}\left[r_{\mu(t)}(e)\cdot\PP\left[\calP_{\mu(t)}(\bm{r})=(S,e)\right]\cdot \prod_{i=j+1}^{t-1} \phi_i(\mu)\right],\label{sample:partitioning4}
\end{align}
where the first equality holds since by \eqref{sample:mixing-tau} the right-most summation does not depend on $b$, and the inequality holds since by the blocking property \ref{sampler-c} for the weight profile $\bm{r}(A\setminus B)\cup \bm{s}(B)$ we have
\begin{align*}
\sum_{b\in A\setminus B}\PP[\calE_{b}(\bm{r}(A\setminus B)\cup \bm{s}(B))]&=\sum_{b\in A\setminus B}(1-\PP[\calP_b(\bm{r}(A\setminus B)\cup \bm{s}(B))\text{ blocks }(S,e)])\\
&=m-t+j-\sum_{b\in A\setminus B}\PP[\calP_b(\bm{r}(A\setminus B)\cup \bm{s}(B))\text{ blocks }(S,e)]\\
&\ge (m-t+j-k)_+.
\end{align*}
The second equality in \eqref{sample:partitioning4} holds by undoing the partitioning according to $\mu_t\setminus \mu_j$.
This finishes the proof of the claim.
\end{proof}

\begin{proof}[Proof of Claim \ref{claim:falling}]
First observe that for $t\in [m]$,
\begin{align}
\prod_{i=1}^{t-1}\left(1-\frac{k}{m-t+i}\right)_+=\prod_{\ell=m-t+1}^{m-1}\left(\frac{\ell -k}{\ell}\right)_+= \frac{\binom{m-k-1}{t-1}}{\binom{m-1}{t-1}} =\begin{cases}
1 &\quad \text{if $t=1$}\\
0 &\quad \text{if $t\geq 2, m\leq k$}\\
\frac{\binom{m-t}{k}}{\binom{m-1}{k}} &\quad \text{if $t\geq 2, m\geq k+1$}
\end{cases}
\label{eq:claim2}
\end{align}
where we use the standard convention that $\binom{a}{0}=1$ for all $a\in \ZZ$ and $\binom{a}{b}=0$ for all $a<0$ and $b\geq 0$.

Then, if $m\geq k+1$, using the Hockey-stick identity for binomial coefficients,
\begin{align*}
\frac{1}{m}\sum_{t=1}^m\; \prod_{i=1}^{t-1}\left(1-\frac{k}{m-t+i}\right)_+&=\frac{1}{m\binom{m-1}{k}}\sum_{t=1}^m \binom{m-t}{k} = \frac{1}{m\binom{m-1}{k}} \binom{m}{k+1}=\frac{1}{k+1}.
\end{align*}
And if $m\leq k$, $\frac{1}{m}\sum_{t=1}^m\; \prod_{i=1}^{t-1}\left(1-\frac{k}{m-t+i}\right)_+=\frac{1}{m}\geq \frac{1}{k+1}.$
\end{proof}

\begin{proof}[Proof of Claim \ref{claim:key-inequality}]
Observe that for every $j\in \{\tau+1,\ldots,t-1\}$, we have
\begin{align}
\sum_{\mu \in U(a,t,T)}\prod_{i=j}^{t-1}\PP[\calC_i(\mu)]
&=\sum_{\substack{T'\subseteq T:\\|T'|=j}}\;\sum_{b\in T'}\sum_{\substack{\mu \in U(a,t,T):\\\ \mu_{j}=T'\\ \mu(j)=b}}\PP\left[\calP_{\mu(j)}(\bm{w}(T')) \text{ does not block } (S,e)\right]\cdot \prod_{i=j+1}^{t-1}\PP[\calC_i(\mu)]\notag\\
&=\sum_{\substack{T'\subseteq T:\\|T'|=j}}\sum_{b\in T'}\PP\left[\calP_{b}(\bm{w}(T')) \text{ does not block } (S,e)\right] \sum_{\substack{\mu \in U(a,t,T):\\\ \mu_{j}=T'\\ \mu(j)=b}}\prod_{i=j+1}^{t-1}\PP[\calC_i(\mu)],\label{sec:LB1}
\end{align}
where the first equality holds by partitioning the space according to the set $T'\subseteq T$ such that $\mu_{j}=T'$ and the value $b\in T'$ that can be taken by $\mu(j)$; and the last holds by rearranging the summation terms.
We claim that the value of the inner summation is independent of the choice of $b\in T'$. To see this, let $\mu \in U(a,t,T)$  with $\mu_j=T'$ and $\mu(j)=b$. Let $c\in T'$ be any agent (not necessarily distinct from $b$) and let $j'\in \{1,\dots, t\}$ be the index such that $\mu(j')=c$ 
Consider the permutation $\nu\in U(a,t,T)$ with $\nu_j=T'$ that exchanges the preimages of $b$ and $c$, that is $\nu(\ell)=\mu(\ell)$ for all $\ell\not\in \{j,j'\}$, $\nu(j)=\mu(j')=c$, $\nu(j')=\mu(j)=b$. Observe that for each $i\in \{j+1,\dots, t-1\}$, $\nu_i=\mu_i$ and $\nu(i)=\mu(i)$, and thus, the events $\calC_i(\mu)$ and $\calC_i(\nu)$ are the same. We deduce that 

\begin{equation}
\sum_{\substack{\mu \in U(a,t,T):\\\ \mu_{j}=T'}} 
\prod_{i=j+1}^{t-1}\PP[\calC_i(\mu)]= j\cdot \sum_{\substack{\mu \in U(a,t,T):\\\ \mu_{j}=T'\\ \mu(j)=b}}\prod_{i=j+1}^{t-1}\PP[\calC_i(\mu)].\label{sec:mixing-tau}
\end{equation}
Then, using \eqref{sec:mixing-tau}  we can lower-bound \eqref{sec:LB1} by
\begin{align}
&\sum_{\substack{T'\subseteq T:\\|T'|=j}}\sum_{b\in T'}\PP\left[\calP_{b}(\bm{w}(T')) \text{ does not block } (S,e)\right]\cdot \frac{1}{j}\sum_{\substack{\mu \in U(a,t,T):\\\ \mu_{j}=T'}}\prod_{i=j+1}^{t-1}\PP[\calC_i(\mu)]\notag\\
&=\sum_{\substack{T'\subseteq T:\\|T'|=j}}\sum_{\substack{\mu \in U(a,t,T):\\\ \mu_{j}=T'}}\left(\prod_{i=j+1}^{t-1}\PP[\calC_i(\mu)]\right)\cdot \Big(1-\frac{1}{j}\sum_{b\in T'}\PP\left[\calP_{b}(\bm{w}(T')) \text{ blocks } (S,e)\right]\Big)\notag\\
&\ge \left(1-\frac{k}{j}\right)_{\!+}\;\sum_{\substack{T'\subseteq T:\\|T'|=j}}\sum_{\substack{\mu \in U(a,t,T):\\\ \mu_{j}=T'}}\prod_{i=j+1}^{t-1}\PP[\calC_i(\mu)]= \left(1-\frac{k}{j}\right)_{\!+}\;\sum_{\mu \in U(a,t,T)}\prod_{i=j+1}^{t-1}\PP[\calC_i(\mu)],
\end{align}
where the first equality follows since $|T'|=j$ and by taking the probability of the complementary events for each $b\in T'$; the inequality holds by using the blocking property \ref{sampler-b} of the certificate sampler when applied with the weight profile $\bm{w}(T')$ for the set of agents $T'$, and the second equality follows by undoing the space partitioning.
This finishes the proof of the claim.
\end{proof}

\end{document}